\documentclass[12pt,reqno]{amsart}
\usepackage{amsthm}
\usepackage{amsmath}
\usepackage{amsfonts}
\usepackage{amssymb}
\usepackage{bbm,dsfont}
\usepackage{graphicx}
\usepackage{subcaption}
\usepackage{mathtools}
\usepackage{mathbbol}
\usepackage{hyperref}
\usepackage{enumerate}
\usepackage{multirow}
\usepackage{comment}
\usepackage{cleveref}
\usepackage[all]{xy}
\usepackage{tikz}
\usepackage{array}
\usepackage{makecell}
\usepackage[table]{xcolor}
\usepackage{appendix}
\usepackage[a4paper,margin=2.8cm]{geometry}

\newtheorem{thm}{Theorem}[section]
\newtheorem{cor}[thm]{Corollary}
\newtheorem{con}[thm]{Conjecture}
\newtheorem{lem}[thm]{Lemma}
\Crefname{lem}{Lemma}{lemmas}
\newtheorem{prop}[thm]{Proposition}
\theoremstyle{definition}
\newtheorem{defn}[thm]{Definition}
\newtheorem{rem}[thm]{Remark}
\newtheorem{eg}[thm]{Example}

\newcommand{\tr}{\text{tr}}
\newcommand{\ket}[1]{| #1 \rangle}
\newcommand{\bra}[1]{\langle #1|}
\newcommand{\ketbra}[1]{| #1 \rangle \langle #1|}
\newcommand{\ip}[2]{\langle #1|#2 \rangle}
\newcommand{\bea}{\begin{eqnarray}}
\newcommand{\eea}{\end{eqnarray}}
\newcommand{\id}{\mathrm{I}}
\newcommand{\real}{\mathds R}
\newcommand{\complex}{\mathds C}
\newcommand{\naturale}{\mathds N}
\newcommand{\one}{\mathds{1}} 
\newcommand{\Pb}{P_b}

\newcommand{\RomanNumeralCaps}[1]
    {\MakeUppercase{\romannumeral #1}}
\newcommand{\euler}{e}
\newcommand{\plusmod}{\oplus_m}
\newcommand{\polyone}{\mathds{1}_{\mathcal{P}(m)}}
\newcommand{\polyspace}{\mathcal{P}(m)}
\DeclareMathOperator{\complexi}{i}
\numberwithin{equation}{section}

\makeatletter
\patchcmd{\@maketitle}
  {\ifx\@empty\@dedicatory}
  {\par
   {\centering
   \small Faculty of Information Technology, University of Jyv\"askyl\"a, Finland\par}
   \ifx\@empty\@dedicatory}
  {}{}
\makeatother

\title[Nonclassical traits in multi-copy state discrimination]{Nonclassical traits \\ in multi-copy state discrimination}
\author[Achenbach, Lepp\"aj\"arvi, Lee, and Heinosaari]{Tim Achenbach, Leevi Lepp\"aj\"arvi, Hanwool Lee, and Teiko Heinosaari}
\address{Faculty of Information Technology, University of Jyv\"askyl\"a, Finland}

\begin{document}

\maketitle
\makeatletter \let\@setaddresses\relax \makeatother

\begin{abstract}
    Quantum state discrimination is a fundamental information processing task that serves as a key component in many applications while also carrying foundational significance.
    In this work, we consider minimum error discrimination of multi-copy states, where instead of preparing a single system we assume that multiple instances of the same state are prepared. Now the discrimination allows for measurements from multiple parties with different measurement strategies varying from global measurement strategy to ones restricted to different forms of local operations and classical communication strategies. By comparing the average success probabilities in quantum and classical cases, we find a qubit strategy that outperforms all the bit strategies. On the other hand, we show that the classical measurement strategy does not give benefit in qubit over bit.
    However, we find that there are other (qu)bit-like operational theories which can outperform the best qubit strategies even with a classical measurement strategy and we are able to identify instances of different theories where different measurement strategies are optimal. In this way, we are able to find instances of nonlocality without entanglement as well as provide general bounds for bit-like operational theories.
\end{abstract}

{\small
\tableofcontents
}

\newpage

\section{Introduction}

Quantum state discrimination \cite{Chefles01112000,barnett2008quantumstatediscrimination} is a central task in quantum information theory, finding motivation and application in many areas of the field \cite{Bae_2015}.
To be concise, the task in state discrimination is to identify a state drawn from an known ensemble with some probability. The goal is to achieve the maximum probability of success via correctly identifying (discriminating) the state, i.e., finding the label that is associated with it. 
In this way, information can be encoded and decoded using states.
This is a non-trivial task in general, since quantum states cannot be perfectly distinguished unless they are orthogonal.
State discrimination is interesting also from a foundational perspective as the performance and limits of state discrimination are shaped by the underlying theory. 
This means that one can try to single out quantum theory just by characterizing how good or bad states can be discriminated \cite{arai2024derivationstandardquantumtheory}.

Traditionally, state discrimination has been studied in a setting where only one system is sent to the receiver, who is then tasked to discriminate the state.
It would make sense, though, that chances of success are increased if multiple instances (copies) of the state in question are prepared and sent. This version of state discrimination is known as \textit{multi-copy state discrimination}.
The idea of providing multiple copies of a state to fulfill a task is not new and has been studied in different scenarios \cite{Acin01,roy2026quantumstateexclusioncopies} as well as in state discrimination \cite{Audenaert07,Higgins_2011}.
The question is not only about how many copies are sent. More importantly, one should ask: What are we going to do with them? Multiple copies pave the way to the complex field of measurement strategies.
It is natural to consider scenarios in which each copy is sent to a different party. What, then, is the best they can do under a given set of constraints?
In broad terms, their strategies can be divided into two classes. Spacelike separated parties measure their states individually and communicate with the others until someone makes the final guess. This is known under the terminology \textit{local operations and classical communication} (LOCC).
Or, they could stick together and perform one global, one-shot operation on the whole composite system. Hence, this is called \textit{global} operation.
We will discuss certain classes of strategies later. A more comprehensive study was done in \cite{Chitambar_2014}. The hope is that in the latter case, one could make use of correlations in the composite system and measurement setting. However, it is more difficult to implement global operations in practice as they require quantum memory and joint quantum operations.
It is thus interesting to find gaps between classes of strategies, which allow for evaluating tradeoffs, but also have foundational implications. This is because in classical theory, there is no difference between local and global operations as there is only one choice of measurement.

In this work, we investigate minimum‑error discrimination of multi-copy states with the following focus: What are the conditions to outperform classical theory in multi-copy state discrimination and what are the conditions for perfect discrimination in a given setting?
The first step involves finding bounds to classical theory, which will be the quantities to which we compare other theories. The first one will be quantum theory, which has to compete against classical theory in a bit vs. qubit comparison.
We extend our investigations to other operational theories, specifically looking into local strategies motivated by the question: Is it the local state space or the structure of the global measurements that allow for advantage over using classical system for the task?
Our results may later be used as benchmarks, for instance, by testing if a quantum protocol exceeds the classical bound.

\section{Summary of main results}
\noindent
\textbf{Setting} It is well-known that in the single copy case $k=1$ there is no quantum over classical advantage. In antidiscrimination \cite{roy2026quantumstateexclusioncopies} and channel discrimination \cite{Acin01} super-activation effects through multiple copies were found. Contrary to those, however,  no finite number of copies can achieve perfect discrimination, whereas infinitely many copies yield perfect knowledge of the state and hence perfect discrimination. It makes sense nevertheless, that multiple copies should increase the success probability and may yield a qubit over bit advantage. Moreover, it allows the comparison of different classes of strategies.

Suppose there is a set of $n$ 2-dimensional pure states $\{\varrho_i\}_{i=1}^n$, which is known to Alice and Bob. Alice randomly prepares $k$ instances of a state $\varrho_i$ and sends them to Bob, who is then asked to figure out the label $i$. Instead of just one party Bob, it is natural to think of Alice to send one state to $k$ parties respectively.
Now, they have to identify the state together by either following local operations and classical communication (LOCC) or collective (global) strategies.
They succeed if their final guess $j$ equals $i$. The goal is to maximize the average probability of success
\begin{equation}
    P(n,k)=\frac{1}{n} \sum_{i=1}^n \tr[\varrho_i^{\otimes k} M_i] \, ,
\end{equation}
where $M=\{M_i\}_{i=1}^n$ is a measurement that depends on the strategy pursued. This quantity will be our figure of merit. We also refer to the $k$ preparations of a state as copies. This task, as described, is named multi-copy state discrimination.

Among other things, we have the following three main results.

\noindent
\textbf{Result A: Quantum advantage}
We show that the classical (bit) success probability $P_b(n,k)$ is upper bounded by a function $\frac{g(k)}{n}$ and we give a quantum strategy of the form $\frac{f(k)}{n}$.
Even though we do not know the optimal qubit strategy $P_q(n,k)$ we get the following hierarchy of success probabilities:
\begin{equation}
    P_b(n.k) \leq \frac{g(k)}{n} < \frac{f(k)}{n} \leq P_q(n,k)
\end{equation}
Thus, showing a quantum advantage for any number of copies $1 < k < n$. We note that a quantum advantage was simultaneously found in \cite{Quintino_2026}.

\noindent
\textbf{Result B: Vanishing quantum advantage}
The above quantum advantage can be obtained using geometrically uniform states and the square root measurement, that is, a global measurement. Is the application of a global strategy the reason for the quantum advantage as there is no difference between LOCC and collective strategies in classical theory?
If we further restrict the strategy to be locally fixed (LF), i.e.\ every party performs the same measurement independently, then quantum theory is just as good as classical theory.
If one can successfully en- and decode maximally one bit in a state of some operational theory, we call that theory bit-like.
This idea is captured by the notion of information storability \cite{Heinosaari_2024}.
We show that all bit-like operational theories (ObT), e.g.\ quantum and classical theory, can achieve success probability at most $\frac{8}{9}$ in this setting given $k=2$ copies and $n=3$ states. By providing an explicit example it is shown that this bound is tight and it holds
\begin{equation}
    P_b^{LF}(3,2) = P_q^{LF}(3,2) = \frac{5}{6} <P_{ObT}^{LF}(3,2) \leq \frac{8}{9} < 1 \, .
\end{equation}
As a guiding example, we compare the performance of classical and quantum theory in the task of discriminating two copies of a trine state for various strategies. We furthermore derive necessary and sufficient conditions for perfect discrimination of $k=2$ copies and $n=3$ states in the LF setting for any operational theory.

\noindent
\textbf{Result C: Nonlocality without entanglement in Polygon theories}
In search for the root of advantages for multi-copy state discrimination, we consider Polygon theories.
Although a pure toy-theory, a Polygon is interesting because the information storability $\lambda_{max}$ of it is known or can be calculated. That is, all even polygons are bit-like and odd polygons can provide super-information storability, i.e. $\lambda_{max}$ surpasses the  operational dimension $d=2$ \cite{Heinosaari_2024}.
A Polygon theory is an operational theory, where the pure states $\{s_i\}_{i=1}^m$ are the vertices of a regular $m$-Polygon, where $m \geq 4$ is to be considered. A pure effect $e_i$ can be understood as projecting onto an edge of the polygon and thus, in a sense, 'picking a side'.

Here we are focusing on perfect discrimination of $k=2$ copies randomly drawn from an ensemble of $n=3$ states.
It turns out that Square and Hexagon are the only Polygons that allow for perfect discrimination using LOCC strategies. The key observation is that they contain at least 3 states that are pairwise perfectly distinguishable.
All other polygons lack such a 3-tuple, where every pair of states is pairwise distinguishable.
Interestingly, by solving optimization programs we find for polygons up to $m=15$ that separable strategies sometimes suffice for perfect discrimination in this setting, where there seems to be no clear pattern. For example, no LOCC protocol with one round of communication can lead to perfect discrimination in a Pentagon.
Though, here we found a perfect separable strategy, and since no entanglement is involved, this is an instance of nonlocality without entanglement.

\section{Multi-copy state discrimination}

Throughout the paper, we consider an ensemble of $n$ states, $S=\{\varrho_i\}_{i=1}^n$, described by density matrices, i.e., positive operators with trace equal to 1, that are prepared with equal prior probabilities, $\frac{1}{n}$. 
In the task of minimum-error discrimination \cite{barnett2008quantumstatediscrimination}, the aim is to minimize the average error probability, which is the same as maximizing the average success probability. 
The average success probability of discriminating ensemble $S$ by a measurement $M$ is given by
\begin{equation} \label{eq:average-succ-prob}
    P=\frac{1}{n}\sum_{i=1}^n \tr[\varrho_i M_i]\, ,
\end{equation}
where $M$ denotes a measurement described by a positive operator-valued measure (POVM), that is, a collection of positive operators $M_i \geq 0$, called effects, which sum up to the unit effect, $\sum_{i=1}^n M_i = \one$. \Cref{fig:multi-copy-task} illustrates the task of state discrimination. The upper bound for the success probability can be found in a straightforward way. Let $d$ denote the dimension of the Hilbert space. Then, 
\bea
P=\frac{1}{n}\sum_{i=1}^n \tr[\varrho_i M_i] \leq  \frac{1}{n}\sum_{i=1}^n \tr[\one_d \, M_i]=\frac{d}{n} \,,
\eea
which follows from $\varrho_i \leq \one_d$ for all $i \in [n] := \{1, \ldots,n\}$. Since this holds for any choice of ensemble and measurement, it gives the upper bound for the optimal success probability $P(n)$ of discriminating $n$ states, also known as the \textit{basic decoding theorem} \cite{Schumacher_Westmoreland_2010}.

In this work, we consider a slight modification of the state discrimination task. 
Suppose that we are provided more, say $k \in \naturale$ number of copies of the states. When $k$ number of identical copies of $\varrho_i$ are prepared, the ensemble is described as $S=\{\varrho_i^{\otimes k}\}_{i=1}^n$.
Now, the average success probability reads:
\begin{equation} \label{eq:average-multi-succ-rpob}
    P=\frac{1}{n} \sum_{i=1}^n \tr[\varrho_i^{\otimes k} \Tilde M_i] \, .
\end{equation}
The structure of the effects $\Tilde M_i$ depends on the measurement strategy that we use. \Cref{eq:average-multi-succ-rpob} will be our figure of merit and the quantity we use to compare the performance of different strategies and theories. If we optimize \Cref{eq:average-multi-succ-rpob} over all sets of $n$ states and all $n$-outcome global measurements we get the optimal average success probability $P(n,k)$ of discriminating $k$ copies of $n$ states.

\begin{figure}[!t]
    \centering
    \includegraphics[width=0.7\linewidth]{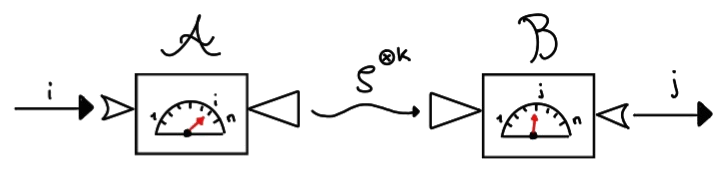}
    \caption{Alice is randomly asked to prepare a state with label $i$. She sends $k$ instances of that state to Bob, to whom the label is unknown. He tries to figure out the label, measuring the states and finally outputting his guess $j$. If $j=i$, they succeed. The measurement strategies can vary from Bob making a collective global measurement on all copies to making individual local measurements on the different copies.}
    \label{fig:multi-copy-task}
\end{figure}

Intuitively, it makes sense that the chances of successfully discriminating ensemble increase with growing $k$. 
The first question is that can we achieve perfect discrimination by simply adding copies? Let us first make a simple general observation: whenever $n>d$, where $d$ is the dimension of the underlying Hilbert space, we have that $P(n,k)<1$ for all $k \in \naturale$. This is simply because probability 1 can be reached only in the case when the $n$ $k$-copy states are orthogonal. Since now $P(n,1) \leq \frac{d}{n}<1$, the local states cannot be orthogonal and thus simply adding more copies cannot make them orthogonal either. This means that \emph{(both) in quantum (and classical) theory whenever $n>d$ we cannot perfectly distinguish the states with any finite number of copies}. 

However, it is known that in the asymptotic case, i.e., for infinitely many copies, perfect discrimination is possible. This becomes clear simply from the observation, that the overlap of non-identical states decreases exponentially with the number of copies and in this case the states become orthogonal at the limit. 
In particular, it was shown by \cite{Barnum:2000cwz} that for a given set of states $\{ \varrho_i\}_{i \in [n]}$, the average success probability can be lower bounded using the fidelity between states. The authors in \cite{Harrow_2012} took a step further and reconsidered this bound for multiple copies:
\begin{equation}
    P(n,k) \geq 1 - \frac{1}{n} \sum_{i \neq j} \sqrt{F(\varrho_i^{\otimes k}, \varrho_j^{\otimes k})} \, , \quad i,j \in [n] \, .
\end{equation}
The factor $\frac{1}{n}$ stems from the uniform prior distribution and $F(\varrho, \sigma) = \tr(\sqrt{\sqrt{\varrho} \sigma\sqrt{\varrho}})^2$ is the fidelity between two states $\varrho$ and $\sigma$. If $\varrho_i \neq \varrho_j$, the fidelity is strictly below 1 and it follows
\begin{equation}
    F(\varrho_i^{\otimes k}, \varrho_j^{\otimes k}) = F(\varrho_i, \varrho_j)^k \overset{k \rightarrow \infty}{\longrightarrow} 0 \, ,
\end{equation}
which means that the states become distinguishable. 
As shown in \cite{Harrow_2012}, any precision $\epsilon >0$ can be obtained to yield $P(n,\hat k) \geq 1-\epsilon$ for a sufficiently large $\hat k$.

In other tasks like state exclusion (also called  antidiscrimination) and channel discrimination, it was shown that already finitely many copies are enough to fulfill the task perfectly \cite{Acin01,roy2026quantumstateexclusioncopies}. Such super-activation effects are very interesting, but not always guaranteed. For example, saturation effects might occur in entanglement transformation \cite{PhysRevA.65.052315}.
The discrimination task of multi-copy states has richer picture as different measurement classes can be considered \cite{Chitambar_2014}. In some cases, global operations can outperform local operations. Since the states contain no entanglement, this phenomenon is called \textit{nonlocality without entanglement (NLWE)}\cite{Bennet_NLWE}. 

In addition to exploring features like NLWE, in this work we are interested in comparing different quantum strategies to optimal classical strategies. In particular we will focus on the simplest quantum and classical systems, i.e., the two-level systems of qubit and bit. We start our investigation by deriving bounds and exact forms of optimal success probability when we use a bit as the information carrier.

\section{Optimal bit strategies}\label{sec:classical}
 By classical theory we mean a restriction of quantum theory such that one is only allowed to use commuting states and measurements with commuting effects. Then bit states can be written as  $\varrho_i = p_i \ketbra{0}+(1-p_i)\ketbra{1}$ for some $p_i \in [0,1]$ for all $i \in [n]$. The effects of the discriminating $k$-bit measurement $M$ on the $k$ copies of states can be written in the form 
\begin{align}
     M_j = \sum_{i_1=0}^1 \cdots \sum_{i_k=0}^1 \nu_{j|i_1, \ldots, i_k}\ketbra{i_1 \cdots i_k}
\end{align}
 for all $j \in [n]$ for some $ \nu_{j|i_1, \ldots, i_k} \geq0$ such that $\sum_{j=1}^n  \nu_{j|i_1, \ldots, i_k} = 1$ for all $i_1, \ldots, i_k \in \{0,1\}$. Thus, the $k$-bit measurement consists of $k$ basis measurements $\{\ketbra{0}, \ketbra{1}\}$ on each bit and a joint classical stochastic postprocessing of the measurement outcomes described by $\nu$. For the optimal success probability $\Pb$ in bit we can show the following upper bound:

\begin{thm}\label{prop:classical-n-k-bound}
    For bit system the optimal success probability $\Pb(n,k)$ for discriminating $n \geq 3$ states of $k$ copies with equal a priori probability has the following upper bound
    \begin{align}\label{eq:classical-n-k-ub}
        \Pb(n,k) \leq \frac{1}{n} \left[ 2 + \sum_{j=1}^{k-1} {k \choose j} \left(\frac{j}{k}\right)^j \left(1-\frac{j}{k}\right)^{k-j} \right] \, .
    \end{align}
    This bound can be achieved always when $n > k$ and requires the use of the states $\{\varrho_i = p_i \ketbra{0}+(1-p_i)\ketbra{1}\}_{i=1}^n$ with $p_1 = 0$, $p_{k+1} = 1$ and $p_j = \frac{j-1}{k}$ for all $j \in \{2, \ldots, k\}$ and the rest of the $n-k-1$ states can be chosen arbitrarily.
\end{thm}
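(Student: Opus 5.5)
The plan is to reduce the problem to outcome types and then bound each type independently. Since the $k$ copies are identical and the states are diagonal, the probability of an outcome string $(i_1,\ldots,i_k)$ given $\varrho_i$ is $p_i^{w}(1-p_i)^{k-w}$, where $w$ is the number of zeros in the string. Substituting this into the average success probability and optimizing the postprocessing $\nu$ pointwise (guess the label maximizing the likelihood, which is optimal for deterministic as well as stochastic $\nu$) gives
\begin{equation*}
\Pb(n,k)=\max_{p_1,\ldots,p_n}\frac{1}{n}\sum_{w=0}^{k}\binom{k}{w}\max_{i\in[n]} p_i^{w}(1-p_i)^{k-w}.
\end{equation*}
This is an exact reduction, so upper bounding the right-hand side suffices.

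Next we bound each term of the sum separately, without using any relation between the $p_i$. For $w=0$ and $w=k$ the maximum is at most $1$, which produces the constant $2$. For $1\le w\le k-1$ the function $p\mapsto p^w(1-p)^{k-w}$ on $[0,1]$ attains its maximum at $p=w/k$ by elementary calculus, since setting the logarithmic derivative $w/p-(k-w)/(1-p)$ to zero gives $p=w/k$. Hence each such term is at most $\binom{k}{w}(w/k)^w(1-w/k)^{k-w}$. Summing these bounds gives exactly \eqref{eq:classical-n-k-ub}. I expect no real obstacle here, because the per-term maxima can be realized independently.

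For achievability when $n>k$, I will use the stated choice of $k+1$ distinct parameters: $p=0$ (label $1$), $p=1$ (label $k+1$), and $p=(j-1)/k$ for labels $j=2,\ldots,k$. These cover every value $w/k$ with $w=0,\ldots,k$. The guessing rule assigns outcome type $w$ to the state with $p=w/k$. With this rule each term of the sum reaches its maximum, since for $w=0$ the state $p=0$ gives probability $1$ and for $w=k$ the state $p=1$ gives probability $1$. The condition $n\ge k+1$ is exactly what makes enough distinct labels available. The remaining $n-k-1$ states are never guessed, so they contribute nothing and can be chosen arbitrarily.

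The only point needing care is the bookkeeping of the indexing. The statement lists $p_{k+1}=1$ and $p_j=(j-1)/k$ for $j\le k$, so the assignment of each outcome type $w$ to the unique state with $p=w/k$ has to be written out consistently with that labelling. A further subtlety is that the hypothesis $n\ge3$ plays no role in the bound itself.
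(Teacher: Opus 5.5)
Your proposal is correct and follows essentially the same route as the paper: bound the postprocessed likelihood of each outcome string by $\max_i$, group strings by their number of zeros, bound the two extreme terms by $1$ and each middle term by its calculus maximum at $p=w/k$, and attain the bound with the listed $k+1$ states and the type-based guessing rule. The one part you leave unaddressed is the ``requires'' clause. It follows from your own argument: equality forces every term to be maximal, and each $p\mapsto p^w(1-p)^{k-w}$, $0\le w\le k$, has the unique maximizer $p=w/k$ on $[0,1]$ (the paper likewise only remarks on this uniqueness).
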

\begin{proof}
    Let us consider an optimal strategy involving $n$ states $\{\varrho_j\}_{j=1}^n$ and an $n$-outcome measurement $M$ on the $k$ copies of states consisting of effects $\{M_j\}_{j=1}^n$ as described above. Now the average success probability reads as
    \begin{align}
        \Pb(n,k) &= \frac{1}{n} \sum_{i=1}^n \tr[M_i \varrho_i^{\otimes k}]  \\
        &= \frac{1}{n} \sum_{i=1}^n \sum_{i_1=0}^1 \cdots \sum_{i_k=0}^1\nu_{i|i_1, \ldots, i_k}  \ip{i_1, \ldots, i_k}{(p_i \ketbra{0} + (1-p_i) \ketbra{1})^{\otimes k}| i_1, \ldots, i_k }   \\
        &= \frac{1}{n} \sum_{i=1}^n \sum_{i_1=0}^1 \cdots \sum_{i_k=0}^1\nu_{i|i_1, \ldots, i_k} \prod_{j=1}^k \left( p_i \delta_{i_j, 0} +(1-p_i)\delta_{i_j, 1}  \right)  \\
        &\leq \frac{1}{n} \sum_{i=1}^n \sum_{i_1=0}^1 \cdots \sum_{i_k=0}^1\nu_{i|i_1, \ldots, i_k} \max_{i'} \prod_{j=1}^k \left( p_{i'} \delta_{i_j, 0} +(1-p_{i'})\delta_{i_j, 1}  \right)  \\
        &= \frac{1}{n} \sum_{i_1=0}^1 \cdots \sum_{i_k=0}^1 \max_{i'} \prod_{j=1}^k \left( p_{i'} \delta_{i_j, 0} +(1-p_{i'})\delta_{i_j, 1}  \right)   
    \end{align}
    \begin{align}
        &= \frac{1}{n} \sum_{j=0}^k {k \choose j} \max_{i} p_{i}^j (1-p_{i})^{k-j}  \\
        &= \frac{1}{n} \left(\max_{i} (1-p_{i})^k  + \sum_{j=1}^{k-1} {k \choose j} \max_{i} p_{i}^j (1-p_{i})^{k-j} + \max_{i} p^k_{i} \right) \, . \label{eq:classical-n-k-ub0}
    \end{align}
    Clearly the first and the last terms are maximized when $p_{i'} =0$ and $p_{i''} = 1$ for some $i', i'' \in [n]$ respectively. For the rest of the terms we can look the function $f_{j,k}: [0,1] \to [0,1]$ defined as $f_{j,k}(x) = x^j (1-x)^{k-j}$ for all $x \in [0,1]$ for all $j \in [k-1]$. We note that $f_{j,k}(0) = f_{j,k}(1) = 0$ and that $f_{j,k}(x)> 0$ for all $x \in (0,1)$ for all $j \in  [k-1]$ so that the function is not maximized at $x=0$ or $x=1$. We now have that 
    \begin{align}
        \frac{d f_{j,k}(x)}{dx} = [j(1-x) - (k-j)x] x^{j-1} (1-x)^{k-j-1}
    \end{align}
    which equals zero on the interval $(0,1)$ if and only if $j(1-x) - (k-j)x=0$ which gives us that the maximum of $f_{j,k}$ is achieved at $x=\frac{j}{k}$. Thus, we finally have that 
    \begin{align}
        \Pb(n,k) &\leq \frac{1}{n} \left(\max_{i} (1-p_{i})^m  + \sum_{j=1}^{k-1} {k \choose j} \max_{i} p_{i}^j (1-p_{i})^{k-j} + \max_{i} p^k_{i} \right) \\
        &\leq \frac{1}{n} \left[ 2 + \sum_{j=1}^{k-1} {k \choose j} \left(\frac{j}{k}\right)^j \left(1-\frac{j}{k}\right)^{k-j} \right] \, .
    \end{align}
    Now, if $n > k$ we can set $p_1 = 0$, $p_{k+1} = 1$ and $p_j = \frac{j-1}{k}$ for all $j \in \{2, \ldots, k\}$ and for the measurement we can set  $\nu_{j|i_1, \ldots, i_k} = 1$ if there are exactly $j-1$ indices in $\{i_1, \ldots, i_k\}$ which are zero and otherwise $\nu_{j|i_1, \ldots, i_k} = 0$. One can readily confirm that with these choices the upper bound is achieved. We note that to reach this bound, the choice of these $k+1$ states and the measurement is unique (up to bijective relabeling) but that we can choose the remaining $n-k-1$ states arbitrarily.
\end{proof}

We postpone plotting the bound given in \Cref{prop:classical-n-k-bound} until the next section where we compare it to some qubit discrimination strategies. For the future comparison we also present the case of $k=2$ copies as a separate corollary:

\begin{cor} \label{cor:class-optimum-n3-k2}
        For bit system the optimal success probability for discriminating $n \geq 3$ states of $k=2$ copies with equal a priori probability is 
        \begin{align}
            \Pb(n,2) = \frac{5}{2n} \, .
        \end{align}
        The optimal strategy requires the use of the states $\varrho_1=\ketbra{1},\varrho_2=\frac{1}{2}(\ketbra{0}+\ketbra{1}), \varrho_3=\ketbra{0}$ (the remaining states can be arbitrary) and a measurement $M$ with effects $M_1 = \ketbra{11}$, $M_2 = \ketbra{01}+\ketbra{10}$, $M_3 = \ketbra{00}$ and $M_j =0$ for $j>3$.
\end{cor}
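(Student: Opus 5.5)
This corollary follows from \Cref{prop:classical-n-k-bound} by specializing to $k=2$ and then checking that the stated strategy meets the bound. For the upper bound, set $k=2$ in \eqref{eq:classical-n-k-ub}. The sum has only the single term $j=1$, which equals ${2 \choose 1}\left(\tfrac12\right)^1\left(\tfrac12\right)^1 = \tfrac12$. The bound therefore reads
\begin{align}
    \Pb(n,2) \leq \frac{1}{n}\left[2 + \frac12\right] = \frac{5}{2n}.
\end{align}
This holds for every $n \geq 3$, as the theorem requires.

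For achievability, the condition $n > k = 2$ holds automatically because $n \geq 3$. We could simply invoke the second part of \Cref{prop:classical-n-k-bound}, but it is clearer to verify the explicit strategy directly. Take $\varrho_1 = \ketbra{1}$, $\varrho_2 = \tfrac12(\ketbra{0}+\ketbra{1})$ and $\varrho_3 = \ketbra{0}$. This is the theorem's choice with the labels of $p=0$ and $p=1$ swapped, which is harmless since relabeling is allowed. Take the effects $M_1 = \ketbra{11}$, $M_2 = \ketbra{01}+\ketbra{10}$, $M_3 = \ketbra{00}$, and $M_j = 0$ for $j > 3$.

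We check two things. First, $M$ is a valid classical measurement: its effects are diagonal and nonnegative and sum to $\one_4$. Second, we compute the success terms:
\begin{align}
    \tr[\varrho_1^{\otimes 2} M_1] &= 1, \\
    \tr[\varrho_3^{\otimes 2} M_3] &= 1, \\
    \tr[\varrho_2^{\otimes 2} M_2] &= \tfrac14 + \tfrac14 = \tfrac12.
\end{align}
The states $\varrho_j$ with $j > 3$ contribute $0$ because $M_j = 0$, so they may be arbitrary. The total success probability is $\tfrac1n\left(1 + \tfrac12 + 1\right) = \tfrac{5}{2n}$, which meets the upper bound and gives equality.

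The claim that the optimal strategy \emph{requires} these states follows from the uniqueness remark at the end of the proof of \Cref{prop:classical-n-k-bound}. Equality in the chain of inequalities forces three things: some $p_i = 0$ and some $p_i = 1$ (for the terms $\max_i (1-p_i)^2$ and $\max_i p_i^2$), and some $p_i = \tfrac12$, which is the unique maximizer of $f_{1,2}(x) = x(1-x)$. These must be three distinct states, which is possible since $n \geq 3$. The postprocessing $\nu$ must then assign each outcome string to the state attaining the maximum, and for each outcome string that maximizer is unique. This last point, pinning down $\nu$ through the equality case of the $\max$ step, is the only one needing care. It holds because for each outcome string the maximizing state is unique among $\{0, \tfrac12, 1\}$, and any other state with an intermediate $p$ gives strictly smaller values.
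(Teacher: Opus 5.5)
Your argument is correct and is how the paper obtains the corollary: specialize the bound of \Cref{prop:classical-n-k-bound} to $k=2$, where the single middle term gives $\tfrac12$, and use the theorem's optimal strategy (the paper gives no separate proof). Two minor remarks: first, no relabeling is needed, since in the theorem's convention $p_1=0$ already gives $\varrho_1=\ketbra{1}$ and $p_3=1$ gives $\varrho_3=\ketbra{0}$; second, your uniqueness step pins down the $p$-values but not the measurement when a remaining state duplicates one of the three (e.g.\ $\varrho_4=\ketbra{1}$ allows $M_1=M_4=\tfrac12\ketbra{11}$), an imprecision already present in the statement itself.
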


On the other hand, if we fix our number of states to be $n=3$ but allow arbitrary number of copies, we can show the following.

\begin{prop}\label{prop:classical-3-k-exact}
    For bit system the optimal success probability for discriminating $n=3$ states of $k$ copies with equal a priori probability is 
    \begin{align}\label{eq:classical-3-k-exact}
            \Pb(3,k) = 1 - \frac{1}{3 \cdot 2^{k-1}} \, .
    \end{align}
    The optimal strategy requires the use of the states $\varrho_1=\ketbra{1},\varrho_2=\frac{1}{2}(\ketbra{0}+\ketbra{1}), \varrho_3=\ketbra{0}$ and a measurement $M$ with effects $M_1 = \ketbra{1^{\otimes k}}$, $M_3 = \ketbra{0^{\otimes k}}$ and $M_2 = \one_{2^k} - M_1 - M_3$.
\end{prop}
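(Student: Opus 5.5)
We prove \Cref{prop:classical-3-k-exact} in two parts: first check that the proposed strategy attains the value, then show no bit strategy with $n=3$ does better. Achievability is a direct computation. With $\varrho_1=\ketbra{1}$, $\varrho_3=\ketbra{0}$ and $\varrho_2=\frac12\one_2$, the effects $M_1=\ketbra{1^{\otimes k}}$ and $M_3=\ketbra{0^{\otimes k}}$ give $\tr[\varrho_1^{\otimes k}M_1]=\tr[\varrho_3^{\otimes k}M_3]=1$. Since $\varrho_2^{\otimes k}=2^{-k}\one_{2^k}$, we get $\tr[\varrho_2^{\otimes k}M_2]=2^{-k}(2^k-2)=1-2^{1-k}$. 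Averaging over the three states gives $\frac13(3-2^{1-k})=1-\frac{1}{3\cdot 2^{k-1}}$.

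For the upper bound we follow the first steps of the proof of \Cref{prop:classical-n-k-bound}, but stop before the termwise maximization, which is too crude when $n=3$. As there, the optimal postprocessing may be taken deterministic, so
\begin{align}
\Pb(3,k)\le \frac13\sum_{j=0}^k {k\choose j}\max_{i\in\{1,2,3\}} p_i^j(1-p_i)^{k-j}.
\end{align}
We relabel so that $p_1\le p_2\le p_3$ and write $g_j(x)=x^j(1-x)^{k-j}$. Because $g_j(x)/g_{j-1}(x)=x/(1-x)$ is increasing in $x$, the maximizing index is monotone in $j$. Hence $\{0,\dots,k\}$ splits into three consecutive intervals $[0,a)$, $[a,b)$ and $[b,k]$, on which states $1$, $2$ and $3$ respectively attain the maximum. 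The bound therefore becomes
\begin{align}
\frac13\Big(\Pr[\mathrm{Bin}(k,p_1)<a]+\Pr[a\le \mathrm{Bin}(k,p_2)<b]+\Pr[\mathrm{Bin}(k,p_3)\ge b]\Big).
\end{align}
The first and last terms are at most $1$, with equality at $p_1=0$ and $p_3=1$, provided $a\ge1$ and $b\le k$. The middle term is the probability that $\mathrm{Bin}(k,p_2)$ avoids both $\{0,\dots,a-1\}$ and $\{b,\dots,k\}$. It is largest when $a=1$ and $b=k$, which gives $1-(1-p_2)^k-p_2^k$. Shrinking the outer intervals lowers the outer terms from $1$ by exactly the mass the middle term gains, but we must argue this trade never helps.

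The main obstacle is making that trade-off rigorous. The first thing to try is a direct argument: the total equals $\frac13\sum_j{k\choose j}\max_i g_j(p_i)$, and one can bound each term with $1\le j\le k-1$ by $\max(g_j(0),g_j(1),g_j(p_2))$, which is legitimate because at most three candidate points are involved. The remaining task is the claim
\begin{align}
\sum_j{k\choose j}\max_i g_j(p_i)\le 3-2^{1-k}\quad\text{for all } p_1,p_2,p_3\in[0,1].
\end{align}
To prove it we would show that moving $p_1$ to $0$ and $p_3$ to $1$ does not decrease the left side. This holds because the endpoints $0$ and $1$ dominate on the extreme intervals, where $\sum_{j<a}{k\choose j}g_j(p_1)\le 1$ and the endpoint choice gives exactly the $j=0$ mass $1$. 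Once the extreme points are fixed, the optimal intervals are $a=1$ and $b=k$, because $g_0(0)=1\ge g_0(p_2)$ and similarly at $j=k$. What remains is maximizing $1-(1-x)^k-x^k$ over $x$. Since $x^k+(1-x)^k$ is convex and symmetric, its minimum is at $x=\frac12$, giving $1-2^{1-k}$, and therefore $\Pb(3,k)\le\frac13(2+1-2^{1-k})=1-\frac{1}{3\cdot2^{k-1}}$. As a fallback, if the exchange argument is delicate we would enumerate the interval cutoffs $(a,b)$. For each pair, bound $\Pr[\mathrm{Bin}(k,p_1)<a]+\Pr[\mathrm{Bin}(k,p_3)\ge b]\le 2$ and use the identity that the middle term is at most $1-\min_x\big(\Pr[\mathrm{Bin}<a]+\Pr[\mathrm{Bin}\ge b]\big)$. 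The claim then follows after checking that the total deficit is at least $2^{1-k}$.
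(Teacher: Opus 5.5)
Your argument is correct. It shares its skeleton with the paper's proof: both start from the bound $P_b(3,k)\le\frac13\sum_{j}\binom{k}{j}\max_i p_i^j(1-p_i)^{k-j}$ inherited from \Cref{prop:classical-n-k-bound}, and both end by minimizing $x^k+(1-x)^k$ at $x=\frac12$. You handle the middle step differently, and more rigorously. The paper only notes that the $j=0$ and $j=k$ terms are the only ones that can reach one. It then says ``hence'' the maximum sits at $p_1=0$, $p_3=1$ with all middle terms carried by $p_2$, without explaining why giving up whatever $p_1$ and $p_3$ contributed to the terms $1\le j\le k-1$ cannot hurt. Your exchange argument supplies exactly this. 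Take $p_1$ to be the smallest parameter, so state $1$ carries $j=0$, and set $m_j=\max_{i\ne1}\binom{k}{j}p_i^j(1-p_i)^{k-j}$. Replacing $p_1$ by $0$ changes the sum by $1-\sum_{j<a}\binom{k}{j}p_1^j(1-p_1)^{k-j}+\sum_{1\le j<a}m_j\ge 0$, and the terms with $j\ge a$ are untouched. The same holds for $p_3\to1$, which gives the rigorous chain $\sum_j\binom{k}{j}\max_i p_i^j(1-p_i)^{k-j}\le 3-(1-p_2)^k-p_2^k\le 3-2^{1-k}$. Write that one-line identity out explicitly, since your text only gestures at it. Once it is written, the fallback enumeration over $(a,b)$ is unnecessary.

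There are two corrections. First, delete the sentence claiming each middle term is bounded by $\max(g_j(0),g_j(1),g_j(p_2))$. It is false termwise: for $k=10$, $j=3$ and $(p_1,p_2,p_3)=(0.3,0.5,0.9)$ one has $g_3(0.3)\approx2.2\cdot10^{-3}>g_3(0.5)\approx9.8\cdot10^{-4}$. Your proof never uses it, and it is precisely the step left implicit in the paper's ``hence''. Second, the statement asserts that the optimal strategy is \emph{required}, and you do not address this (the paper also only asserts uniqueness). Your chain yields it if you track equality:
\begin{itemize}
\item The exchange steps are strict unless $p_1=0$ and $p_3=1$, apart from degenerate configurations whose value is only $1$.
\item The last step forces $p_2=\frac12$ once $k\ge2$.
\item Then each outcome string has a unique maximizing state, which forces the postprocessing and hence $M_1,M_2,M_3$.
\end{itemize}
For $k=1$ the middle state is not determined, so that part of the statement needs $k\ge2$.
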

\begin{proof}
    As was shown in the proof of   \Cref{prop:classical-n-k-bound}, by using the optimal states $\{\varrho_i = p_i \ketbra{0}+(1-p_i)\ketbra{1}\}_{i=1}^3$ we have the following upper bound 
    \begin{align}\label{eq:m-3-2-ub}
        \Pb(3,k) \leq \frac{1}{3} \left(\max_{i} (1-p_{i})^k  + \sum_{j=1}^{k-1} {k \choose j} \max_{i} p_{i}^k (1-p_{i})^{k-j} + \max_{i} p^k_{i} \right) \, .
    \end{align}
Unlike in the case when $n> k$ we cannot choose all the $p_i$'s which maximize all the terms in the sum separately since we have more terms in the sum than we have states. Instead we have to limit to three $p_i$'s which give us the maximum for the whole expression and then show that we can achieve that. 

First we see that (without the normalizing factor) the first and the last term in the above expression are clearly less or equal than one. In fact they are one only when two of the states are $\ketbra{1}$ and $\ketbra{0}$ respectively. On the other hand, using the bound from   \Cref{prop:classical-n-k-bound} we see that for all $k \in [m-1]$
\begin{align}
     {k \choose j} \max_{i} p_{i}^j (1-p_{i})^{k-j} \leq  {k \choose j} \left(\frac{j}{k}\right)^j \left(1-\frac{j}{k}\right)^{k-j} 
    < \sum_{i=0}^k {k \choose i} \left(\frac{j}{k}\right)^i \left(1-\frac{j}{k}\right)^{k-i} 
    =1 \, .
\end{align}
Thus, none of the other terms in the sum in   \Cref{eq:m-3-2-ub} (without the normalizing factor) can attain the value one. Hence, the largest value in the upper bound in   \Cref{eq:m-3-2-ub} is attained when (up to relabeling) we take $p_1 = 0$ and $p_3 = 1$ and we choose $p_2$ such that the term $\sum_{j=1}^{k-1} {k \choose j} p_{2}^j (1-p_{2})^{k-j}$ is maximized. 
In order to do that let us consider the function $g_m: [0,1] \to [0,1]$ defined as $g_k(x) = \sum_{j=1}^{k-1} {k \choose j} x^j (1-x)^{k-j}$ for all $x \in [0,1]$. 
First, we note that $g_k(0) = g_k(1)=0<g_k(x)$ for all $x \in (0,1)$. 
From the binomial formula we know that 
\begin{align}
    g_k(x) &=  \sum_{j=1}^{k-1} {k \choose j} x^j (1-x)^{k-j} = 1-x^k-(1-x)^k
\end{align}
for all $x \in (0,1)$. 
Now we see that
\begin{align}
    \frac{d g_k(x)}{dx} &=  -kx^{k-1}+k(1-x)^{k-1}
\end{align}
which is zero only when $x=\frac{1}{2}$ at which point $g_k$ attains its maximum value on $[0,1]$. Thus, if we set $p_2=\frac{1}{2}$, we get that
\begin{align}
        \Pb(3,k) &\leq \frac{1}{3} \left(\max_{i} (1-p_{i})^k  + \sum_{j=1}^{k-1} {k \choose j} \max_{i} p_{i}^k (1-p_{i})^{k-j} + \max_{i} p^k_{i} \right) \\
        & \leq \frac{1}{3} \left[ 1 + g_k\left( \frac{1}{2} \right) + 1 \right] = \frac{1}{3} \left[ 2 + 1-2 \left( \frac{1}{2} \right)^m  \right] \\
        &= 1 - \frac{1}{3 \cdot 2^{k-1}} \, .
\end{align}
It can be readily confirmed that this bound is achieved when we take $p_1=0$, $p_2 = \frac{1}{2}$, $p_3=1$ for the states (as previously noted) and when we set $\nu_{1|1, \ldots, 1} = \nu_{3| 0, \ldots, 0} = \nu_{2|i_1, \ldots, i_k}=1$ for all $i_1, \ldots, i_k \in \{0,1\}$ such that $i_j \neq i_{j'}$ for at least some $j \neq j'$. For all other indices we set $\nu_{i|i_1, \ldots, i_k}=0$. This optimal strategy is unique.
\end{proof}

We also postpone plotting the optimal success probability given in \Cref{prop:classical-3-k-exact} until we make the comparison with qubit.

\begin{rem}
    Both in   \Cref{prop:classical-n-k-bound} and in   \Cref{prop:classical-3-k-exact} the optimal strategies use states which are distributed on the bit in equal intervals. However, this is not the case in general. Namely, we can saturate the upper bound of   \Cref{eq:classical-n-k-ub0}  with different strategy when $3< n\leq k$. Let us demonstrate this in the simplest case when $n=k=4$. Now in   \Cref{eq:classical-n-k-ub0} there are in total five terms which we should aim to maximize with four states. As is observed in the proof of   \Cref{prop:classical-3-k-exact} only two of the five terms can reach one (without the normalization). These are the first and the last term which reach one when we take $p_1=0$ and $p_4=1$. Now the remaining two states should be chosen such that together they maximize the remaining three terms in the sum. Thus, the only options are that we choose $p_2$ and $p_3$ such that
    \begin{align}
        p_2 = \arg \max_{0<x<1} \left[ {4 \choose 1} x^1 (1-x)^{3} +{4 \choose 2} x^2 (1-x)^{2} \right], \  p_3 = \arg \max_{0<x<1}  {4 \choose 3} x^3 (1-x)^{1},
    \end{align}
    or 
    \begin{align}
        p_2 = \arg \max_{0<x<1} \left[ {4 \choose 1} x^1 (1-x)^{3} +{4 \choose 3} x^3 (1-x)^{1} \right], \  p_3 = \arg \max_{0<x<1}  {4 \choose 2} x^2 (1-x)^{2},
    \end{align}
    or
    \begin{align}
        p_2 = \arg \max_{0<x<1} \left[ {4 \choose 2} x^2 (1-x)^{2} +{4 \choose 3} x^3 (1-x)^{1} \right], \  p_3 = \arg \max_{0<x<1}  {4 \choose 1} x^1 (1-x)^{3} \, .
    \end{align}
    It is straightforward to show that the overall maximum for   \Cref{eq:classical-n-k-ub0} is achieved when we choose the first option (which is equivalent to the last option) and in that case we take $p_2= \frac{1}{2}(\sqrt{3}-1) \approx 0.366$ and $p_3 = \frac{3}{4}$. As in the proof of   \Cref{prop:classical-n-k-bound} this bound can be achieved by a measurement $M$ for which we set  $\nu_{j|i_1, \ldots, i_4} = 1$ if there are exactly $j-1$ indices in $\{i_1, \ldots, i_4\}$ which are zero and otherwise $\nu_{j|i_1, \ldots, i_4} = 0$. Using this optimal strategy we find that 
    \begin{align}
        \Pb(4,4) = \frac{1}{4}\left( 3 \sqrt{3} -\frac{133}{64}\right) \approx 0.780 \, .
    \end{align}
    We note that we could use the same method to saturate the bound given by   \Cref{eq:classical-n-k-ub0} in other cases of $3< n\leq k$ as well. However, in this work we will focus on comparing qubit strategies to the optimal bit strategies given in   \Cref{prop:classical-n-k-bound} and   \Cref{prop:classical-3-k-exact}.
\end{rem}

\section{Quantum benefits in multi-copy discrimination}
\label{sec:quantum-part}

In discrimination task for single-copy states, the achievable upper bound for the success probability is $\frac{d}{n}$. The same bound holds for both quantum and classical theories, and therefore quantum theory does not offer any benefits.
Their similarity in this maximal success probability sense persists in unambiguous discrimination and antidiscrimination tasks \cite{heinosaari2024can}. It is therefore not at all evident that a qubit would offer and advantage over a bit.
Interestingly, we find that in multi-copy settings, quantum theory can yield higher success probability than classical theory.

In the following, we derive the upper bound for the success probability in quantum theory which can go beyond the classical limit. We then show an instance in which this bound is attained. 
\begin{prop} \label{prop:pure-states-bound}
    When the ensemble consists of $k$ copies of pure qubit states, the success probability is upper bounded by
    \bea
    P_{q,pure}(n,k) \leq\frac{k+1}{n}
    \eea
\end{prop}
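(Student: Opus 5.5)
The plan is to use the fact that $k$ copies of a pure state $\varrho_i = \ketbra{\psi_i}$ always live in the symmetric subspace of $(\complex^2)^{\otimes k}$. The argument then runs exactly like the basic decoding theorem, but with the symmetric subspace in place of the full Hilbert space.

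First, observe that $\ket{\psi_i}^{\otimes k}$ is invariant under every permutation of the $k$ tensor factors. Hence it lies in $\mathrm{Sym}^k(\complex^2)$, whose dimension is $\binom{2+k-1}{k} = k+1$. Let $\Pi_{\mathrm{sym}}$ denote the orthogonal projection onto this subspace. We then have $\varrho_i^{\otimes k} = \Pi_{\mathrm{sym}}\varrho_i^{\otimes k}\Pi_{\mathrm{sym}}$, and since $\varrho_i^{\otimes k}$ is a rank-one projection supported inside the symmetric subspace, the operator inequality $\varrho_i^{\otimes k} \leq \Pi_{\mathrm{sym}}$ holds.

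Second, take an arbitrary $n$-outcome POVM $\{\Tilde M_i\}$ on $(\complex^2)^{\otimes k}$. This covers global strategies, and hence LOCC and separable ones too, since those are special cases. Because $\Tilde M_i \geq 0$, we get
\begin{align}
P = \frac{1}{n}\sum_{i=1}^n \tr[\varrho_i^{\otimes k}\Tilde M_i] \leq \frac{1}{n}\sum_{i=1}^n \tr[\Pi_{\mathrm{sym}}\Tilde M_i] = \frac{1}{n}\tr[\Pi_{\mathrm{sym}}] = \frac{k+1}{n}.
\end{align}
The last two steps use $\sum_i \Tilde M_i = \one_{2^k}$ and $\tr[\Pi_{\mathrm{sym}}] = k+1$. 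Equivalently, one can compress the POVM to $\{\Pi_{\mathrm{sym}}\Tilde M_i\Pi_{\mathrm{sym}}\}$, which is a POVM on a $(k+1)$-dimensional space, and then apply the single-copy bound $d/n$ with $d = k+1$.

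The only step needing any care is the dimension count of the symmetric subspace. A basis is given by the symmetrized vectors indexed by the number $j \in \{0,\ldots,k\}$ of zeros, namely the Dicke states, which gives $k+1$. One could alternatively observe that the span of all $\ket{\psi}^{\otimes k}$ equals this $(k+1)$-dimensional subspace, but only the inclusion in $\mathrm{Sym}^k(\complex^2)$ is needed. The rest of the argument is routine. I expect no real obstacle, except to note that purity is essential: a mixed $\varrho_i^{\otimes k}$ need not be supported on the symmetric subspace, so the argument fails for mixed states.
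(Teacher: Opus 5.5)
Your proposal is correct and follows essentially the same argument as the paper: bound $\varrho_i^{\otimes k} \leq \Pi_{\mathrm{sym}}$ and use $\sum_i \Tilde M_i = \one$ to get $\tr[\Pi_{\mathrm{sym}}] = k+1$. The extra details you supply — the Dicke-state dimension count, the compressed-POVM reformulation, and the remark that purity is essential — are accurate but not needed beyond what the paper does.
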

\begin{proof}
    Let us denote $\Pi_{sym}$ as the projection onto the symmetric subspace. For a pure state $\varrho_i= \ketbra{\psi_i}$, it holds that $\ketbra{\psi_i}^{\otimes k} \leq \Pi_{sym}$, and therefore
    \begin{align}
            P_{q,pure}(n,k)&=\max_{\{\ket{\psi_i}\}, \{M_i\}} \frac{1}{n}\sum_i \tr[\ketbra{\psi_i}^{\otimes k} M_i] \\
            &\leq \frac{1}{n} \sum_i \tr[\Pi_{sym} M_i]=\frac{1}{n}\tr[\Pi_{sym}]=\frac{k+1}{n}.
    \end{align}
\end{proof}
This upper bound is not tight in general. In particular, the bound is trivial for $k+1\geq n$. 
Next we present an instance where this bound is achieved. Consider $k=2$ copies of pure states that form a tetrahedron in Bloch sphere, defined as 
\begin{equation}
\begin{aligned}
   |\psi_1\rangle &=|0\rangle, \quad  |\psi_2\rangle=\frac{1}{\sqrt{3}}|0\rangle+ \sqrt{\frac{2}{3}}|1\rangle, \\
|\psi_3\rangle&=\frac{1}{\sqrt{3}}|0\rangle+ \sqrt{\frac{2}{3}}e^{\frac{2\pi i}{3}}|1\rangle, \quad
|\psi_4\rangle=\frac{1}{\sqrt{3}}|0\rangle+ \sqrt{\frac{2}{3}}e^{\frac{-2\pi i}{3}}|1\rangle. 
\end{aligned}
\end{equation}
By using the \textit{pretty good measurement (PGM)} \cite{Hausladen01121994}, defined formally later in \Cref{eq:PGM-1} and \Cref{eq:PGM-k}, we get the average success probability $P=\frac{3}{4}$, which is the upper bound in \Cref{prop:pure-states-bound} in the case of $n=4$ and $k=2$.

In general, it is hard to find the optimal quantum strategy for a given setup in state discrimination. Our approach is to start with an educated guess.
Intuitively, one would think that the optimal strategy contains states, which exhibit some symmetries that can be exploited in state discrimination tasks. A well-known example are \emph{geometrically-uniform (GU) states }\cite{Zhou_2025}. For GU states it is known that that the optimal discriminating measurement is the PGM.

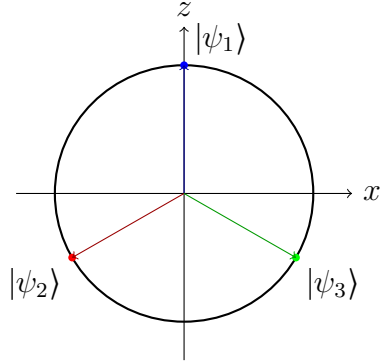
\begin{figure}
    \centering
        \begin{tikzpicture}[scale=1.7]
            \draw[thick] (0,0) circle (1);
            
            \draw[->] (-1.3,0) -- (1.3,0) node[right] {$x$};
            \draw[->] (0,-1.3) -- (0,1.3) node[above] {$z$};
            
            \foreach \ang/\name/\col/\labelpos in {
            90/1/blue/right,
            210/2/red/below left,
            330/3/green/below right}
            {
            \coordinate (P\name) at ({cos(\ang)},{sin(\ang)});
            \fill[\col] (P\name) circle (0.03);
            \draw[->, \col!60!black] (0,0) -- (P\name);
            }
            
            \node[above right] at (P1) {$\ket{\psi_1}$};
            \node[below left] at (P2) {$\ket{\psi_2}$};
            \node[below right] at (P3) {$\ket{\psi_3}$};
            
        \end{tikzpicture}
    \caption{The trine states are the simplest non-trivial case of GU-states. All three states share a $120^\circ$ angle between them. The figure shows a projection of the Bloch sphere onto the $z-x$-plane.}
    \label{fig:trine-states}
\end{figure}

We use the optimality of PGM for GU states \cite{JOUR, Zhou_2025,Optimal_detec_sym_states} as a case study for quantum multi-copy state discrimination.
In general, GU-states can be defined as states generated by group actions and generator states \cite{Zhou_2025,Optimal_detec_sym_states}.
\begin{defn}
    Let $\mathcal{G}$ be a finite group on a complex space $\mathds{C}^d$ with $n$ unitary group elements $g_i \in \mathcal{G}$. An ensemble of GU states is the set
    \begin{equation}
        \left\{ \frac{1}{n} ,\varrho_i\right\}_{i=1}^n \, ,\quad \varrho_i = g_i \varrho_0 g_i^\dagger
    \end{equation}
    for some generator state $\varrho_0 \in \mathds{C}^d$. Since we are only considering uniform prior distributions, we keep the factor $\frac{1}{n}$ implicit for most of the time.
\end{defn}
\begin{eg}
The trine ensemble consists three states, that lie uniformly distributed in the Bloch plane representation \Cref{fig:trine-states} and are defined as
\begin{equation} \label{eq:trine-states}
    \ket{\psi_1}=\ket{0}, \quad \ket{\psi_2}=\frac{1}{2}\ket{0}+\frac{\sqrt{3}}{2}\ket{1}, \quad  \ket{\psi_3}=\frac{1}{2}\ket{0}-\frac{\sqrt{3}}{2}\ket{1}.
\end{equation}
Given two copies of trine states, the ensemble is known as double trine \cite{eric_optimal_detec}.
\end{eg}

For states $\{\varrho_i\}_{i=1}^n$ the PGM is $M$ with
\begin{equation}\label{eq:PGM-1}
    M_i = \frac{1}{n} \sqrt{\varrho^{-1}} \varrho_i \sqrt{\varrho^{-1}} \, , \quad \textrm{where} \quad \varrho=\frac{1}{n} \sum_{i=1}^n \varrho_i \, .
\end{equation}
Here $\varrho^{-1}$ denotes the Moore-Penrose inverse of $\varrho$. 
We note that $k$ copies of GU states are still GU states so that the PGM is still the optimal discriminating measurement. If one is tasked with discriminating $k$-copies of $n$ GU-states $\{\varrho_i^{\otimes k} \}_{i=1}^n$, the PGM straightforwardly reads:
\begin{equation}\label{eq:PGM-k}
     M_i = \frac{1}{n} \sqrt{(\varrho^{(k)})^{-1}} \varrho_i^{\otimes k} \sqrt{(\varrho^{(k)})^{-1}} \text{ and } \varrho^{(k)}=\frac{1}{n} \sum_{i=1}^n \varrho_i^{\otimes k} \, .
\end{equation}
The average success probability is thus calculated as:
\begin{equation} \label{eq:normal-PGM-success-Prob}
    P_{GU}(n,k) = \frac{1}{n} \sum_{i=1}^n \tr \left[\varrho_i^{\otimes k} \sqrt{(\varrho^{(k)})^{-1}} \varrho_i^{\otimes k} \sqrt{(\varrho^{(k)})^{-1}} \right]
\end{equation}
In general, this expression can get quite complicated and difficult to compute. However, for pure states the exact expressions and bounds have been obtained in the literature by considering the Gram matrix of the state ensemble instead.

Gram matrices proved to be useful for the study of state discrimination and state exclusion \cite{Zhou_2025, Montanaro_2007, Johnston_2025}. They allow for a characterization of a pure state ensemble solely based on the inner products of the contained states. Given a set of $d$-dimensional  pure states $\{\ket{\varphi_i}\}_{i=1}^n$ one defines a state matrix $A = (\ket{\tilde{\varphi}_1}, \ldots,\ket{\tilde{\varphi}_n})$, where $\ket{\tilde{\varphi}_i} := \frac{1}{\sqrt{n}}\ket{\varphi_i}$ for all $ i \in [n]$, and the Gram matrix of the ensemble is defined as $G=A^\dagger A$.
Thus, the Gram matrix contains all possible inner products of the subnormalized states $\{\ket{\tilde{\psi}_i}\}_{i=1}^n$. For $k$-copies of pure states we have that:
\begin{equation}
    G^{(k)}=(G_{ij}^{(k)})_{ij}, \text{ where } G_{ij}^{(k)} = \frac{1}{n} \langle \varphi_i|\varphi_j\rangle^k \quad \forall i,j \in [n]\, .
\end{equation}
It is clear that $G^{(k)}$ is symmetric, positive semidefinite, $\tr[G^{(k)}]=1$ and $G^{(k)} \leq \one$ for all $k \in \naturale$. The latter fact can be seen by diagonalizing $G^{(k)}$ and showing that $\one - G^{(k)} \geq 0$ since the states $\ket{\varphi_i}$ are renormalized by $\frac{1}{\sqrt{n}}$.
In general, the Gram matrix approach for GU states leads to the following results \cite{Zhou_2025, Montanaro_2007}.
\begin{equation} \label{eq:success-prob-for-GU-formula}
    P_{GU}(n,k)=\frac{1}{n} [\tr(\sqrt{\varrho^{(k)}}) ]^2 = \sum_{i=1}^n (\sqrt{G^{(k)}})_{ii}^2
\end{equation}
Moreover, the average state $\varrho^{(k)}$ and the Gram matrix $G^{(k)}$ share the same nonzero eigenvalues \cite{Jozsa_2000}.

We will now consider a specific class of qubit GU states which are generated by rotational symmetries similarly to the trine states. 

\begin{defn}
    For given $n \in \naturale$, we define specific class of GU-states, called the \emph{cyclic geometrically-uniform (CGU) states }, as the set of the following pure, qubit states generated by the cyclic rotation groups on the $z-x$-plane of the Bloch sphere:
\begin{equation} \label{eq:GU-states}
    \{ \ket{\psi_i} \}_{i=1}^n, \quad \mathrm{where} \, \ket{\psi_i} = \cos{\left(\tfrac{\pi i}{n}\right)} \ket 0 + \sin{\left(\tfrac{\pi i}{n}\right)} \ket 1 \, .
\end{equation}
\end{defn}
We note that since CGU states are a particular subset of GU states, for the optimal success probabilities we have that $P_{GU}(n,k) \geq P_{CGU}(n,k)$ for all $n, k$.
Now, in the case of $k$ copies of $n$ CGU-states \Cref{eq:GU-states}, the Gram matrix evaluates to
\begin{equation}
    G_{ij}^{(k)} = \frac{1}{n} \cos{\left(\frac{\pi}{n}(i-j)\right)}^k \, .
\end{equation}
Although the Gram matrix and \Cref{eq:success-prob-for-GU-formula} give us means to calculate the optimal success probability, it is hard to find a closed analytic form for general $n$ and $k$ even for our chosen set of CGU states. 
However, for CGU states we find that the success probability is inversely proportional to $n$ for all $k$, that is, $P_{CGU}(n,k) \propto \frac{1}{n}$.

\begin{table}[!t]
    \centering
    \begin{tabular}{| c || c |} 
        \multicolumn{2}{c}{} \\ \hline
        $k$ & $n \cdot P_{CGU}(n,k)=f(k)$\\
        \hline
        2 & $\frac{3}{2}+\sqrt{2} \approx 2.91$ \\
        3 & $2+\sqrt{3} \approx 3.73$ \\
        4 & $\frac{3(7+2\sqrt{6})}{8} \approx 4.46$ \\
        5 & $\frac{8+5\sqrt{2} +\sqrt{5}+\sqrt{10} }{4} \approx 5.12$ \\
        6 & $\frac{27+10\sqrt{3}+2\sqrt{5}+2\sqrt{6}+6\sqrt{10}+2\sqrt{15}+2\sqrt{30}}{16} \approx 5.71$\\
        7 & $\frac{(\sqrt{2}+\sqrt{14}+\sqrt{42}+\sqrt{70})^2}{64} \approx 6.25$\\ \hline
    \end{tabular}
    \caption{Values of $f(k)$ for $k=2, \ldots,7$. The success probability can be obtained by dividing with $n$ for which $n > k$ must hold.}
    \label{table:values-f(k)}
\end{table}

\begin{prop}\label{prop:GU-prop-to-1-n}
    The optimal success probability of discriminating $k$ copies of $n>k$ CGU states is of the form $P_{CGU}(n,k)=\frac{f(k)}{n}$, where $f(k)$ is a number only depending on $k$.
\end{prop}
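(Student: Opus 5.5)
The plan is to compute the PGM success probability for CGU states in closed form using \Cref{eq:success-prob-for-GU-formula}, namely $P_{CGU}(n,k)=\frac{1}{n}[\tr\sqrt{\varrho^{(k)}}]^2$. This is legitimate because the $k$-copy CGU ensemble is GU: it is generated by the cyclic group of Bloch rotations by $2\pi/n$ about the $y$-axis, acting as $g^{\otimes k}$. By the optimality of the PGM for GU states, this success probability is the optimal one. The claim then reduces to showing that the spectrum of $\varrho^{(k)}$ does not depend on $n$ whenever $n>k$.

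The key step is a change of basis that diagonalises $\varrho^{(k)}$. Set $\theta_i=\pi i/n$, and pass to the circular basis $\ket{R}=\frac{1}{\sqrt2}(\ket0+\complexi\ket1)$, $\ket{L}=\frac{1}{\sqrt2}(\ket0-\complexi\ket1)$. In this basis
\begin{equation*}
\ket{\psi_i}=\tfrac{1}{\sqrt2}\bigl(e^{-\complexi\theta_i}\ket{R}+e^{\complexi\theta_i}\ket{L}\bigr).
\end{equation*}
Expanding the tensor power gives
\begin{equation*}
\ket{\psi_i}^{\otimes k}=2^{-k/2}\sum_{j=0}^{k}\sqrt{\tbinom{k}{j}}\,e^{\complexi(2j-k)\theta_i}\ket{D_j},
\end{equation*}
where $\ket{D_j}$ are the orthonormal symmetric (Dicke) states with $j$ factors $\ket{L}$. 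Hence
\begin{equation*}
\varrho^{(k)}=2^{-k}\sum_{j,j'}\sqrt{\tbinom{k}{j}\tbinom{k}{j'}}\Bigl(\frac1n\sum_{i=1}^n e^{2\pi \complexi (j-j')i/n}\Bigr)\ket{D_j}\bra{D_{j'}}.
\end{equation*}

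The inner sum is a geometric sum of $n$-th roots of unity. It equals $1$ if $n\mid (j-j')$ and $0$ otherwise. Since $|j-j'|\le k<n$, only the diagonal terms $j=j'$ survive. This is exactly where the hypothesis $n>k$ is used, and it is the step I expect to need the most care, mainly in getting the phase conventions right so that the exponent is $2(j-j')\theta_i$ rather than $(j-j')\theta_i$. The result is
\begin{equation*}
\varrho^{(k)}=\sum_{j=0}^k 2^{-k}\tbinom{k}{j}\ketbra{D_j},
\end{equation*}
whose eigenvalues $2^{-k}\binom{k}{j}$ are independent of $n$. As a consistency check, the trace is $1$.

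It follows that
\begin{equation*}
P_{CGU}(n,k)=\frac{f(k)}{n},\qquad f(k)=2^{-k}\Bigl(\sum_{j=0}^k\sqrt{\tbinom{k}{j}}\Bigr)^2 .
\end{equation*}
As a check against \Cref{table:values-f(k)}, for $k=2$ this gives $f(2)=\frac{(2+\sqrt2)^2}{4}=\frac32+\sqrt2$, and the entry for $k=3$ can be reproduced the same way.
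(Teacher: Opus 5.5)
Your proof is correct, and it takes a genuinely different route from the paper's, one that gives more.

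\textbf{The paper's route.} The paper stays in the computational basis. It writes each entry of $\varrho^{(k)}$ as $\frac1n\sum_{l}\cos^{H}(\pi l/n)\sin^{h}(\pi l/n)$ with $H+h=2k$. It expands these with power-reduction formulas and uses auxiliary trigonometric-sum lemmas to show the entries do not depend on $n$ once $n>k$. It then argues that $\sqrt{\varrho^{(k)}}$, and hence $\tr\sqrt{\varrho^{(k)}}$, inherits this independence. It never diagonalises, so $f(k)$ stays implicit: the paper only tabulates it and calls its exact form unclear.

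\textbf{Your route.} You work in the eigenbasis of the generating rotation, where averaging over the group acts as a root-of-unity filter on the Dicke coherences $\ket{D_j}\bra{D_{j'}}$. This buys several things:
\begin{itemize}
\item It isolates exactly where $n>k$ enters: for $n\le k$, coherences with $j-j'$ a nonzero multiple of $n$ would survive.
\item It gives the spectrum $2^{-k}\binom{k}{j}$ at once.
\item It yields the closed form $f(k)=2^{-k}\bigl(\sum_{j=0}^{k}\sqrt{\binom{k}{j}}\bigr)^2$. This reproduces every entry of \Cref{table:values-f(k)}; for example, $(6+\sqrt6)^2/16=3(7+2\sqrt6)/8$ for $k=4$.
\item Cauchy--Schwarz then gives $f(k)\le k+1$ immediately, consistent with \Cref{prop:pure-states-bound}.
\end{itemize}

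Your route also avoids a slip in the paper's bookkeeping. The explicit value $2^{-2k}\binom{H}{H/2}\binom{h}{h/2}$ it derives for the averaged entries drops the non-vanishing zero-frequency cross terms in the product of cosines. For $k=2$ it gives $\frac14$ instead of the correct $\frac18$ for the $\ket{01}\bra{01}$ entry, contradicting its own displayed $\varrho^{(2)}$. The $n$-independence it is used for is nonetheless true.

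In exchange, the paper's computation makes the parity block structure of $\varrho^{(k)}$ in the computational basis directly visible, which yours does not.
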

We postpone the proof to \Cref{appendix:GU-f(k)}.
Computing $f(k)$ for given $k$ can be done in principle, but is in general hard to do because it requires diagonalizing a $2^n \times 2^n$ matrix. 
We list the values of $f(k)$ for $k=2,\ldots,7$ in \Cref{table:values-f(k)}.

\Cref{prop:GU-prop-to-1-n} leads to an interesting observation: Since the classical upper bound $\Pb(n,k):=\frac{g(k)}{n}$ from \Cref{prop:classical-n-k-bound} and $P_{CGU}(n,k)$ are of similar form, a comparison of classical and quantum theory seems to be very inviting.
A comparison of $f(k)$ and $g(k)$ in \Cref{fig:f-vs-g} further suggests a quantum over classical advantage.
However, for a more in-depth comparison, a simpler expression than $f(k)$ is needed. We therefore look for a lower bound that is in between $f(k)$ and $g(k)$. Obtaining this lower bound also gives a lower bound on $P_{GU}(n,k)$ as well.

\begin{prop}\label{prop:lower-GU-bound}
    The optimal success probability of discriminating $k$ copies of $n>k$ GU states is lower bounded by 
    \bea
    P_{GU}(n,k) \geq P_{CGU}(n,k) \geq \frac{2^{2k}}{n} \binom{2k}{k}^{-1} =: \frac{h(k)}{n}
    \eea
\end{prop}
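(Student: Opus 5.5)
The plan is to compute $P_{CGU}(n,k)$ exactly through the spectrum of the Gram matrix $G^{(k)}$ and then bound the resulting closed form with an elementary inequality. The first inequality, $P_{GU}(n,k)\geq P_{CGU}(n,k)$, holds simply because CGU states are a particular GU ensemble. For the second, I would use \Cref{eq:success-prob-for-GU-formula}, which by optimality of the PGM for GU states gives the optimal value. It says $P_{CGU}(n,k)=\sum_i (\sqrt{G^{(k)}})_{ii}^2$ with $G^{(k)}_{ij}=\frac1n\cos^k\!\big(\tfrac{\pi}{n}(i-j)\big)$.

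\textbf{Diagonalizing $G^{(k)}$.} The matrix $G^{(k)}$ is a Toeplitz matrix whose entries depend only on $i-j$. Shifting $i-j$ by $n$ multiplies the entry by $(-1)^k$, so $G^{(k)}$ is circulant for even $k$ and skew-circulant for odd $k$. To exploit this I would expand
\[
\cos^k\theta = 2^{-k}\sum_{j=0}^k \binom{k}{j} e^{\complexi (2j-k)\theta}.
\]
Plugging in $\theta=\pi(i-j')/n$ writes $G^{(k)}$ as a sum $\frac{1}{n}\sum_{j} 2^{-k}\binom{k}{j} v_j v_j^\dagger$, where $(v_j)_i = e^{\complexi\pi (2j-k) i/n}$. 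The frequencies $2j-k$, for $j=0,\dots,k$, are pairwise distinct modulo $2n$ precisely when $n>k$. Since they all have the same parity, the vectors $v_j$ are then mutually orthogonal, each with norm squared $n$. Hence the nonzero eigenvalues of $G^{(k)}$ are $a_j:=2^{-k}\binom{k}{j}$, $j=0,\dots,k$, which sum to $1$ as they should.

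\textbf{Closed form for $P_{CGU}$.} Because $|(v_j)_i|^2=1$ for every $i$, all diagonal entries of $\sqrt{G^{(k)}}$ are equal to $\frac1n\sum_j\sqrt{a_j}$. Therefore
\[
P_{CGU}(n,k)=\frac1n\Big(\sum_{j=0}^k \sqrt{a_j}\Big)^2 .
\]
This also reproves \Cref{prop:GU-prop-to-1-n}. As a consistency check, $k=2$ gives $\tfrac14(2+\sqrt2)^2=\tfrac32+\sqrt2$, matching \Cref{table:values-f(k)}.

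\textbf{The inequality.} Write $a_j=a_j^{1/3}a_j^{2/3}$ and apply Hölder with exponents $3/2$ and $3$:
\[
1=\sum_j a_j\leq \Big(\sum_j a_j^{1/2}\Big)^{2/3}\Big(\sum_j a_j^2\Big)^{1/3},
\]
which yields $\big(\sum_j\sqrt{a_j}\big)^2\geq \big(\sum_j a_j^2\big)^{-1}$. By Vandermonde's identity, $\sum_j a_j^2=4^{-k}\sum_j\binom{k}{j}^2=4^{-k}\binom{2k}{k}$. This gives exactly $P_{CGU}(n,k)\geq \frac{2^{2k}}{n}\binom{2k}{k}^{-1}$.

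The main obstacle is the diagonalization step. One has to handle the sign twist for odd $k$ correctly and verify carefully that $n>k$ guarantees orthogonality of the $v_j$. Without that orthogonality, eigenvalues would merge and the clean spectrum $\{2^{-k}\binom kj\}$ would fail. The Hölder step is routine once the spectrum is in hand.
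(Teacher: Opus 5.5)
Your proof is correct, but it takes a different route from the paper's.

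\textbf{The paper's route.} It never diagonalizes $G^{(k)}$. It quotes Montanaro's general bound $\sum_i(\sqrt{G})_{ii}^2\geq\sum_i (G_{ii})^3/(G^2)_{ii}$. It then only needs the diagonal of $G^2$, namely $\sum_j |G^{(k)}_{ij}|^2=\frac{1}{n^2}\sum_j\cos^{2k}\big(\tfrac{\pi}{n}(i-j)\big)=\frac{1}{n\,4^{k}}\binom{2k}{k}$. This it obtains from power reduction and the vanishing of roots-of-unity sums when $n>k$.

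\textbf{Your route.} You compute the full spectrum, $a_j=2^{-k}\binom{k}{j}$, with eigenvectors of constant modulus, and apply H\"older to the eigenvalues.

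\textbf{How they relate.} Underneath, both arguments use the same inequality. Montanaro's bound follows from exactly your H\"older step applied to the spectral weights of a single diagonal entry. In your notation $((G^{(k)})^2)_{ii}=\frac1n\sum_j a_j^2$, so your Vandermonde identity is the paper's trigonometric sum in another form.

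\textbf{What each buys.} The paper's route is shorter given the cited lemma and needs no eigenvector information. Yours is self-contained and gives strictly more: the exact value $P_{CGU}(n,k)=\frac{1}{n\,2^{k}}\big(\sum_{j=0}^{k}\binom{k}{j}^{1/2}\big)^2$ for every $n>k$. This proves \Cref{prop:GU-prop-to-1-n} in a few lines, whereas the paper uses a long appendix and remarks that a closed form for $f(k)$ seems hard to obtain. It reproduces every entry of \Cref{table:values-f(k)}. It also shows how loose the bound is: $f(k)\sim\sqrt{2\pi k}$, while $h(k)\sim\sqrt{\pi k}$.
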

\begin{proof}
    From \cite{Montanaro_2007} we take that the success probability of discriminating random, i.e.\ uniformly distributed, states is lower bounded by
    \begin{equation}
        P_{CGU}(n,1) = \sum_{i=1}^n (\sqrt{G^{(1)}})_{ii}^2 \geq \sum_{i=1}^n \frac{(G^{(1)}_{ii})^3}{\sum_{j=1}^n |G^{(1)}_{ij}|^2} = \frac{1}{n} \sum_{i=1}^n \frac{1}{\sum_{j=1}^n \lvert \ip{\psi_i}{\psi_j} \rvert^2}.
    \end{equation}
    In our case the denominator evaluates to:
    \begin{equation}
        \sum_{j=1}^n \lvert \ip{\psi_i}{\psi_j} \rvert^2 = \frac{n}{2} + \frac{1}{2} \sum_{j=1}^n \cos \left( \frac{2 \pi}{n}(i-j) \right)
    \end{equation}
    One realizes that the latter sum is the real part of $\sum_{j=1}^n \exp[{\frac{2\pi i}{n}(\Tilde i - j)}]$, which can be shown to equal 0 using the periodicity of the complex exponential function if $n > 1$.
    Imaginary and real part are hence zero and we are thus left with $P_{GU}(n,1) \geq \frac{2}{n}$, which is twice as good as guessing. Now, let us see what happens if we add more copies.
    Here, calculations from the proof of \Cref{prop:GU-prop-to-1-n} can help us out to show that
    \begin{align}
        \sum_{j=1}^n \lvert \ip{\psi_i}{\psi_j}^{\otimes k}\rvert^2 &= \sum_{j=1}^n \cos \left( \frac{2 \pi}{n}(i-j) \right)^{2k} \\
        &=\sum_{j=1}^n \left[\frac{1}{2^{2k}} {2k \choose k} + \frac{2}{2^{2k}} \sum_{s=0}^k {2k \choose s} \cos \left( \frac{2 \pi}{n}(k-s) \right)\right] \\
        &= \frac{n}{2^{2k}} {2k \choose k},
    \end{align}
    from which it then follows that
    \begin{align}
        P_{CGU}(n,k) \geq \frac{1}{n} \sum_{i=1}^n \frac{2^{2k}}{n} {2k \choose k}^{-1} = \frac{2^{2k}}{n} {2k \choose k}^{-1} \, .
    \end{align}
    Finally, we define $h(k) \coloneqq 2^{2k} {2k \choose k}^{-1}$.
\end{proof}

\begin{figure}[!t]
    \centering
    \includegraphics[width=.8\textwidth]{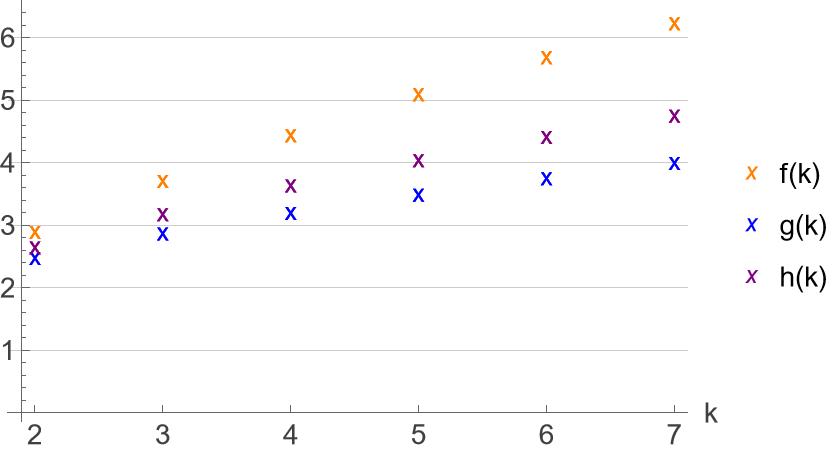}
    \caption{The gap between $f(k)$ and $g(k)$ increases with the number of copies. Since both functions are monotonically increasing, this suggests a quantum advantage. }
    \label{fig:f-vs-g}
\end{figure}

This lower bound will do the job: It enables us to show analytically a gap between discriminating states in classical bit theory and GU qubit states.
\begin{thm}\label{thm:quantum-advantage-by-GU}
    For all $n >k$ there exists $k$ copies from an ensemble of $n$ qubit states such that they have a higher average success probability in minimal error discrimination than $k$ copies from any ensemble of $n$ bit states. In particular, 
    \begin{align}
        P_q(n,k) \geq P_{GU}(n,k) \geq P_{CGU}(n,k) > \Pb(n,k)
    \end{align} for all $n>k$.
\end{thm}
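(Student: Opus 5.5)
The chain of inequalities reduces to a comparison of two explicit functions of $k$. The first two inequalities, $P_q(n,k)\geq P_{GU}(n,k)\geq P_{CGU}(n,k)$, hold by inclusion: CGU ensembles are particular GU ensembles, which are particular qubit ensembles, and for GU states the PGM is optimal. For the strict inequality I combine \Cref{prop:lower-GU-bound}, which gives $P_{CGU}(n,k)\geq h(k)/n$ with $h(k)=4^k\binom{2k}{k}^{-1}$, and \Cref{prop:classical-n-k-bound}, which gives $\Pb(n,k)\leq g(k)/n$ with
\begin{equation*}
g(k)=2+\sum_{j=1}^{k-1}\binom{k}{j}\left(\tfrac{j}{k}\right)^j\left(1-\tfrac{j}{k}\right)^{k-j}.
\end{equation*}
Both bounds hold for $n>k$. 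It therefore suffices to prove $h(k)>g(k)$, independently of $n$.

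This cannot hold for $k=1$, since $h(1)=g(1)=2$. The theorem must therefore be read for $2\leq k<n$, as in Result A. I would state this restriction explicitly.

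\textbf{Upper bound on $g(k)$.} I would use Robbins' form of Stirling's formula,
\begin{equation*}
\sqrt{2\pi m}\,(m/e)^m e^{1/(12m+1)}\leq m!\leq \sqrt{2\pi m}\,(m/e)^m e^{1/(12m)}.
\end{equation*}
It gives, for $1\leq j\leq k-1$,
\begin{equation*}
\binom{k}{j}\left(\tfrac{j}{k}\right)^j\left(1-\tfrac{j}{k}\right)^{k-j}\leq \sqrt{\tfrac{k}{2\pi j(k-j)}}.
\end{equation*}
Here the exponential correction $e^{1/(12k)-1/(12j+1)-1/(12(k-j)+1)}$ is at most $1$, because $12j+1\leq 12k$.

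The function $x\mapsto (x(k-x))^{-1/2}$ is convex on $(0,k)$. Each summand is therefore at most the integral of this function over $[j-\tfrac12,j+\tfrac12]$. Summing over $j$ covers $[\tfrac12,k-\tfrac12]\subset[0,k]$, so the sum is at most $\int_0^k \frac{dx}{\sqrt{x(k-x)}}=\pi$. This yields
\begin{equation*}
g(k)\leq 2+\sqrt{\pi k/2}.
\end{equation*}

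\textbf{Lower bound on $h(k)$.} I would use the standard central binomial estimate $\binom{2k}{k}\leq 4^k/\sqrt{\pi k}$, which again follows from Robbins' bounds or from a Wallis-product argument. It gives $h(k)\geq\sqrt{\pi k}$.

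\textbf{Comparison.} It then suffices that $\sqrt{\pi k}\,(1-1/\sqrt2)>2$, which holds for all $k\geq 15$.

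\textbf{Main obstacle: small $k$.} The remaining range $2\leq k\leq 14$ is not covered by the crude asymptotic comparison, and this is where I expect the difficulty. The most direct approach is to evaluate $h(k)$ and $g(k)$ exactly for $k=2,\dots,14$; both are explicit rational numbers, so this is a finite computation. For example, $h(2)=8/3>5/2=g(2)$ and $h(3)=16/5>3+\tfrac{2\cdot 4}{9}\cdot\tfrac{3}{3}\cdot\ldots$, and the remaining cases are of the same kind. If a cleaner argument is preferred, I would instead try to sharpen the integral comparison. The endpoint terms $j=1$ and $j=k-1$ carry most of the slack, and treating them separately should lower the threshold $k\geq 15$ considerably. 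Either way, once $h(k)>g(k)$ is established for every $k\geq 2$, dividing by $n$ gives $P_{CGU}(n,k)\geq h(k)/n>g(k)/n\geq\Pb(n,k)$ for all $n>k\geq 2$.
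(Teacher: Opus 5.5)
Your proposal is correct and is essentially the paper's proof. You reduce to $h(k)>g(k)$, bound $g(k)$ above via Robbins' inequality and $h(k)$ below by $\sqrt{\pi k}$, and treat large $k$ asymptotically and small $k$ by exact evaluation. Two of your refinements tighten the paper's argument. Noting that the Robbins correction factor is at most $1$, and using the midpoint (convexity) comparison against $\int_0^k dx/\sqrt{x(k-x)}=\pi$, give $g(k)\le 2+\sqrt{\pi k/2}$ and the lower threshold $k\ge 15$ instead of $k\ge 25$. This is also more careful than the paper's comparison of the sum with $\int_1^{k-1}$, which fails as written (already for $k=2$), although the paper's final bound $\pi$ is correct.

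You are right to exclude $k=1$. In fact the statement itself, not only the comparison of bounds, fails there, because $P_q(n,1)\le 2/n=\Pb(n,1)$ by the basic decoding theorem; the paper's step ``check $g(k)<h(k)$ for $k<25$'' silently includes this case. Finally, your garbled small-$k$ example should read $g(3)=2+\tfrac{8}{9}=\tfrac{26}{9}<\tfrac{16}{5}=h(3)$.
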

\begin{proof}
 From \Cref{prop:classical-n-k-bound} we know that $P_{C}(n,k) \leq \frac{g(k)}{n}$ and \Cref{prop:GU-prop-to-1-n} and \Cref{prop:lower-GU-bound} together state that $P_{GU}(n,k)\geq P_{CGU}(n,k) = \frac{f(k)}{n} \geq \frac{h(k)}{n}$. For the function $g(k)$ related to $P_{C}(n,k)$ we can show a further upper bound:
 \begin{align}
     g(k) < 2+ 0.771\sqrt{\pi k} =: l(k)
 \end{align}
 for all $k \in \naturale$ (see \Cref{lem:super-class-bound} in the Appendix for the proof). Furthermore, in \Cref{lem:advantage-threshold} we show that $h(k)>l(k)$ for all $k\geq 25$. Thus, for $k\geq 25$ we get the following chain of inequalities:
 \begin{align}
     P_{GU}(n,k)\geq P_{CGU}(n,k) = \frac{f(k)}{n} \geq \frac{h(k)}{n} > \frac{l(k)}{n} > \frac{g(k)}{n} \geq \Pb(n,k).
 \end{align}
 For $k<25$ one can easily check by calculating explicitly that $g(k)<h(k)$ from which the statement follows. A plot of this can be found in \Cref{fig:h_vs_g}.
 \end{proof}
 
We thus have a quantum advantage even though we do not know the optimal quantum strategy. Only considering 3 CGU states, we retrieve the well known trine state ensemble \Cref{eq:trine-states}. Since $n=3$ is relatively small here, we are able to give the exact success probability for all $k$.

\begin{prop}\label{prop:GU-3-k-exact}
    The optimal success probability for discriminating $k$ copies of the trine states is given as:
    \begin{equation}
        P_{CGU}(3,k) =\frac{5+2^{1-k}+4 \sqrt{1+2^{-k}}\sqrt{2^{-k}(-2+2^k)}}{9}
    \end{equation}
\end{prop}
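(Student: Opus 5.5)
The plan is to use the Gram-matrix formula \Cref{eq:success-prob-for-GU-formula}, $P_{GU}(n,k)=\sum_i (\sqrt{G^{(k)}})_{ii}^2$, together with the optimality of the PGM for GU ensembles. This reduces the statement to diagonalizing a single $3\times 3$ matrix. Write $c:=2^{-k}$. For the trine (CGU with $n=3$) the pairwise inner products are $\pm\tfrac12$, so
\[
3G^{(k)}=\one_3+cX,
\]
where $X$ is symmetric with zero diagonal and off-diagonal entries $\pm 1$. Replacing a state $\ket{\psi_i}$ by $-\ket{\psi_i}$ does not change the ensemble $\{\ketbra{\psi_i}^{\otimes k}\}$. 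It conjugates $G^{(k)}$ by a diagonal sign unitary, which leaves both the spectrum and the diagonal of $\sqrt{G^{(k)}}$ unchanged. Hence $X$ only matters through the product $s\in\{\pm1\}$ of its three off-diagonal signs, and we may take $X=s(J-\one_3)$ up to such a switching, where $J$ is the all-ones matrix.

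Next, I compute the spectrum of $\one_3+cX$. For $s=+1$ the eigenvalues are $1+2c$ (once) and $1-c$ (twice). For $s=-1$ they are $1-2c$ (once) and $1+c$ (twice). All eigenvalues are nonnegative for $k\ge1$.

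The symmetry of the GU ensemble, namely the cyclic group acting transitively on the labels, forces all diagonal entries of $\sqrt{G^{(k)}}$ to be equal. Therefore
\[
P=3\bigl(\tfrac13\tr\sqrt{G^{(k)}}\bigr)^2=\tfrac{1}{9}\bigl(\tr\sqrt{3G^{(k)}}\bigr)^2 .
\]
This is also the form $\tfrac1n[\tr\sqrt{\varrho^{(k)}}]^2$ in \Cref{eq:success-prob-for-GU-formula}, since $\varrho^{(k)}$ and $G^{(k)}$ share their nonzero eigenvalues. For $s=-1$ the trace is $\sqrt{1-2c}+2\sqrt{1+c}$. Squaring gives
\[
P=\frac{1-2c+4(1+c)+4\sqrt{(1+c)(1-2c)}}{9}=\frac{5+2^{1-k}+4\sqrt{1+2^{-k}}\sqrt{2^{-k}(2^k-2)}}{9},
\]
which is exactly the claimed expression, using $1-2c=2^{-k}(2^k-2)$. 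For $s=+1$ the same computation gives $\tfrac19\bigl(5-2^{1-k}+4\sqrt{(1+2c)(1-c)}\bigr)$.

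The main obstacle is bookkeeping the sign $s$, which depends on the labelling convention and on the parity of $k$. With the angles $\pi i/3$ in \Cref{eq:GU-states}, the inner products are $\tfrac12,\tfrac12,-\tfrac12$. Raised to the power $k$ these give $s=-1$ for odd $k$, but all off-diagonal entries become positive for even $k$. As a cross-check, $k=2$ should reproduce \Cref{table:values-f(k)}: $f(2)/3=(\tfrac32+\sqrt2)/3$. This value agrees with the $s=+1$ expression and not with the displayed one. I would therefore first settle the sign carefully, either by fixing the convention for which the statement is meant or by splitting by parity: the displayed formula for odd $k$ and its $s=+1$ counterpart for even $k$. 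Only after that would I state the final formula.
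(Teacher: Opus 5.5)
Your route is the paper's own: write down the $3\times 3$ Gram matrix of the $k$-copy trine, find its spectrum, and evaluate $\sum_i(\sqrt{G^{(k)}})_{ii}^2$. Your sign bookkeeping is correct, and what it uncovers is an error in the paper's proof, not a gap in yours. The paper takes $G^{(k)}=\frac13\bigl(\one_3-2^{-k}(J-\one_3)\bigr)$ for every $k$. But the off-diagonal entries are $\frac13\langle\psi_i|\psi_j\rangle^k$ with $\langle\psi_i|\psi_j\rangle\in\{\frac12,\frac12,-\frac12\}$, so that matrix is correct (up to conjugation by a diagonal sign matrix) only for odd $k$. For even $k$ all off-diagonal entries are $+2^{-k}/3$, and the optimal value is $\frac19\bigl(5-2^{1-k}+4\sqrt{(1+2^{1-k})(1-2^{-k})}\bigr)$. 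At $k=2$ this equals $\frac12+\frac{\sqrt2}{3}\approx 0.971$, which agrees with \Cref{table:values-f(k)} and with the double-trine value the paper itself quotes later. The displayed formula instead gives $\approx 0.962$ at $k=2$.

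You should therefore drop the hedge about ``fixing the convention for which the statement is meant''. No labelling or phase convention can rescue the displayed formula for even $k$. The reason is that $s$ is the sign of $\langle\Psi_1|\Psi_2\rangle\langle\Psi_2|\Psi_3\rangle\langle\Psi_3|\Psi_1\rangle=(-\frac18)^k$, where $\ket{\Psi_i}=\ket{\psi_i}^{\otimes k}$, and this product is invariant under rephasing and relabelling. What can actually be proved is exactly your parity split: the displayed formula for odd $k$ (it gives $\frac23$ at $k=1$), and the $s=+1$ expression for even $k$.

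It is worth adding that $x\mapsto\sqrt{1+x}-\sqrt{1-x}$ is strictly convex on $[0,1]$ and vanishes at $0$. Hence $\sqrt{1+2c}+2\sqrt{1-c}>\sqrt{1-2c}+2\sqrt{1+c}$ for $0<c\le\frac12$. So for even $k$ the displayed expression is a strict lower bound on the true value, and \Cref{cor:trine-advantage} still holds.

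A minor point: negating $\ket{\psi_i}$ multiplies $\ket{\psi_i}^{\otimes k}$ by $(-1)^k$, so the switching should be phrased on the $k$-copy vectors. This is harmless, since for even $k$ no switching is needed.
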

\begin{proof}
    This is done by straightforward computation.
    We calculate the Gram matrix for $k$-copies of trine states \Cref{eq:trine-states}:
    \begin{equation}
        G^{(k)} = \frac{1}{3}\begin{pmatrix}
            1 & -2^{-k} & -2^{-k} \\
            -2^{-k} &1 &-2^{-k} \\
            -2^{-k} & -2^{-k} & 1
        \end{pmatrix}
    \end{equation}
    We find the eigenvalues of $G^{(k)}$
    \begin{equation}
        \lambda_1=\frac{1}{3} 2^{-k} \left(2^k-2\right), \, \lambda_2=\frac{1}{3} 2^{-k} \left(2^k+1\right), \, \lambda_3 =  \frac{1}{3} 2^{-k} \left(2^k+1\right)\
    \end{equation}
    and corresponding eigenvectors
    \begin{equation}
        v_1 = (1,1,1)^\intercal, \, v_2 = (-1,0,1)^\intercal, \, v_3 = (-1,1,0)^\intercal \, .
    \end{equation}
    Let $\Sigma$ be a diagonal matrix containing the eigenvalues $\lambda_i$ and set $B = (v_1 \, v_2 \, v_3)$. Then use \Cref{eq:success-prob-for-GU-formula} to find:
    \begin{equation}
        P_{CGU}(3,k) = \sum_{i=1}^3 (\sqrt{G^{(k)}})_{ii}^2 = \sum_{i=1}^n (B \sqrt{\Sigma} B^*)_{ii}^2 = \frac{5+2^{1-k}+4 \sqrt{1+2^{-k}}\sqrt{2^{-k}(-2+2^k)}}{9}
    \end{equation}
\end{proof}

\begin{figure}[!t]
    \centering
    \includegraphics[width=0.9\linewidth]{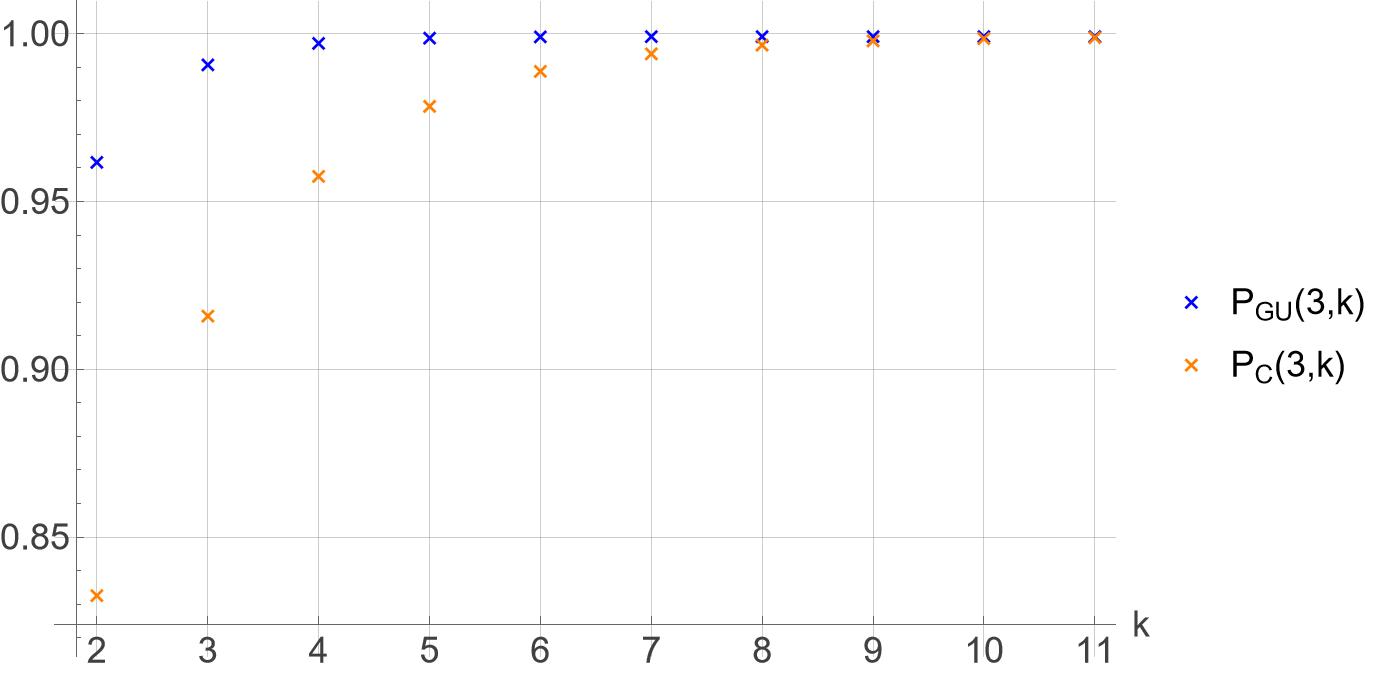}
    \caption{The success probability for multi-copy discrimination of trine states scales exponentially in both, quantum as well as classical, theories. However, it takes about $k=11$ copies for classical theory to be practically 1, whereas $k=5$ copies are sufficient for quantum theory to achieve effectively success probability 1.}
    \label{fig:Qtrine-vs-Ctrine}
\end{figure}
 We note that, as expected, the success probability asymptotically approaches 1. However, we see that it approaches 1 faster than the optimal success probability for bit as given in \Cref{prop:classical-3-k-exact}.
 \begin{cor}\label{cor:trine-advantage}
  For all $k>1$ there exists $k$ copies from an ensemble of $3$ qubit states such that they have a higher average success probability in minimal error discrimination than $k$ copies from any ensemble of $3$ bit states. In particular, 
     \begin{align}
         P_q(n,k) \geq  P_{GU}(3,k) \geq P_{CGU}(3,k) > P_{C}(3,k) 
     \end{align}
     for all $k>1$.
 \end{cor}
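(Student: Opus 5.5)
The plan is to compare the two closed-form expressions directly, namely \Cref{prop:GU-3-k-exact} for the trine states and \Cref{prop:classical-3-k-exact} for the optimal bit strategy. The first two inequalities in the chain are immediate. $P_q(3,k)\geq P_{GU}(3,k)$ holds because GU qubit states form a particular choice of qubit ensemble. $P_{GU}(3,k)\geq P_{CGU}(3,k)$ holds because the trine ensemble is a particular GU ensemble, for which the PGM is optimal. All the work is therefore in the strict inequality $P_{CGU}(3,k) > \Pb(3,k)$ for every $k\geq 2$.

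To make this comparison tractable, substitute $x := 2^{-k}\in(0,\tfrac14]$. Then
\begin{align}
 9P_{CGU}(3,k) = 5 + 2x + 4\sqrt{(1+x)(1-2x)}, \qquad 9\Pb(3,k) = 9 - 6x .
\end{align}
The claim becomes $4\sqrt{(1+x)(1-2x)} > 4 - 8x$, that is, $\sqrt{(1+x)(1-2x)} > 1-2x$. For $x\in(0,\tfrac14]$ both sides are positive, so we may square. After squaring, the inequality reads $(1+x)(1-2x) > (1-2x)^2$. Dividing by $1-2x>0$, this is equivalent to $1+x > 1-2x$, i.e.\ $x>0$, which always holds.

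This gives strict inequality for all $k\geq 1$. For $k=1$ the two quantities are actually equal at $x=\tfrac12$, where both sides vanish. This is consistent with the known absence of a single-copy advantage and explains why the statement is restricted to $k>1$.

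The only point needing care is checking that squaring is legitimate, which requires $1-2x>0$. This holds precisely when $k\geq 2$, and it is also exactly where the boundary case $k=1$ is excluded. Beyond that, I expect no real obstacle: the proof reduces to a one-line algebraic verification once the substitution $x=2^{-k}$ is made. Before running the argument, I will double-check that the expression in \Cref{prop:GU-3-k-exact} simplifies to $\sqrt{(1+x)(1-2x)}$ under this substitution. Indeed, $\sqrt{1+2^{-k}}\sqrt{2^{-k}(2^k-2)} = \sqrt{(1+x)(1-2x)}$.
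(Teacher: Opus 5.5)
Your argument is the paper's argument. You insert the closed forms of \Cref{prop:classical-3-k-exact} and \Cref{prop:GU-3-k-exact}, reduce to $\sqrt{(1+x)(1-2x)}>1-2x$ with $x=2^{-k}$, square, and divide by $1-2x$. Your squaring step is actually cleaner than the paper's displayed chain, whose third line should read $1+2^{-k}$ rather than $1+2^{-k+1}$. One small slip: your argument gives strictness for $k\geq 2$, not ``for all $k\geq 1$'', as you yourself note at $x=\tfrac12$.

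There is, however, a problem you inherit from the paper, and your ``double-check'' does not catch it because it only verifies an algebraic rewriting. The closed form in \Cref{prop:GU-3-k-exact} is correct only for odd $k$.

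For the trine states, $\langle\psi_1|\psi_2\rangle=\langle\psi_1|\psi_3\rangle=\tfrac12$ while $\langle\psi_2|\psi_3\rangle=-\tfrac12$. Hence for even $k$ every off-diagonal entry of $G^{(k)}$ equals $+\tfrac13 x$. No rephasing can make all three negative, because $G^{(k)}_{12}G^{(k)}_{23}G^{(k)}_{31}$ is phase invariant. The eigenvalues are then $\tfrac13(1+2x)$ once and $\tfrac13(1-x)$ twice, so
\[
9P_{CGU}(3,k)=\bigl(\sqrt{1+2x}+2\sqrt{1-x}\bigr)^2=5-2x+4\sqrt{(1+2x)(1-x)}\qquad (k\ \text{even}).
\]
For $k=2$ this gives $\tfrac12+\tfrac{\sqrt2}{3}=f(2)/3$, in agreement with \Cref{table:values-f(k)}. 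The printed formula instead gives $\tfrac{11+2\sqrt{10}}{18}\approx 0.962$. So for even $k$ the inequality you prove is not about the true $P_{CGU}(3,k)$.

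The repair is immediate. For even $k$,
\[
9P_{CGU}(3,k)-9P_b(3,k)=4\bigl(\sqrt{(1+2x)(1-x)}-(1-x)\bigr).
\]
Since $1-x>0$, squaring reduces positivity to $1+2x>1-x$, i.e.\ $x>0$. For odd $k\geq 3$ your argument stands as written. With this parity split the corollary holds for every $k\geq 2$.
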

 \begin{proof}
     By using the expressions from \Cref{prop:classical-3-k-exact} and \Cref{prop:GU-3-k-exact} it can be straightforwardly seen that
     \begin{align}
          &  &  P_{C}(3,k)   & <  P_{CGU}(3,k)   \\
          &\Leftrightarrow & \quad 1-\frac{1}{3 \cdot 2^{k-1}} &< \frac{1}{9}\left( 5+2^{1-k}+4 \sqrt{1+2^{-k}}\sqrt{2^{-k}(-2+2^k)} \right)  \\
          &\Leftrightarrow & \quad  1-2^{-k+1} &< \sqrt{1+2^{-k}} \sqrt{1-2^{-k+1}} \\
          &\Leftrightarrow & \quad 1-2^{-k+1} &< 1+2^{-k+1} \\
          &\Leftrightarrow & \quad 0 &< 2^{-k+2}
     \end{align}
     which clearly holds for all $k>1$.
 \end{proof}
 The plot in \Cref{fig:Qtrine-vs-Ctrine} underlines this result.

\section{Comparing different measurement strategies for double trine}
\label{sec:measurement-settings}

We have seen that there are benefits of using qubits over bits in multi-copy state discrimination. 
Our next task is try to identify from what properties of qubit(s) do the benefits come from. 
In particular, we are interested whether the benefits arise strictly from the local properties of the single qubit system or from the global properties of the overall multiqubit system. 
In the multi-copy case we are restricted to product states so it is clear that entanglement of states is not the resource for the advantages. In order to do a proper comparison between qubit and bit and identify the source of the advantages, we examine some different measurement strategies. 

Let us briefly review different measurement operations that are investigated in this work. The general operation on the states is done by bringing them together and making a global operation on them. We denote $GLOBAL$ as a set of all such operations. An interesting measurement class is separable measurement, where each effect is represented by a separable operator. We denote a set of separable measurements as  $SEP$. On the other hand, \emph{local operations and classical communication (LOCC)} represents a general measurement strategy that is done locally; individual parties make measurements on their local subsystems via sequential measurements described by quantum instruments \cite{Davies:1970cf} and exchange classical information between measurement rounds. It is known that every $LOCC$ measurement is a separable measurement, but the converse does not hold \cite{Chitambar_2014}. 

To see where quantum benefit comes from, we put more restrictions on local 
operations, upto where the measurement strategy is identical to the one used in classical theory. For the discrimination of multipartite systems, one restricted operation of $LOCC$ is \textit{local adaptive strategy}, where each party makes exactly one local measurement based on the outcomes of the measurements of the previous parties. We denote such the set of such measurements as $AD$.
If in addition the communication is restricted to \textit{1 bit}, we write $AD_1$. \textit{Non-adaptive strategies} $NAD$ on the other hand, require every party to perform a fixed measurement on their system. Here, the measurement of each party can be chosen individually, i.e.\ they do not need to be identical across all parties. But once they are set, they are fixed. In classical theory, the measurement is represented as a single projective measurement followed by a postprocessing. This can be viewed as a local measurement strategy where each individual party makes identical measurements. We call this measurement strategy as \textit{local fixed measurement} $FIX$. In short, in general it holds that
\begin{align}
    FIX \subset NAD \subset AD \subset LOCC \subset SEP \subset GLOBAL \, . 
\end{align}

A particularly interesting ensemble is the double trine ensemble, $\{\ket{\psi_i}^{\otimes 2}\}_{i=1}^3$, where $\{\ket{\psi_i}\}_{i=1}^3$ forms a trine ensemble in \Cref{eq:trine-states}. 
We will now compare the different measurement strategies for the double trine ensemble. As the trine state (given by \Cref{eq:trine-states}) are GU-states, as we have seen in \Cref{cor:trine-advantage} they always give an advantage over any $n=3$ bit states for any $k>1$. Furthermore, there is strong numerical evidence suggesting that in fact the double trine states are the optimal states in qubit for $n=3$ and $k=2$. However, proving this analytically is still an open question.
\subsection{GLOBAL and SEP} Since the trine states are GU-states, the optimal global measurement for the double trine states is the PGM given by \Cref{eq:PGM-k} for $k=2$. As is seen from \Cref{table:values-f(k)} and \Cref{cor:trine-advantage}, we have that
\bea
P_{CGU}(3,2)=\frac{1}{2}+\frac{\sqrt{2}}{3} \approx 0.97
\eea
It was shown in \cite{eric_optimal_detec} that the optimal measurement can be chosen separable by mixing a singlet state and the PGM. Thus, for double trine we have that $P^{SEP}_{CGU}(3,2) = P^{GLOBAL}_{CGU}(3,2)$. Furthermore, in \cite{eric_optimal_detec} it was shown that the global optimal success probability cannot be attained by LOCC measurements. In the final stages of writing this manuscript we furthermore learned that a proof for optimality of the trine ensemble was reported in \cite{cha_2026_20586963}.

\subsection{Adaptive measurement strategy}
For $k=2$ the adaptive measurement strategy correspond to one-way LOCC: the first party, Alice, performs a measurement on her copy after which she sends her outcome to the second party, Bob, who can adapt his measurement based on Alice's outcome. Let us consider the following adaptive strategy for double trine: first Alice excludes one of the trine states using the antidiscriminating measurement
$M_i=\frac{2}{3}\ketbra{\psi_i^\perp}$,
where $\{\ket{\psi_i^\perp}\}_{i=1}^3$ are orthogonal to the trine states $\{\ket{\psi_i}\}_{i=1}^3$ defined in \Cref{eq:trine-states}. Next Alice sends her measurement result to Bob. This requires $\log 3$ bits of communication. Since Alice's measurement always excludes one of the states, 
Bob can make an optimal measurement for the remaining two states. Since Bob discriminates two states with prior probability $\frac{1}{2}$, the overall success probability is the Helstrom bound for two states and is given by $\frac{1}{2}+\frac{\sqrt{3}}{4}\approx 0.93$. It was shown in \cite{eric_optimal_detec} that this measurement strategy is actually optimal for one-way LOCC so that in fact
\bea
P^{AD}_{CGU}(3,2)=\frac{1}{2}+\frac{\sqrt{3}}{4}\approx 0.93 \, .
\eea
We note that this is still higher than the optimal success probability for bit $\Pb(3,2)=\frac{5}{6}$.

\subsection{Restricted adaptive measurement strategies}
The most restricted form of local adaptive strategy is when only one bit of classical communication is allowed. We show that even in this scenario, the double trine ensemble yields higher success probability than the classical protocol. 

The protocol starts with Alice making the $\sigma_x$ measurement. Let us denote $q_i(a)$ as the updated prior probability of $\varrho_i$ based on the measurement outcome of Alice, $a$. Alice's measurement outputs outcomes $a\in \{+,-\}$ from the measurement $\{\ketbra{+}, \ketbra{-}\}$. The updated priors are 
\bea
q_1(+)=\frac{1}{3}, q_2(+)=\frac{1}{6}(2+\sqrt{3}), q_3(+)=\frac{1}{6}(2-\sqrt{3}) 
\eea
and
\bea
q_1(-)=\frac{1}{3}, q_2(-)=\frac{1}{6}(2-\sqrt{3}), q_3(-)=\frac{1}{6}(2+\sqrt{3}) .
\eea
Bob then chooses the optimal measurement based on the updated priors. By using semidefinite programming (SDP), we find that the success probability $P$ using this strategy is now
\bea
P^{AD_1}_{CGU}(3,2) \geq P  \approx 0.8976
\eea
which is higher than the optimal success probability in classical theory, $\frac{5}{6}$. 
We remark that numerically it can be shown that the optimal success probability with one-bit of classical communication lies in $0.8976 \leq P^{AD_1}_{CGU}(3,2) \leq 0.9050$ \cite{Dutra_2026}. Furthermore, in \cite{Dutra_2026} it was also shown that for the non-adaptive strategy we have $0.8079 \leq P^{NAD}_{CGU}(3,2) \leq 0.8509$. Thus, for $NAD$ strategies it is no longer clear if we can beat the optimal bit success probability. 

Next we will take a deeper focus on the fixed measurement strategy in qubit as it is the most straightforward way to compare the power of qubit to bit.

\section{Local fixed measurements }
\label{sec:fixed-measurement-strategy}
In this section we want to examine if the quantum advantage persists if we restrict the setting to classical strategies.
So let us now focus on the fixed measurement strategy first for general $n,k \in \naturale$ in a $d$-dimensional quantum (or classical) theory. For any set of states $\{\varrho_i\}_{i=1}^n$ we define the fixed measurement strategy to consist of each of the $k$ parties applying the \emph{local optimal} discriminating  $n$-outcome measurement $M$ on their state after which they can jointly classically postprocess the measurement outcomes to make their guess. Thus, the local measurement $M$ satisfies
\begin{align}
     \sum_{i=1}^n \tr[\varrho_i M_i] = \max_{N} \sum_{i=1}^n \tr[\varrho_i N_i] \, .
\end{align}
In the fixed measurement strategy the overall discriminating measurement $M^{(k)}$ is of the form $M^{(k)}_{i_1, \ldots, i_k} = M_{i_1} \otimes \cdots \otimes M_{i_k}$ for all $(i_1, \ldots, i_k) \in [n]^k$.

Since the average success probability is a convex function of the measurement and since the set of stochastic postprocessings (with fixed size input and output) is a convex set, the success probability can be maximized by extreme postprocessings. Extreme postprocessings are simply the deterministic postprocessings and they can be characterized by some functions $f: [a] \to [b]$, where $a$ is the size of the input and $b$ is the size of the output. 
In the fixed measurement case with $k$ copies we are now postprocessing the measurement outcomes $(i_1, \ldots,i_k) \in [n]^k$ to some outcome $i \in [n]$. Thus, the optimal success probability reads as 
\begin{align}\label{eq:fix-measurement}
    P^{FIX}(d,n,k) &= \max_{\varrho, f} \frac{1}{n} \sum_{i=1}^n \tr\left[ \varrho_i^{\otimes k} \sum_{(i_1, \ldots, i_k) \in f^{-1}(i)} M_{i_1} \otimes \cdots \otimes M_{i_k}\right] \\
    &= \max_{\varrho, f} \ \frac{1}{n}\sum_{i_1=1}^n \cdots \sum_{i_k=1}^n \tr[\rho_{f(i_1, \ldots, i_k)}^{\otimes k} M_{i_1} \otimes \cdots \otimes  M_{i_k}],
\end{align}
where we are optimizing over all functions $f: [n]^k \to [n]$ and over all sets of $n$ $d$-dimensional quantum (or classical) states where in the optimization over the states we are also always choosing the optimal local measurement $M$ for the states. 

Our goal is to derive a general upper bound for $P^{FIX}(d,n,k)$  in terms of another optimization problem which we can explicitly solve when $d=2$, $n=3$ and $k=2$. However, first we need to simplify the optimization in \Cref{eq:fix-measurement}. Let us start with considering the optimal postprocessing. Indeed, we can show the following:

\begin{lem}\label{lem:fix-pp}
    The optimal postprocessing function $f: [n]^k \to [n]$ in \Cref{eq:fix-measurement} can be chosen such that
    \begin{itemize}
        \item[\textit{i)}] $f$ is permutation invariant: $f(\pi(i_1, \ldots, i_k)) = f(i_1, \ldots,i_k)$ for all permutations $\pi$, and
        \item[\textit{ii)}] $f(i,\ldots,i) = i$ for all $i \in [n]$. 
    \end{itemize}
\end{lem}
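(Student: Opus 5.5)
The plan is to fix the states $\{\varrho_i\}_{i=1}^n$ together with their locally optimal measurement $M$, and then to describe the optimal postprocessing explicitly. For a fixed outcome string $\mathbf{i}=(i_1,\ldots,i_k)\in[n]^k$, the objective in \Cref{eq:fix-measurement} is a sum over strings. The string $\mathbf{i}$ contributes $\tr[\varrho_{f(\mathbf{i})}^{\otimes k} M_{i_1}\otimes\cdots\otimes M_{i_k}] = \prod_{l=1}^k \tr[\varrho_{f(\mathbf{i})} M_{i_l}]$. Since $f$ can be chosen independently on each string, an optimal $f$ is any maximum-likelihood rule, namely
\begin{align}
f(\mathbf{i}) \in \arg\max_{j\in[n]} L_j(\mathbf{i}), \qquad L_j(\mathbf{i}) := \prod_{l=1}^k \tr[\varrho_j M_{i_l}] .
\end{align}
Any function with this property attains the maximum over $f$ for these states. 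Hence it suffices to show that among the maximum-likelihood rules there is one satisfying i) and ii). Since the outer maximization over states is unaffected, this proves the lemma.

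For i), I will use that $L_j(\mathbf{i})$ is a product of factors, so it depends only on the multiset $\{i_1,\ldots,i_k\}$. In particular $L_j(\pi(\mathbf{i}))=L_j(\mathbf{i})$ for every permutation $\pi$, and therefore the set of maximizers $\arg\max_j L_j(\mathbf{i})$ is permutation invariant. I will then fix a tie-breaking rule that depends only on the multiset, for instance choosing the smallest maximizing index. With this rule $f$ is permutation invariant.

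For ii), I need $i\in\arg\max_j L_j(i,\ldots,i)$, that is, $\tr[\varrho_i M_i]^k\ge \tr[\varrho_j M_i]^k$ for all $j$. Since all traces are nonnegative, this reduces to $\tr[\varrho_j M_i]\le\tr[\varrho_i M_i]$. This is the step where local optimality of $M$ enters, and I expect it to be the only nontrivial point. I will invoke the standard (Holevo--Yuen--Kennedy--Lax) optimality conditions for minimum-error discrimination. For an optimal $M$, the operator $Y=\sum_m \varrho_m M_m$ satisfies $Y\ge \varrho_j$ for all $j$ and $YM_i=\varrho_iM_i$. These give
\begin{align}
\tr[\varrho_j M_i]\le \tr[Y M_i] = \tr[\varrho_i M_i].
\end{align}
Here the inequality uses $M_i\ge 0$, and in the classical case the same conditions follow from the quantum ones. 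In the degenerate case $M_i=0$, the string $(i,\ldots,i)$ contributes zero for every choice of value, so setting $f(i,\ldots,i)=i$ costs nothing.

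Finally, I will refine the tie-breaking so that both properties hold at once. On the all-equal multiset $\{i,\ldots,i\}$ I choose the maximizer $i$, which exists by the previous step, and on every other multiset I choose the smallest maximizer. This rule depends only on the multiset, so i) holds, ii) holds by construction, and $f$ remains maximum-likelihood, hence optimal.
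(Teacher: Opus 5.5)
Your proof is correct and has the same skeleton as the paper's. You take the maximum-likelihood postprocessing, obtain i) because the likelihood $\prod_l \tr[\varrho_j M_{i_l}]$ depends only on the multiset of outcomes, and obtain ii) from the inequality $\tr[\varrho_j M_i]\le\tr[\varrho_i M_i]$.

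The real difference is how you get that inequality. You invoke the Holevo--Yuen--Kennedy--Lax optimality conditions ($Y\ge\varrho_j$, $YM_i=\varrho_iM_i$). The paper instead uses an elementary merging argument: if $\tr[\varrho_x M_x]<\tr[\varrho_y M_x]$, then moving $M_x$ into outcome $y$ (setting the $x$-th effect to zero) gives a strictly larger success probability, contradicting local optimality.

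The paper's argument needs no duality theory. It works verbatim in any GPT, as the remark after the lemma notes, and it treats bit and qubit uniformly. Your route is quantum-specific. Your one-line claim that the classical case follows from the quantum one is true, but it needs an extra observation: for diagonal states, replacing any POVM by its diagonal part preserves all outcome probabilities, so a measurement optimal among diagonal POVMs is optimal among all POVMs. The paper itself mentions your route as an alternative in that same remark.

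In your favour, your explicit tie-breaking rule (take $i$ on the constant string and the smallest maximizer elsewhere) makes it clearer that i) and ii) hold simultaneously than the paper's informal ``consistent selection'' across permutations. Your separate treatment of $M_i=0$ is unnecessary, since the inequality then holds trivially as $0\le 0$.
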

\begin{proof}
For all sets of states $\{\varrho_i\}_{i=1}^n$ and any $n$-outcome measurement $M$, let us define a (class of) postprocessing map(s) $f^*:[n]^k \to 2^{[n]}$ by 
\begin{align}\label{eq:optimal-pp}
f^*(i_1, \ldots, i_k)=\arg\max_x \tr[\rho_x^{\otimes k}   M_{i_1} \otimes \cdots \otimes  M_{i_k}]\, .
\end{align}
Strictly speaking for states $\{\varrho_i\}_{i=1}^n$ and measurement $M$ we have that $f^*(i_1, \ldots, i_k)$ returns a subset $X_{i_1, \ldots,i_k} \subseteq [n]$. Now clearly
\begin{align}
\tr[\rho_{f(i_1, \ldots, i_k)}^{\otimes k} M_{i_1} \otimes \cdots \otimes  M_{i_k}] \leq \tr[\rho_{x}^{\otimes k} M_{i_1} \otimes \cdots \otimes  M_{i_k}], 
\end{align}
for all $x \in X_{i_1, \ldots,i_k}$ for all $(i_1, \ldots, i_k) \in [n]^k$ and for any other postprocessing map $f$. Thus, for all $(i_1, \ldots, i_k) \in [n]^k$ we can fix any $x_{i_1, \ldots,i_k} \in X_{i_1, \ldots,i_k}$ and (with slight abuse of notation) set $f^*(i_1, \ldots, i_k) = x_{i_1, \ldots,i_k}$ so that then we get a class of maps by selecting different $x_{i_1, \ldots,i_k} \in X_{i_1, \ldots,i_k}$ for all $(i_1, \ldots, i_k) \in [n]^k$. Clearly by definition we see that all these maps are optimal and equally good for the fixed measurement strategy when using the states $\{\rho_i\}_{i=1}^n$ and measurement $M$.

\textit{i)} Let us denote the symmetric group of all permutations of $k$ elements by $\mathfrak{S}_k$. Since as sets $f^*(\pi(i_1, \ldots, i_k))=f^*(i_1, \ldots, i_k)$ for all $(i_1, \ldots, i_k) \in [n]^k$  and for all permutations $\pi \in \mathfrak{S}_k$ and since all the maps defined through $f^*$ are optimal, we can only consider those class of maps where the outcomes $\pi(i_1, \ldots, i_k)$ and $(i_1, \ldots, i_k)$ are in the same preimage $(f^*)^{-1}$ for all permutations $\pi \in \mathfrak{S}_k$. By the definition of $f^*$ the success probability for these maps will be still just as good.

\textit{ii)} Since we assume that the global measurement consists of each party measuring the fixed local optimal measurement $M$, the measurement $M$ satisfies the local optimality condition (when we have equal priors)
\begin{equation} \label{eq:local-optimality}
    \sum_{i=1}^n \tr[\varrho_i M_i] \geq \sum_{i=1}^n \tr[\varrho_i \tilde M_i] \, , 
\end{equation}
where $\tilde M = \{\tilde M_i\}_{i=1}^n$ is any measurement.
Now assume that there exist $x,\, y \in [n]$ such that $\tr[\varrho_x M_x] < \tr[\varrho_y M_x]$. That is to say, the state $\varrho_y$ maximizes the outcome $M_x$.
This allows us to look at a reduced discrimination problem, where we ignore the state $\varrho_x$. Mathematically we merge the two outcomes and write down a new measurement $\hat M$:
\begin{equation}
    \hat M_i = \begin{cases}
        M_i : i \notin \{x,y\} \\
        0 : i = x \\
        M_{xy} : i=y
    \end{cases}, \text{ where }  M_{xy} = M_x + M_y \, .
\end{equation}
It can easily be verified that $\hat M$ is a valid measurement. But now, according to the above, we must have the following.
\begin{align}
    \sum_{i=1}^n \tr[\varrho_i \hat M_i] &= \sum_{i \in [n]\setminus \{x,y\}} \tr[\varrho_i M_i] + \tr[\varrho_y M_x] + \tr[\varrho_y M_y] + \tr[\varrho_x \,  0] \\
    &> \sum_{i \in [n]\setminus \{x,y\}} \tr[\varrho_i M_i] + \tr[\varrho_x M_x] + \tr[\varrho_y M_y] = \sum_{i=1}^n \tr[\varrho_i M_i]
\end{align}
This clearly violates our assumption of local optimality \Cref{eq:local-optimality} and therefore $\tr[\varrho_j M_j] \geq \tr[\varrho_i M_j]$ for all $i,j \in [n]$ must hold.
Thus, it follows that for all $i \in [n]$ we have that
\begin{align}
    \max_x \tr[\varrho_x M_i]^k \leq \tr[\varrho_i M_i]^k
\end{align}
so that by \Cref{eq:optimal-pp} we can choose $f^*$ such that $f^*(i,\ldots,i) = i$ for all $i \in [n]$.
\end{proof}
In fact, it is not hard to see that indeed $f^*(i,\ldots,i) = i$ for all $i \in [n]$ if and only if \Cref{eq:local-optimality} is satisfied, where $f^*$ is the optimal postprocessing defined in \Cref{eq:optimal-pp}.
\begin{rem}
    Note that \Cref{lem:fix-pp} could also be stated for all general probabilistic theories (GPTs) by writing $M_i(s_j)$ instead of $\tr[\varrho_j M_i]$ (see \Cref{sec:GPTs} for more in depth discussion about GPTs). In quantum theory however, we could derive the condition $\tr[\varrho_i M_i] \geq \tr[\varrho_j M_i]$ for all $i,j \in [n]$ also via exploiting the known optimality conditions \cite{Barnett_2009}:
    \begin{gather}
        M_i(\varrho_i - \varrho_j)M_j = 0 \quad \forall i,j \in [n], \\
        \sum_i \varrho_i M_i - \varrho_j \geq 0 \quad \forall j \in [n]
    \end{gather}
\end{rem}

Let us now derive the desired general upper bound for $P^{FIX}(d,n,k)$. For any set of states $\{\varrho_i\}_{i=1}^n$ and any $n$-outcome measurement we denote by $C_{\varrho,M}$ the \emph{communication matrix} \cite{Heinosaari_2020} implemented by the prepare-and-measure set-up $(\varrho,M)$: the communication matrix is simply defined as the row-stochastic matrix consisting of the conditional outcome probabilities of the measurement $M$ when the system is in one of the states $\varrho_i$, i.e., $(C_{\varrho,M})_{ij} = \tr[\varrho_i M_j]$ for all $i,j \in [n]$. In terms of the communication matrix $C_{\varrho,M}$ we can write $P^{FIX}(n,k)$ as
\begin{align}\label{eq:fix-cm}
     P^{FIX}(d,n,k) &= \max_{\varrho} \max_{f \in \mathcal{F}} \frac{1}{n} \sum_{i=1}^n \sum_{(i_1, \ldots, i_k) \in f^{-1}(i)} (C_{\varrho,M})_{i i_1} \cdots (C_{\varrho,M})_{i i_k} = \max_{\varrho} \max_{f \in \mathcal{F}} P_f(C_{\varrho,M}),
\end{align}
where we have defined $P_f(C_{\varrho,M})= \frac{1}{n} \sum_{i=1}^n \sum_{(i_1, \ldots, i_k) \in f^{-1}(i)} (C_{\varrho,M})_{i i_1} \cdots (C_{\varrho,M})_{i i_k}$ and $\mathcal{F}$ as the set of postprocessing functions which satisfy the conditions of \Cref{lem:fix-pp} and again, when optimizing over the states we are always considering the local optimal discriminating measurement $M$ for the states.

A general upper bound can be achieved by relaxing the conditions of the state (and measuremeent) optimization by not optimizing over the quantum (or classical) implementation $(\varrho,M)$ of the communication matrix but rather optimizing over general row-stochastic matrices themselves with an additional constraint that is satisfied both in $d$-dimensional quantum and classical theories. 
This constraint can be stated in terms of the \emph{information storability} of a communication matrix: it is defined as a function $\mathfrak{IS}$ from the set of row-stochastic matrices (of fixed size) to the set of real numbers $\real$ as $\mathfrak{IS}(C) = \sum_j \max_i C_{ij}$. 
Let $C$ be a communication matrix $C$ that can be implemented with a $d$-dimensional quantum (or classical) theory via some prepare-and-measure set-up $(\varrho,M)$. From the basic decoding theorem \cite{Schumacher_Westmoreland_2010} it follows that $\mathfrak{IS}(C) \leq d$. We can use this as an additional constraint in the optimization and obtain the following upper bound:
\begin{equation}
\begin{aligned}\label{eq:fix-cm-ub}
     P^{FIX}(d,n,k) &= \max_{\varrho} \max_{f \in \mathcal{F}} \frac{1}{n} \sum_{i=1}^n \sum_{(i_1, \ldots, i_k) \in f^{-1}(i)}(C_{\varrho,M})_{i i_1} \cdots (C_{\varrho,M})_{i i_k} \\
     &\leq  \max_{
         C: \ \mathfrak{IS}(C) \leq d }  \max_{f \in \mathcal{F}} \frac{1}{n} \sum_{i=1}^n \sum_{(i_1, \ldots, i_k) \in f^{-1}(i)} C_{i i_1} \cdots C_{i i_k} =: P^{FIX}_{d-\mathrm{like}}(n,k)
\end{aligned}
\end{equation}

In the case when $d=2$, $n=3$ and $k=2$ we can actually list all the inequivalent postprocessings and calculate the upper bound. This allows us to show the following.
\begin{prop} \label{prop:local-fixed-bound}
Perfect discrimination of two copies from a 3-state ensemble is impossible in bit-like theories. In particular, it holds that
\begin{equation}
   P^{FIX}_{2-\mathrm{like}}(3,2) = \frac{8}{9} \, .
\end{equation} 
\end{prop}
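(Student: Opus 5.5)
The plan is to prove two inequalities: a lower bound $P^{FIX}_{2-\mathrm{like}}(3,2)\geq\frac{8}{9}$ from an explicit feasible pair $(C,f)$, and the matching upper bound by a case analysis over the postprocessings in $\mathcal{F}$. The impossibility of perfect discrimination then follows at once, since $\frac{8}{9}<1$.

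\emph{Lower bound.} For $k=2$ and $n=3$, \Cref{lem:fix-pp} forces $f(i,i)=i$ and makes $f$ a function on the three unordered off-diagonal pairs $\{1,2\},\{1,3\},\{2,3\}$. The success functional then reads
\begin{align}
P_f(C)=\frac{1}{3}\sum_{i=1}^3\Big(C_{ii}^2+2\sum_{\{a,b\}:\,f(a,b)=i}C_{ia}C_{ib}\Big).
\end{align}
I would take the cyclic postprocessing $f(1,2)=1$, $f(2,3)=2$, $f(3,1)=3$. Let row $i$ of $C$ put weight $x$ on column $i$ and weight $1-x$ on column $i+1$ (indices mod $3$), and zero on the third column. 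Every column maximum is then $\max(x,1-x)$, so $\mathfrak{IS}(C)=3\max(x,1-x)$. The constraint $\mathfrak{IS}(C)\leq 2$ permits $x=\frac{2}{3}$. Each row contributes $x^2+2x(1-x)=1-(1-x)^2=\frac{8}{9}$, so $P_f(C)=\frac{8}{9}$.

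\emph{Upper bound, main case.} Write $d_i=C_{ii}$. Since $\max_i C_{ij}\geq C_{jj}$, the constraint gives $\sum_i d_i\leq\mathfrak{IS}(C)\leq 2$. Suppose every off-diagonal pair $\{a,b\}$ is sent by $f$ to one of its own elements $a$ or $b$. Then the pairs credited to row $i$ all contain $i$, so
\begin{align}
\text{row}_i\leq d_i^2+2d_i(1-d_i)=1-(1-d_i)^2.
\end{align}
The function $t\mapsto 1-(1-t)^2$ is concave and increasing on $[0,1]$. By Jensen, $P_f(C)\leq 1-(1-\bar d)^2$ with $\bar d=\frac{1}{3}\sum_i d_i\leq\frac{2}{3}$, hence $P_f(C)\leq\frac{8}{9}$. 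This covers the cyclic $f$ and all $f$ in which some row collects two pairs.

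\emph{Upper bound, remaining case (expected main obstacle).} The hard case is an $f$ sending some pair $\{a,b\}$ to the third label $c\notin\{a,b\}$. Row $c$ then gains $2C_{ca}C_{cb}$, which the diagonal bound does not control, while rows $a$ and $b$ each lose the cross term with the other. Up to the $\mathfrak{S}_3$ relabeling symmetry of the problem, only a few such $f$ remain, and I would treat them one by one.

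For these I would use the full set of column constraints $m_j:=\max_iC_{ij}\geq C_{ij}$ with $\sum_j m_j\leq 2$. The first thing to try is the row estimate
\begin{align}
\text{row}_c\leq d_c^2+2C_{ca}C_{cb}+2d_c\,(\text{mass in credited pairs containing } c)\leq 1-(\text{weight of uncredited ordered pairs}),
\end{align}
followed by bounding each $C_{ij}$ above by $m_j$. This reduces each case to a low-dimensional optimization over $(m_1,m_2,m_3)$ on the simplex-type region $\sum_j m_j\leq 2$, which I expect to solve by symmetrization and Lagrange or KKT arguments, checking that each maximum is at most $\frac{8}{9}$. If the analytic bounds get messy, the fallback is that each case is a polynomial optimization in at most nine variables. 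These can be certified exactly, because the optimum lies on faces where the rows concentrate on at most two columns.
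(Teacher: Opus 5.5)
Your lower bound and your main case are correct. The cyclic matrix with $x=\frac23$ is feasible and attains $\frac89$. The Jensen argument, with $\phi(t)=1-(1-t)^2$ and $\sum_i C_{ii}\le\mathfrak{IS}(C)\le 2$, cleanly handles every $f$ that credits each off-diagonal pair to one of its own labels; up to relabeling these are the paper's $f_2$ and $f_7$. It also avoids the restriction to column-dominant $C$ that the paper imports from local optimality.

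\textbf{The gap.} The gap is the remaining case, and it is the bulk of the proof. Five of the seven inequivalent postprocessings (the paper's $f_1,f_3,f_4,f_5,f_6$) send some pair to the third label, and for these you give only a plan. Moreover, the plan as described does not produce a bound. Replacing each $C_{ij}$ by the column maximum $m_j$ in the positive terms discards row normalization. For the map sending all off-diagonal pairs to label $3$, the rows become $m_1^2$, $m_2^2$ and $m_3^2+2m_3(m_1+m_2)+2m_1m_2$. These sum to $(m_1+m_2+m_3)^2\le 4$, which gives only the trivial $P_f\le\frac43$. Substituting $C_{ij}\le m_j$ into the form $1-(\text{uncredited weight})$ is not valid at all, because those entries enter with a negative sign.

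\textbf{How to close it.} Keep row-stochasticity and express each row through its diagonal $d_c$ alone:
\begin{itemize}
\item[(a)] If the complementary pair $\{a,b\}$ is credited to $c$, then $2C_{ca}C_{cb}\le\frac12(1-d_c)^2$, since $C_{ca}+C_{cb}=1-d_c$.
\item[(b)] If $c$ also receives $\{c,a\}$, the row equals $d_c^2+2u(1-u)$ with $u=C_{ca}\in[0,1-d_c]$. It is therefore at most $d_c^2+\frac12$ for $d_c\le\frac12$, and at most $2d_c-d_c^2$ for $d_c\ge\frac12$.
\item[(c)] If $c$ receives all three pairs, the row is $1-C_{ca}^2-C_{cb}^2\le 1-\frac12(1-d_c)^2$.
\end{itemize}
Each hard case then reduces to maximizing a sum of univariate functions over $\{d\in[0,1]^3:\sum_i d_i\le 2\}$. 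One checks that the maximum is $\frac52$, i.e.\ $P_f\le\frac56$, in all five cases, in agreement with the paper's table.

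The case split in (b) is needed. If you use $d_c^2+\frac12$ throughout, the paper's $f_4$ gives $3>\frac83$ at $d=(1,0,1)$. Once completed, your route would be a tidier alternative to the paper's map-by-map calculus. As written, the upper bound $\frac89$ is established only for two of the seven classes.
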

\begin{proof}
In this case we can simply list all the functions $f: [3]^2 \to [3]$ which obey the conditions of \Cref{lem:fix-pp}. A postprocessing map $f$ of this type maps the outcomes as follows:
\begin{align}
(1,1) &\mapsto 1, \quad (2,2) \mapsto 2, \quad (3,3) \mapsto 3\\
(1,2) &\mapsto x_{12}, \quad (2,1) \mapsto x_{12}, \\
(2,3) &\mapsto x_{23},  \quad (3,2) \mapsto x_{23}, \\
(3,1) &\mapsto x_{31}, \quad (1,3) \mapsto x_{31}
\end{align}
for some $x_{12},x_{23},x_{31} \in \{1,2,3\}$. In total there are 27 of such maps. However, one last symmetry that we have not considered yet is related to the fact that we can freely bijectively relabel the states and it does not affect the average success probability. For example, the three cases when $x_{12}=x_{23}=x_{31}$ are all equivalent under relabeling. Similarly, all the three cases when $x_{12}=x_{23} \neq x_{31} $ are equivalent as well as the cases when $x_{12}\neq x_{23} = x_{31} $ and $x_{12}=x_{31} \neq x_{23} $.
We are hence left with only 4 inequivalent postprocessings.

The only cases we miss are the ones when $x_{12} \neq x_{23} \neq x_{31} \neq x_{12}$. There are in total 6 of these cases: $(x_{12},x_{23},x_{31}) \in \{(1,2,3),(1,3,2),(2,1,3),(2,3,1),(3,1,2),(3,2,1)\}$. One can see that the case $(x_{12},x_{23},x_{31}) = (2,1,3)$ is the same as $(x_{12},x_{23},x_{31}) = (1,3,2)$ by relabeling the states by $i \mapsto i+1 \ (\textrm{mod} \ 3)$. Similarly the case $(x_{12},x_{23},x_{31}) = (2,3,1)$ is the same as $(x_{12},x_{23},x_{31}) = (1,2,3)$ by swapping the labels $1 \leftrightarrow 2$ and the case $(x_{12},x_{23},x_{31}) = (3,2,1)$ is the same as $(x_{12},x_{23},x_{31}) = (3,1,2)$ by relabeling the states by $i \mapsto i+1 \ (\textrm{mod} \ 3)$. However, straightforwardly by going through all the relabelings one can confirm that the three cases when  $(x_{12},x_{23},x_{31}) \in \{(1,2,3),(1,3,2),(3,1,2)\}$ are inequivalent. Thus, in total we now have the following representations of the 7 inequivalent postprocessings. In the following we simply denote a pair of outcomes $(i,j)$ as $ij$ and we define the maps in terms of their preimages $f^{-1}(i)$:
\bea
f_1^{-1}(1)=\{ 11\}, \quad  f_1^{-1}(2)=\{ 22\}, \quad f_1^{-1}(3)=\{ 33,12,21,23,32,31,13\}\\
f_2^{-1}(1)=\{ 11\}, \quad  f_2^{-1}(2)=\{ 22, 12,21\}, \quad  f_2^{-1}(3)=\{ 33,23,32,31,13\}\\
f_3^{-1}(1)=\{ 11\}, \quad  f_3^{-1}(2)=\{ 22, 23,32\}, \quad  f_3^{-1}(3)=\{ 33,12,21,31,13\}\\
f_4^{-1}(1)=\{ 11\}, \quad  f_4^{-1}(2)=\{ 22, 31,13\}, \quad f_4^{-1}(3)=\{33,12,21, 23, 32\}\\
f_5^{-1}(1)=\{ 11, 23, 32\}, \quad  f_5^{-1}(2)=\{   22, 31, 13\}, \quad  f_5^{-1}(3)=\{ 33, 12, 21\}\\
f_6^{-1}(1)=\{ 11, 12, 21\}, \quad  f_6^{-1}(2)=\{   22, 31, 13\}, \quad  f_6^{-1}(3)=\{ 33, 23, 32\} \\
f_7^{-1}(1)=\{ 11, 12, 21\}, \quad  f_7^{-1}(2)=\{22, 23, 32\}, \quad  f_7^{-1}(3)=\{ 33, 31, 13\}
\eea
Let us denote 
\begin{align}\label{eq:P_j}
    P_j(C):=\frac{1}{3} \sum_{i=1}^3 \sum_{(a,b) \in f_j^{-1}(i)} C_{i a}C_{i b}
\end{align}
for all $j \in [7]$ and for all row-stochasic matrices $C$. Now we can separately perform the optimization $\max_{C: \ \mathfrak{IS}(C) \leq 2 } P_j(C)$ for all $j \in [7]$. We note that in the optimization we can restrict to matrices $C$ with $C_{ii} \geq C_{ij}$ for all $i,j$. This is because of the assumption about the local optimality of the measurements in the original optimization \Cref{eq:fix-cm} so that considering row-stochastic matrices with $C_{ii} \geq C_{ij}$ are enough for the upper bound in \Cref{eq:fix-cm-ub}. In this case clearly $\mathfrak{IS}(C) = \sum_i C_{ii}$.

Now we can perform the optimization $\max_C P_j(C)$ for all $j \in [7]$ using the constraints $0 \leq C_{ab} \leq 1$, $\sum_{i=1}^3C_{ai}=1$ and $\sum_{i=1}^3 C_{ii} \leq 2$  for all $a,b \in [3]$. The mathematical details of the analytic optimizations can be found in \Cref{appendix:post-process-local-fixed}. The results are the following:
\begin{table}[h]
\centering
\bgroup
\def\arraystretch{1.5}
    \begin{tabular}{|c||c|c|c|c|c|c|c|} \hline
    $j$ & $1$ & $2$ & $3$ & $4$ & $5$ & $6$ & $7$ \\ \hline 
    $\max_C P_j(C)$ & $\frac{5}{6}$  & $\frac{5}{6}$ & $\frac{5}{6}$ & $\frac{5}{6}$ & $\frac{5}{6}$ & $\frac{5}{6}$ & $\frac{8}{9}$ \\ \hline
\end{tabular}
\egroup
    \caption{Results of the maximization of \Cref{eq:P_j} over all row-stochastic matrices $C$ with $\mathfrak{IS}(C) \leq 2$.}
    \label{tab:fix-results}
\end{table}

This proves the claim.
\end{proof}
In fact we observed that the postprocessing function $f_7$ obtains the maximum value with unique communication matrix. Up to trivial relabeling of rows and colums the unique communication matrix reads
    \bea
    C=\begin{pmatrix}
    \frac{2}{3} & \frac{1}{3} & 0\\
    0 & \frac{2}{3} & \frac{1}{3}\\
    \frac{1}{3} & 0 & \frac{2}{3}
    \end{pmatrix} \, .
    \eea
In \Cref{sec:GPTs} we will give an example of a bit-like GPT where this communication can be implemented, thereby showing that the bit can be beat even with a classical measurement strategy. What is left is to see whether or not qubit can beat bit in this or not.
It turns out that for $n=3$ states and $k=2$ copies there is no advantage in using qubit over bit with the fixed measurement strategy. In particular, the following statement can be proven.
\begin{thm}
     Qubit and bit have equal optimal success probability using the fixed measurement strategy for $k=2$ copies of $n=3$ messages, i.e.,
    \begin{align}
        P^{FIX}_q(3,2) = P^{FIX}_b(3,2) = \frac{5}{6} \, .
    \end{align}
\end{thm}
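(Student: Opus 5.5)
The plan is to prove two inequalities. The equality $P^{FIX}_b(3,2)=\frac{5}{6}$ then follows, because classical theory is a restriction of quantum theory, so $P^{FIX}_b(3,2)\leq P^{FIX}_q(3,2)$.

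\emph{Lower bound.} I would use the bit strategy of \Cref{cor:class-optimum-n3-k2} and check that it is a fixed measurement strategy. Take $\varrho_1=\ketbra{1}$, $\varrho_2=\frac{1}{2}\one$, $\varrho_3=\ketbra{0}$. Each party measures $M_1=\ketbra{1}$, $M_2=0$, $M_3=\ketbra{0}$. This measurement is locally optimal, since it attains the single-copy value $\frac{2}{3}=\frac{d}{n}$. The postprocessing is $11\mapsto 1$, $33\mapsto 3$ and $13,31\mapsto 2$. The success probability is then $\frac{1}{3}(1+\frac{1}{2}+1)=\frac{5}{6}$, which gives $P^{FIX}_b(3,2)\geq \frac{5}{6}$.

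\emph{Upper bound.} By \Cref{eq:fix-cm}, $P^{FIX}_q(3,2)$ is the maximum of $P_j(C_{\varrho,M})$ over $j\in[7]$ and over qubit prepare-and-measure pairs with locally optimal $M$. The qubit bound $\mathfrak{IS}(C)\leq 2$ holds, so \Cref{tab:fix-results} already gives $P_j\leq\frac{5}{6}$ for $j=1,\dots,6$. The whole proof therefore reduces to one claim: for every qubit-realizable communication matrix $C$ with $C_{ii}\geq C_{ji}$,
\begin{align}
P_7(C)=\frac{1}{3}\sum_{i=1}^3 C_{ii}\bigl(C_{ii}+2C_{i,i+1}\bigr)\leq \frac{5}{6},
\end{align}
with indices taken mod $3$.

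\emph{The $f_7$ case.} Bit-like theories reach $\frac{8}{9}$ only with the cyclic matrix of zeros and $\frac{2}{3},\frac{1}{3}$ displayed after \Cref{prop:local-fixed-bound}. I will exploit that qubit cannot produce this pattern. I would parametrize the effects as $M_j=\alpha_j(\one+\vec m_j\cdot\vec\sigma)$ with $\sum_j\alpha_j=1$, $\sum_j\alpha_j\vec m_j=0$ and $|\vec m_j|\leq 1$, and the states by Bloch vectors $\vec r_i$.

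First I reduce to a simple family of measurements. The task is invariant under rotations, and adding noise to a rank-one effect only shrinks the entries that matter. So I expect one can assume the $M_j$ are rank one with coplanar Bloch vectors. That leaves a few angles plus the weights $\alpha_j$. For a fixed measurement, row $i$ of the objective is a quadratic in $\vec r_i$: the product of the linear functions $C_{ii}$ and $C_{ii}+2C_{i,i+1}$. Its maximum over the Bloch disc can be found by Lagrange multipliers, and I expect it on the boundary, i.e.\ at pure states.

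The qubit obstruction is the following. A zero entry $C_{i,i+2}=0$ forces $\vec r_i=-\vec m_{i+2}$ with $M_{i+2}$ rank one. Then $C_{ii}+C_{i,i+1}=1$ is split between the other two effects according to their angles to $\vec m_{i+2}$. Imposing this cyclically on three effects summing to $\one$ forces a trine-like configuration, and there the diagonal and the next-to-diagonal entries cannot be $\frac{2}{3}$ and $\frac{1}{3}$ simultaneously. I would turn this into a quantitative inequality. I would first try the symmetric (trine-weighted) case analytically, show it gives at most $\frac{5}{6}$, and then argue that asymmetry does not help, for instance by a concavity or averaging argument over cyclic relabelings.

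This last step is the main obstacle. $P_7$ is not concave, so symmetrization is not automatic, and the local-optimality constraint $C_{ii}\geq C_{ji}$ must be used actively. If the analytic route stalls, my fallback is a rigorous reduction to finitely many parameters (two angles and two weights) followed by a certified numerical or SDP-relaxation bound.
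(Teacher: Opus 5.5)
Your lower bound and your reduction to the single postprocessing $f_7$ via \Cref{tab:fix-results} agree with the paper. The gap is the remaining step, $P_7\le\frac{5}{6}$ for qubit, which is the whole content of the theorem; your proposal does not prove it.

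\textbf{Unproven reductions.} The reductions to rank-one coplanar effects and to pure states are only ``expected''. Changing one effect forces compensating changes in the others through $\sum_jM_j=\one$, so ``adding noise only shrinks the relevant entries'' is not an argument. The symmetrization step, which decides the matter, is left open with a numerical fallback.

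\textbf{The zero-pattern idea points the wrong way.} The obstruction is only qualitative: it rules out the exact $\frac{8}{9}$ matrix, not values close to it. It also does not force a trine, since $\vec r_i=-\vec n_{i+2}$ produces the cyclic zeros for any three rank-one effects. Worse, under your relaxed hypothesis $C_{ii}\ge C_{ji}$, large values of $P_7$ occur away from that pattern. Take the trine POVM $M_j=\frac{2}{3}\ketbra{\psi_j}$ and pure states whose Bloch vectors are the trine directions rotated by $\delta$ towards the next effect. Then every block $C_{ii}(C_{ii}+2C_{i,i+1})$ equals $\frac{1}{9}(1+\cos\delta)(3+\sqrt{3}\sin\delta)$.
\begin{itemize}
\item At $\delta=\frac{\pi}{3}$ this is your cyclic-zero pattern, with $P_7=\frac{3}{4}$.
\item At $\delta\approx 0.65$ no entry vanishes, column dominance still holds, and $P_7\approx 0.81$, very close to $\frac{5}{6}$.
\end{itemize}

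\textbf{The missing idea.} You start from locally optimal pairs but then keep only $C_{ii}\ge C_{ji}$. In the $FIX$ strategy $M$ is the \emph{optimal} single-copy measurement, which is much stronger. The paper pushes to the face $\sum_iC_{ii}=2$. There a qubit forces $M_i=a_i\varrho_i$ with $\varrho_i$ pure (or $M_i=0$). Then $C_{ii}^2=\tr[M_i^2]$ and $C_{ii}C_{i,i+1}=\tr[M_iM_{i+1}]$, so $3P_7=\tr[(M_1+M_2+M_3)^2]=\tr[\one_2]=2$ and $P_7=\frac{2}{3}$, with no angle optimization at all.

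Note that the paper's monotonicity remark does not by itself justify reaching that face: the rotated-trine example has $\sum_iC_{ii}<2$ and beats $\frac{2}{3}$. What actually makes a rigorous bound possible is local optimality, used through the minimum-error optimality conditions $\Gamma=\frac{1}{3}\sum_j\varrho_jM_j\ge\frac{1}{3}\varrho_i$. Unless all states coincide (then $3P_7=1$), these imply:
\begin{itemize}
\item every nonzero $M_j$ is rank one, $M_j=\alpha_j(\one+\vec n_j\cdot\vec\sigma)$;
\item the corresponding Bloch vectors are $\vec r_j=\vec c+\rho\,\vec n_j$ for a common $\vec c$ and $\rho>0$, with $|\vec c|^2+\rho^2=\sum_j\alpha_j|\vec r_j|^2\le 1$.
\end{itemize}
Substituting and using $\sum_j\alpha_j\vec n_j=0$, one finds, when all three effects are nonzero,
\[
3P_7=1+\rho-2\rho\,\vec c\cdot\sum_{i}\alpha_i\alpha_{i+1}\bigl(1-\vec n_i\cdot\vec n_{i+1}\bigr)\vec n_i\le 1+\rho+\rho\sqrt{1-\rho^2}\le 1+\tfrac{3\sqrt{3}}{4}<\tfrac{5}{2}.
\]
The first inequality holds because the nonnegative weights $\alpha_i\alpha_{i+1}(1-\vec n_i\cdot\vec n_{i+1})$ sum to exactly $\frac{1}{2}$. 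For $\vec c=0$, $\rho=1$ this reduces to the paper's identity $3P_7=2$. If some effect vanishes, a direct check gives $3P_7\le 2$.
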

\begin{proof}
    First of all we note that quantum theory can implement the strategy constructed in \Cref{cor:class-optimum-n3-k2} and hence we have $P_q^{FIX}(3,2) := P^{FIX}(2,3,2) \geq \frac{5}{6}$.
    
    In the proof of \Cref{prop:local-fixed-bound} we saw that the only postprocessing that can beat the bit success probability $\frac{5}{6}$ is the postprocessing $f_7$. Now the success probability reads as  
    \bea
     P_7(C)=\frac{1}{3}(C_{11}^2+ 2 C_{11} C_{12} + C_{22}^2 + 2 C_{22} C_{23} +  C_{33}^2 + 2 C_{33} C_{31}),
    \eea
    where $C = C_{\varrho,M}$ is qubit communication matrix implemented by using some qubit states $\varrho = \{\varrho_i\}_{i=1}^3$ and some trichotomic qubit measurement $M$. We see that the success probability constitutes of three blocks of similar terms $C_{ii}^2+2 C_{ii} C_{i(i+1)}$ for all $i \in [3]$ with constraint $C_{11}+C_{22}+C_{33}\leq 2$ across the blocks. 
    Using $C_{i(i+1)}=1-C_{ii}-C_{i(i+2)}$, we can rewrite each block as $C_{ii}^2+2 C_{ii} C_{i(i+1)}=2 C_{ii} - C_{ii}^2-2 C_{ii} C_{i(i+2)}$. Note that this is monotonically increasing for $C_{ii} \leq 1- C_{i(i+2)}$, a condition that holds with any communication matrix. Therefore, the maximum success probability is attained when $C_{11}+C_{22}+C_{33}=2$. 

    In quantum theory, $C_{11}+C_{22}+C_{33}=2$ holds if and only if the implementing measurement consists of rank-1 effects $M_i$ and the states are exactly the corresponding (pure) eigenstates. Thus, for all $i \in [3]$ we have that $\varrho_i = \ketbra{\varphi_i}$ for some unit vector $\varphi_i \in \complex^2$ and $M_i=a_i \varrho_i$, where $\sum_{i=1}^3 a_i =2$ since $\sum_{i=1}^3 M_i = \one_2$. Then, $C_{ij} = a_j |\ip{\varphi_i}{\varphi_j}|^2$ for all $i,j \in [3]$ and then one finds for such $C$ that
    \begin{align}
   P_7(C)&=\frac{1}{3}(C_{11}^2+ 2 C_{11} C_{12} + C_{22}^2 + 2 C_{22} C_{23} +  C_{33}^2 + 2 C_{33} C_{31}) \\
   &= \frac{1}{3}(a_1^2+a_2^2+a_3^2+2(a_1a_2 |\ip{\varphi_1}{\varphi_2}|^2 + a_2a_3 |\ip{\varphi_2}{\varphi_3}|^2 +a_3a_1 |\ip{\varphi_3}{\varphi_1}|^2)) \\
   &=\frac{1}{3}(\tr[M_1^2]+\tr[M_2^2]+\tr[M_3^2 ]+2 (\tr[M_1 M_2]+\tr[M_2 M_3]+\tr[M_3 M_1]))\nonumber \\
    &=\frac{1}{3}(\tr[(M_1+M_2+M_3)^2])=\frac{2}{3}
    \end{align}
    Therefore, using the processing map $f_7$ yields the success probability $\frac{2}{3}$. This proves the claim since in the proof of \Cref{prop:local-fixed-bound} (see \Cref{appendix:post-process-local-fixed}) we saw that all other postprocessing maps cannot give a success probability larger than $\frac{5}{6}$.
\end{proof}

This motivates us to look more deeply into the different measurement strategies in other operational theories whether these observations about specific strategies are just specific to qubit or can we find other (qu)bit-like operational theories where each different measurement strategy can offer advantages over bit or even qubit. This helps us understand which type of local properties of a theory contribute to optimality of different measurement strategies, and allows us to clarify in various theories where the nonclassical traits in the multi-copy state discrimination originates from.

\section{Generalized probabilistic theories}\label{sec:GPTs}

General Probabilistic Theories (GPTs) comprise a framework for operational theories developed as an attempt to include possible physical theories in one mathematical framework. In fact, the idea to encapsulate classical and quantum theory in one model is almost as old as quantum theory itself \cite{neumann36}.
Traditionally, GPTs have been used to derive quantum theory from few well-motivated axioms \cite{hardy2001quantumtheoryreasonableaxioms,Masanes_2011,Chiribella_2011}. They were further developed to answer foundational questions as 'When is a theory physical?' or 'What are genuine features of quantum theory?'.
It turned out that many phenomena of nature, which were to believed quantum, were in fact rather nonclassical.
For example, there is a whole class of nonlocal theories, and some of those, known as Popescu-Rohrlich (PR) boxes \cite{rohrlich1995nonlocalityaxiomquantumtheory}, are even more nonlocal than quantum theory.
In modern days, researchers formulate information theoretic processes in the GPT framework. This allows to compare the performance in a certain communication task based on the theory and in this way, characterize theories. In one instance, the derivation of quantum theory through state discrimination in GPTs was attempted \cite{arai2024derivationstandardquantumtheory}.

Our motivation for considering GPTs comes from the same direction. So far, we have compared quantum theory with classical theory in multi-copy state discrimination and found gaps between them. We want to examine further, the source for those gaps as well as the conditions on a theory to perform on a certain level in multi-copy discrimination.
Even though we are considering toy models, which are obviously nonphysical, we are able to get insights into the differences as well as similarities of classical and quantum theories.

Concisely, as operational theories, GPTs are built on the concepts of states (preparations), effects ('yes-no' questions), measurements, state transformations and probabilistic outcomes when measuring states. 

\subsection{Framework of GPTs}

Here we give a brief introduction of the mathematical framework of GPTs. For a comprehensive review on GPTs we refer to \cite{Pl_vala_2023}.

The set of states of a GPT is described by a compact convex subset $K \subset V$ of a real, finite dimensional vector space $V$. The convexity of the state space stems from the fact that operationally we can make probabilistic mixtures of preparations of states. The extreme points of $K$ are the pure states and all other states are mixed. For mathematical convenience we choose $V$ such that $\dim(V) = \dim(\mathrm{aff}(K))+1$, where $\mathrm{aff}(\cdot)$ denotes the affine hull. Furthermore we embed $K$ into $V$ such that $0 \notin K$. In this case we can define a proper cone\footnote{A subset $C \subset V$ of a vector space is called a \emph{cone} if for all $v \in C$ we have that $\lambda v \in C$ for all $\lambda \in \real_+$. A cone $C$ is \emph{pointed} if $C \cap -C = \{0\}$ and \emph{generating} if $\mathrm{span}(C) = V$. A convex, closed, pointed, and generating cone is called a \emph{proper cone}.} $V^+  \subset V$ by setting $V^+ := \{\lambda x \, : \, x \in K, \ \lambda \in \real_+\} $. A proper cone $V^+$ induces an order relation on the vector space and gives a notion of positivity: $v \geq 0$ if and only if $v \in V^+$. Let $V^*$ be the dual space of $V$, containing linear functionals on $V$. The dual cone $(V^*)^+ = (V^+)^* \subset V^*$ is the set $(V^*)^+ = \{f \in V^* : f(s) \geq 0 \mspace{10mu} \forall s \in V^+ \}$.

\begin{eg} \label{example:psd-cone}
    Let $PSD_d$ be the set of $d$-dimensional, positive semidefinite matrices. Since a positive semidefinite matrix is invariant under scaling by a positive scalar, $PSD_d \subset \mathcal{M}(\complex)_d^{sa}$ forms a cone in the vector space of selfadjoint matrices $\mathcal{M}(\complex)_d^{sa}$.
    Since $PSD_d$ is self-dual, we get the dual cone as $PSD_d \equiv PSD_d^*$.
\end{eg}

 In particular we have the constant function  $\one_K \in (V^*)^+$ on $K$, defined as $\one_K(s) = 1 \, \forall s \in K$. Mathematically $\one_K$ is then an \emph{order unit} of $V^*$. Now we may represent our state space as the set of nonnegative normalized vectors:
\begin{align}\label{eq:states}
    K = \{x \in V \, : \, x \geq 0, \ \one_K(x) =1\} \, .
\end{align}

An effect is the simplest type of measurement and it corresponds to a 'yes-no' question about the systems properties. We define the set of effects to be the set of linear functionals giving probabilities on states, i.e., 
\begin{align}\label{eq:effects}
     \mathcal{E}(K) &= \{f \in V^*:0 \leq f(s) \leq 1 \mspace{10mu} \forall s \in K \} \\
     &= \{f \in V^*:0 \leq f \leq \one_K \mspace{10mu}\},
\end{align}
where the partial order on $V^*$ is now induced by the dual cone $(V^+)^*$. We call $ \mathcal{E}(K)$ as the effect space of $K$. Now for an effect $e \in \mathcal{E}(K)$ we interpret $e(s)$ as the probability that answer 'yes' is obtained in the question represented by $e$ when the system is in state $s \in K$. Since $K$ is convex and the functionals $f \in V^*$ are linear, $\mathcal{E}(K)$ is convex, too. More general measurements can be now definied as collection of effects:  $M = \{ M_i \}_{i=1}^n \subset \mathcal{E}(K)$ is called a measurement with $n$ outcomes if $\sum_{i = 1}^n M_i = \one_K$. We call $M_i$ the $i$'th outcome of measurement $M$, which is obtained with probability $M_i(s)$ for some state $s \in K$. The normalization of the measurement guarantees that some outcome is always detected when a state is measured.

Thus, starting from a convex set we were able to define the state space, the effect space and measurements of a GPT. On the other hand, starting from a vector space $V$, any proper cone $V^+$ and an order unit $\one \in V^*$ one is able to derive the convex set $K$ corresponding to the state space and the effect space $\mathcal{E}(K)$ as in \Cref{eq:states} and \Cref{eq:effects}. Hence, this is all we need to define a GPT.
\begin{defn}
    A GPT is a triple $(V, V^+, \one_K)$, where $V$ is a real finite-dimensional vector space with proper cone $V^+$ and $\one$ is an order unit of $V^*$.
\end{defn}

We illustrate above definitions with the two most well-known examples.
\begin{eg}
    From \Cref{example:psd-cone} we take the sets $PSD_d$ and $\mathcal{M}(\complex)_d^{sa}$ to define quantum theory as the GPT $(\mathcal{M}(\complex)_d^{sa},PSD_d, \tr)$, where $\tr$ denotes the trace function $\tr[\cdot]$, providing the order unit. The state space is then the set of density matrices (positive semidefinite matrices with unit trace) denoted by $\mathcal{D}_d$. The effect space is then isomorpic to the set of positive semidefinite matrices which are below the identity matrix, i.e., $\mathcal{E}(\mathcal{D}_d) \cong \{ E \in \mathcal{M}(\complex)_d^{sa} \, : \, 0 \leq E \leq \one_d\}$, where $\one_d$ is now the identity matrix and corresponds to the order unit represented by the trace function.
\end{eg}
\begin{eg}
    Classical theory is described by a probability simplex, that is, a set of $d$-dimensional vectors containing a probability distribution $s = (p_1,\ldots,p_d)^\intercal$, such that $\sum_{i=1}^d p_i = 1$ and $p_i \geq 0$ for all $i \in [d]$. Classical theory is the triple $(\real^d,\real_+^d,1_d)$, where $1_d = (1,\ldots,1)$. If we would embed classical theory in quantum theory, then the states would correspond to diagonal density matrices and effects would correspond to diagonal effect operators (as in \Cref{sec:classical}). Classical theory has a unique measurement from which all other measurements can be postprocessed and it is the measurement which perfectly distinguishes all $d$ pure states.
\end{eg}
Here we omit deeper discussions about composite systems and introduce as much further details as needed for our investigations. 

In this work we will focus on classes of theories that we can straightforwardly compare to qubit and bit. One important property that those two share is that the \emph{operational dimension} $d_{op}$ of the theory, defined as the maximal number of perfectly distinguishable pure states, is 2 both in qubit and bit. Thus, we will be focusing on other theories with operational dimension $d_{op} =2$. Another related quantity is the \emph{information storability} of the theory \cite{Kimura_2018,Heinosaari_2024}, defined as the maximum amount of information that can be encoded and decoded by using the theory as
\begin{align}
    \mathfrak{IS}  = \max_{S,M} \sum_i M_i(s_i) \, .
\end{align}
In $d$-dimensional quantum and classical theory we have that $\mathfrak{IS} = d_{op} = d$ but in general we only have $\mathfrak{IS} \geq d_{op}$. Interestingly. there are theories where the inequality is strict, and in that case we say that the theory has \emph{super-information storability}. However, for a fair comparison to qubit and bit we will be mostly focusing on theories which are (qu)bit-like also in this sense, i.e., theories with $\mathfrak{IS} =2$. In \cite{Kimura_2018} it was actually shown that these theories are exactly theories which have a point-symmetric state space.

\subsection{Perfect fixed and adaptive measurement strategies in GPTs}

Earlier we saw that perfect discrimination is not possible in quantum and classical theory in general. The question arises, if this is due to the properties of the state spaces and what are the conditions to obtain success probability 1 in a local strategy for a finite number of copies.
We start with the simplest case of fixed strategies, where the choice of measurement is never changed. Contrary, we will later consider adaptive strategies, i.e.\ the measurement choice depends on the outcomes obtained of the prior measurements.

For any set of states $S=\{s_i\}_{i=1}^n$ we define the fixed measurement strategy to consist of each of the $k$ parties applying some fixed $m$-outcome measurement $M$ on their state after which they can jointly classically postprocess the measurement outcomes to make their guess. Now the overall discriminating measurement $M^{(k)}$ is of the form $M^{(k)}_{i_1, \ldots, i_k} = M_{i_1} \otimes \cdots \otimes M_{i_k}$ for all $(i_1, \ldots, i_k) \in [m]^k$. (Note that unlike in \Cref{sec:fixed-measurement-strategy} here we do not assume that the local fixed measurements are the local optimal measurements for the given set of states but they can be any local measurements). The optimal success probability in the fixed measurement strategy now reads as
    \begin{align}\label{eq:fix-def}
        P^{FIX}(n,k) = \max_{S,M,f} \frac{1}{n} \sum_{i=1}^n \sum_{(i_1, \ldots, i_k) \in f^{-1}(i)} M_{i_1}(s_i) \cdots M_{i_k}(s_i) ,
    \end{align}
where we optimize over all sets of $n$ states $S$, all $m$-outcome measurements $M$ and all postprocessing functions $f:[m]^k \to [n]$ for all $m \in \naturale$.

\begin{prop}\label{prop:FIX-prob-1}
    In any operational theory, we have $P^{FIX}(n,k) =1$ if and only if the operational dimension $d_{op}$ of the theory is at least $n$.
\end{prop}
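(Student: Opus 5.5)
The ``if'' direction is immediate and I would dispose of it first. If $d_{op} \geq n$, pick $n$ perfectly distinguishable pure states $s_1,\ldots,s_n$ together with a measurement $M=\{M_j\}_{j=1}^n$ satisfying $M_j(s_i)=\delta_{ij}$. Let every party measure $M$, and use the postprocessing $f(i_1,\ldots,i_k)=i_1$, or simply the function from \Cref{lem:fix-pp} with $f(i,\ldots,i)=i$. On state $s_i$ every party obtains outcome $i$ with certainty, so the guess is correct with probability one. Hence $P^{FIX}(n,k)=1$ for every $k$, and in fact already for $k=1$.

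For the ``only if'' direction, suppose a triple $(S,M,f)$ attains $P^{FIX}(n,k)=1$. Since each of the $n$ terms in \Cref{eq:fix-def} is at most $1$, every term must equal $1$. So for each $i$ we have $\sum_{(i_1,\ldots,i_k)\in f^{-1}(i)} \prod_{l} M_{i_l}(s_i) = 1$. In other words, the product distribution $p_i^{\otimes k}$ on $[m]^k$ is supported inside $f^{-1}(i)$, where $p_i(j):=M_j(s_i)$.

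The key step is to go from this product-level disjointness back to a single copy. Let $A_i := \mathrm{supp}(p_i) \subseteq [m]$. The support of $p_i^{\otimes k}$ is exactly $A_i^k$. The preimages $f^{-1}(i)$ are pairwise disjoint, so the sets $A_i^k$ are pairwise disjoint. This forces $A_i \cap A_{i'} = \emptyset$ for $i\neq i'$: any common element $j$ would give the common point $(j,\ldots,j)$. This observation is the whole crux. It is where the fixed-measurement structure matters, because all parties use the same $M$, and so it fails for adaptive or global strategies.

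With disjoint supports in hand, I define a coarse-grained single-system measurement $N_i := \sum_{j\in A_i} M_j$ for $i\in[n]$, and add the leftover effects $\sum_{j\notin \cup_i A_i} M_j$ to $N_1$ so that the effects sum to the unit. Then $N_i(s_{i'}) = \sum_{j\in A_i} p_{i'}(j) = \delta_{ii'}$, so $s_1,\ldots,s_n$ are perfectly distinguishable. These states need not be pure, so the last step is to upgrade them. I decompose each $s_i$ into pure states. Since $N_i(s_i)=1$ and the effects are linear and bounded by $1$, every pure state in the decomposition also satisfies $N_i=1$ and $N_{i'}=0$ for $i'\neq i$. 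Choosing one pure component from each $s_i$ then gives $n$ perfectly distinguishable pure states, so $d_{op}\ge n$. I expect no real obstacle beyond checking the support argument and this purification remark.
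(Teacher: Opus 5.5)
Your proposal is correct and follows the paper's proof essentially step by step. The paper likewise argues that each term must equal $1$, shows the single-copy supports $J^{(i)}_{>0}$ are pairwise disjoint via the diagonal strings $(j,\ldots,j)$, and coarse-grains $M$ into $M'_i=\sum_{j\in J^{(i)}_{>0}}M_j$; it merges unused outcomes where you add them to $N_1$, which is equivalent. Your final step of passing to pure components is a small but genuine addition, since $d_{op}$ counts pure states while the paper stops at perfect distinguishability of the possibly mixed states in $S$.
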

\begin{proof}
    If we denote the $m \times n$ communication matrix implemented with states $S=\{s_i\}_{i=1}^n$ and an $m$-outcome measurement $M$ by $C_{S,M}$ so that $(C_{S,M})_{ij} = M_j(s_i)$ for all $i\in [n],j \in [m]$, we can simply rewrite the success probability as 
        \begin{align}
        P^{FIX}(n,k) &= \max_{S,M,f} \frac{1}{n} \sum_{i=1}^n \sum_{(i_1, \ldots, i_k) \in f^{-1}(i)}(C_{S,M})_{ii_1} \cdots (C_{S,M})_{ii_k} \\
        &=  \max_{S,M,f} p^{(n,k)}_f(C_{S,M}),
    \end{align}
    where we have denoted $p^{(n,k)}_f(C) = \frac{1}{n} \sum_{i=1}^n \sum_{(i_1, \ldots, i_k) \in f^{-1}(i)} C_{ii_1} \cdots C_{ii_k}$ for any postprocessing function $f: [m]^k \to [n]$ and any $n \times m$ row-stochastic matrix $C$.

    If the operational dimension of the theory is at least $n$, then there exists a set of states $S=\{s_i\}_{i=1}^n$ and an $n$-outcome measurement $M$ such that for all $i \in [n]$ we have $M_{i}(s_i) =1$. Now we can define a postprocessing $f: [n]^k \to [n]$ by setting $f(i, \ldots, i) = i$ and the rest of the outcome strings can be assigned arbitrarily. It is straightforward to confirm that with these choices we have that $p^{(n,k)}_f(C_{S,M})= 1$.
    
    Suppose now that $P^{FIX}(n,k)=1$. Thus there are some postprocessing function $f: [m]^k \to [n]$ and some $m \times n$ communication matrix $C$ which can be implemented with some states $S = \{s_i\}_{i=1}^n$ and some $m$-outcome measurement $M$ as $C = C_{S,M}$ such that $p^{(n,k)}_f(C)=1$. By convexity, this implies that 
    \begin{align}
        \sum_{(i_1, \ldots, i_k) \in f^{-1}(i)} C_{ii_1} \cdots C_{ii_k} =1, \quad \forall i \in [n] \,.
    \end{align}
    However, by the row-stochasticity of $C$ we have that
    \begin{align}\label{eq:fixed-n-k-1}
        1 = \sum_{(i_1, \ldots, i_k) \in f^{-1}(i)} C_{ii_1} \cdots C_{ii_k} \leq \sum_{i_1, \ldots, i_k =1}^m C_{ii_1} \cdots C_{ii_k} =1, \quad \forall i \in [n] \,.
    \end{align}
    In particular, now the above inequality must be an equality. This shows the following two things for all $i\in[n]$: \textit{i)} if $C_{ii_1} \cdots C_{ii_k} >0$, then $(i_1, \ldots, i_k) \in f^{-1}(i)$ and there exists at least one such string of outcomes $(i_1, \ldots,i_k)$ (so that in particular $f^{-1}(i) \neq \emptyset$), and \textit{ii)} if $(i_1, \ldots, i_k) \notin f^{-1}(i)$, then $C_{ii_1} \cdots C_{ii_k} =0$. Furthermore, we also see that if $(i_1, \ldots,i_k) \in f^{-1}(i)$ such that  $C_{ii_1} \cdots C_{ii_k} >0$, then clearly $C_{ii_l} \cdots C_{ii_l} >0$ for all $l \in [k]$ so that by \textit{i)} we have that $(i_l ,\ldots, i_l) \in f^{-1}(i)$ for all $l \in [k]$.
    
    Let us denote $J^{(i)}_{>0} = \{ j \in [m] \, : \,  C_{ij} >0\}$. Now from the properties listed above we see that $J^{(i)}_{>0} \neq \emptyset$ for all $i \in [n]$ and $J^{(i)}_{>0} \cap J^{(i')}_{>0} = \emptyset$ for all $i \neq i'$. The latter fact simply means that an outcome string can only be in the preimage of exactly on $i$. On the other hand, without loss of generality we can assume that all $m$ outcomes of the measurement $M$ are nonzero on the states $S=\{s_i\}_{i=1}^n$ (otherwise we could just merge those outcomes that are zero with some other outcome and the success probability would still remain 1 on the states $S$). This means that for all $j \in [m]$ there exists $i \in [n]$ such that $C_{ij} >0$, i.e., that $j \in J^{(i)}_{>0}$. Thus the sets $\{ J^{(i)}_{>0}\}_{i=1}^n$ form a partition of $[m]$.

    We can now define an $n$-outcome measurement $M'$ by setting $M'_i = \sum_{j \in J^{(i)}_{>0}} M_j$ for all $i \in [n]$. It follows that
    \begin{align}
        M'_i(s_i) = \sum_{j \in J^{(i)}_{>0}} M_j(s_i) = \sum_{j \in J^{(i)}_{>0}} C_{ij} = \sum_{j=1}^m C_{ij} =1
    \end{align}
     for all $i \in [n]$. Thus, $M'$ can perfectly discriminate the states $S=\{s_i\}_{i=1}^n$ so that by definition the operational dimension of the theory is at least $n$. 

\end{proof}
Hence, by the previous result we see that (qu)bit-like theories with $d_{op}=2$ can never achieve perfect discrimination for $n\geq 3$ states for any number of copies. In fact, our previous results from \Cref{sec:fixed-measurement-strategy} show that in this case qubit performs no better than bit. On the other hand we will later see that this is not a general property of (qu)bit-like theories but we find an instance of such theory which can beat bit even with the fixed measurement strategy.

Let us consider the adaptive measurement strategy next. For any set of states $S=\{s_i\}_{i=1}^n$ we define the adaptive measurement strategy to consist of each party $l \in [k]$ applying on their copy of the state some $m$-outcome measurement $M^{(l)}_{\cdot |i_1, \ldots,i_{l-1}}$ conditioned on the all previous $l-1$ parties outcomes $i_1, \ldots, i_{l-1}$ after which they can apply some joint postprocessing. Now the overall discriminating measurement $M$ is of the form $M_{i_1, \ldots, i_k} = M^{(1)}_{i_1} \otimes M^{(2)}_{i_2|i_1} \cdots \otimes M^{(k)}_{i_k|i_1, \ldots,i_{k-1}}$ for all $(i_1, \ldots, i_k) \in [m]^k$. The optimal success probability in the adaptive measurement strategy reads as
    \begin{align}\label{eq:adaptive}
        P^{AD}(n,k) = \max_{M^{(1)}, \ldots, M^{(k)}, S, f} \frac{1}{n} \sum_{j=1}^n \sum_{(i_1, \ldots,i_k) \in f^{-1}(j)} M^{(1)}_{i_1}(s_j) M^{(2)}_{i_2|i_1}(s_j) \cdots M^{(k)}_{i_k|i_1,\ldots, i_{k-1}}(s_j),
    \end{align}
where we optimize over all sets of $n$ states $S$, all adaptive $m$-outcome measurements $M^{(1)}, \ldots, M^{(k)}$ and all postprocessing functions $f: [m]^k \to [n]$.  

Clearly if $n\leq d_{op}$, where $d_{op}$ is the operational dimension of the theory, then $P^{AD}(n,k) =P^{FIX}(n,k)=1$ for all $k \in \naturale$ by \Cref{prop:FIX-prob-1}. In the case $n>d_{op}$ we can also derive a necessary and sufficient condition for $P^{AD}(n,k) =1$. This condition relies on the notion of \emph{antidiscrimination} which captures the idea of state exclusion instead of state discrimination \cite{Caves_antidist,Heinosaari_2018}. Here we need to consider an adaptive version of antidiscrimination:
\begin{defn}
    A set of states $S = \{s_j\}_{j=1}^n$ can be \emph{$k$-adaptively antidiscriminated to at most $l \leq n$ states} if there exists some $k$ adaptive measurements $M^{(1)}, \ldots, M^{(k)}$ as in \Cref{eq:adaptive} each with some $m \in \naturale$ outcomes such that for all outcomes $(i_1, \ldots, i_k) \in [m]^k$ there exists some $j_1, \ldots, j_{l} \in [n]$ such that
    \begin{align}\label{eq:k-adaptive-exclusion}
        M^{(1)}_{i_1}(s_j) M^{(2)}_{i_2|i_1}(s_j) \cdots M^{(k)}_{i_k|i_1,\ldots, i_{k-1}}(s_j) = 0
    \end{align}
    for all $j \in [n] \setminus \{j_1, \ldots, j_l\}$.
\end{defn}

Thus, $k$-adaptive antidiscrimination of $n$ states to at most $l$ states means that each string of outcomes that is obtained from all the adaptive measurements of all $k$ parties can exclude at least $n-l$ states. Using this terminology we can show the following:
\begin{prop} \label{prop:adaptively-antidist-to-dist-states}
    In the case $n>d_{op}$ in any operational theory with operational dimension $d_{op}$, we have that $P^{AD}(n,k) = 1$ if and only if there exists $n$ states $S=\{s_i\}_{i=1}^n$ that can be $(k-1)$-adaptively antidiscriminated to at most $d$ states for some $d\leq d_{op}$ such that the remaining states can always be perfectly discriminated.
\end{prop}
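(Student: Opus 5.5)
We prove the two implications separately, following the same pattern as the proof of \Cref{prop:FIX-prob-1}. Throughout, we read the condition as: for every outcome string $(i_1,\ldots,i_{k-1})$ of the first $k-1$ adaptive measurements, the set $J_{i_1\ldots i_{k-1}} := \{ j \in [n] : M^{(1)}_{i_1}(s_j)\cdots M^{(k-1)}_{i_{k-1}|i_1,\ldots,i_{k-2}}(s_j) >0\}$ of states that are not excluded has at most $d \le d_{op}$ elements, and these states are perfectly distinguishable.

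\emph{Sufficiency.} Assume $S$ and the measurements $M^{(1)},\ldots,M^{(k-1)}$ have this property. The $k$-th party knows the previous outcomes $(i_1,\ldots,i_{k-1})$, so it may choose $M^{(k)}_{\cdot|i_1,\ldots,i_{k-1}}$ to be a measurement that perfectly discriminates the states in $J_{i_1\ldots i_{k-1}}$. Such a measurement exists because $|J_{i_1\ldots i_{k-1}}|\le d_{op}$ and the remaining states are perfectly distinguishable by assumption. We pad it with zero effects to $m$ outcomes. The postprocessing $f$ then sends $(i_1,\ldots,i_k)$ to the label $j\in J_{i_1\ldots i_{k-1}}$ identified by the outcome $i_k$. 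For each true state $s_j$, every outcome string with nonzero probability satisfies $j\in J_{i_1\ldots i_{k-1}}$, and the last outcome is then one whose effect gives $1$ on $s_j$, so $f$ outputs $j$. Summing the probabilities gives success probability $1$ for every $j$, hence $P^{AD}(n,k)=1$.

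\emph{Necessity.} Suppose an adaptive strategy with postprocessing $f$ reaches success probability $1$. As in the proof of \Cref{prop:FIX-prob-1}, convexity forces the success probability to be $1$ for each $j$. Row-stochasticity then shows that a string $(i_1,\ldots,i_k)$ with nonzero probability under $s_j$ must lie in $f^{-1}(j)$, so the preimages separate the supports. Now fix a prefix $(i_1,\ldots,i_{k-1})$ and let $J=J_{i_1\ldots i_{k-1}}$. For $j\in J$, the prefix probability is positive, so conditioned on it the last measurement $M^{(k)}_{\cdot|i_1,\ldots,i_{k-1}}$ must give outcomes $i_k$ with $f(i_1,\ldots,i_k)=j$ with certainty. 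We define $M'_j := \sum_{i_k : f(i_1,\ldots,i_k)=j} M^{(k)}_{i_k|i_1,\ldots,i_{k-1}}$ for $j\in J$, and add the remaining effects to some arbitrary outcome. Then $M'_j(s_j)=1$ for all $j\in J$, so the states in $J$ are perfectly distinguishable. Hence $|J|\le d_{op}$. Taking $d:=\max |J_{i_1\ldots i_{k-1}}|\le d_{op}$, the first $k-1$ measurements $(k-1)$-adaptively antidiscriminate $S$ to at most $d$ states, and the survivors are perfectly discriminable.

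The main obstacle is the bookkeeping between ``excluded'' as defined in \Cref{eq:k-adaptive-exclusion}, which requires at least $n-l$ zeros, and the set of surviving states. Zero-probability prefixes have to be handled: for these any choice of $M^{(k)}$ is allowed and $J$ may even be empty. The case $k=1$ must also be interpreted correctly: there is no prior measurement, so $J=[n]$ and the condition forces $n\le d_{op}$, which agrees with \Cref{prop:FIX-prob-1} and rules out $k=1$ when $n>d_{op}$. The hypothesis $n>d_{op}$ is used only to make the statement nontrivial.
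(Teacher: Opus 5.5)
Your proof is correct, and it reaches the conclusion by a somewhat different mechanism than the paper. The paper uses Bayes' theorem to rewrite the success probability as
$\sum_{i_1,\ldots,i_{k-1}} p_{i_1,\ldots,i_{k-1}} \sum_j p_{j|i_1,\ldots,i_{k-1}} \sum_{i_k \in f_{i_1,\ldots,i_{k-1}}^{-1}(j)} p_{i_k|j,i_1,\ldots,i_{k-1}}$.
It first merges all zero-probability prefixes into other prefixes. It then concludes that each posterior ensemble $\{p_{j|i_1,\ldots,i_{k-1}}, s_j\}$ must be perfectly distinguishable, so its support has at most $d_{op}$ elements. Finally, it traces the Bayes relation backwards party by party to show that a vanishing posterior forces the product of the first $k-1$ effects to vanish on $s_j$, i.e.\ that $s_j$ was excluded.

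You instead work with supports of the joint outcome distribution, as in the proof of Proposition~\ref{prop:FIX-prob-1}. Because you define the survivor set $J_{i_1\ldots i_{k-1}}$ directly through the product of the prefix effects, the exclusion condition \eqref{eq:k-adaptive-exclusion} holds by definition. Coarse-graining $M^{(k)}_{\cdot|i_1,\ldots,i_{k-1}}$ along the fibres of $f$ then shows at once that the survivors are perfectly distinguishable. Your route has three concrete advantages:
\begin{itemize}
\item it never divides by conditional probabilities such as $p_{j|i_1}$, which the paper's telescoping product implicitly requires to be nonzero;
\item it needs no merging of null prefixes, since an empty $J$ is harmless;
\item it makes visible that $n>d_{op}$ is not actually used.
\end{itemize}
You also supply the sufficiency direction, which the paper leaves implicit. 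There the bound $|J|\le d_{op}$ is redundant, because perfect distinguishability of the survivors alone provides the last measurement. If that measurement has more than $m$ outcomes, you can simply pad all the measurements with zero effects.
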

\begin{proof}
    Suppose that $P^{AD}(n,k)=1$. Thus, there exists some set of $n$ states $S= \{s_j\}_{j=1}^n$, some adaptive $m$-outcome measurements $M^{(l)}$ for all $l \in [k]$ and some postprocessing $f: [m]^k \to [n]$ which attain probability 1 in \Cref{eq:adaptive}. For party $l \in [k]$ the quantity $M^{(l)}_{i_l|i_1, \ldots,i_{l-1}}(s_j)$ represents the probability that outcome $i_l$ is obtained when the state is $s_j$ and all the previous $l-1$ parties obtained outcomes $i_1, \ldots, i_{l-1}$. We denote this quantity by $p_{i_l|j,i_1, \ldots,i_{l-1}}$. Now by Bayes' theorem we have that
    \begin{align}\label{eq:Bayes}
       p_{j|i_1, \ldots, i_{l-1}}  p_{i_l|j,i_1, \ldots,i_{l-1}} = p_{j|i_1, \ldots,i_l} p_{i_l|i_1, \ldots, i_{l-1}}
    \end{align}
    for all $(i_1, \ldots, i_l) \in [m]^l$, $l \in [k]$ and $j \in [n]$. Substituting these into the success probability given by \Cref{eq:adaptive} we have that
    \begin{align}
        P^{AD}(n,k) &= \frac{1}{n} \sum_{j=1}^n \sum_{(i_1, \ldots,i_k) \in f^{-1}(j)} p_{i_1|j} p_{i_2|j,i_1} \cdots p_{i_{k-1}|j,i_1, \ldots,i_{k-2}} p_{i_k|j,i_1, \ldots,i_{k-1}} \\
        &= \frac{1}{n} \sum_{j=1}^n \sum_{(i_1, \ldots,i_k) \in f^{-1}(j)} \frac{p_{j|i_1} p_{i_1}}{p_{j}} \frac{p_{j|i_1,i_2} p_{i_2|i_1}}{p_{j|i_1}} \cdots \frac{p_{j|i_1, \ldots,i_{k-1}} p_{i_{k-1}|i_1, \ldots, i_{k-2}}}{p_{j|i_1, \ldots, i_{k-2}}} p_{i_k|j,i_1, \ldots,i_{k-1}} \\
        &= \sum_{j=1}^n \sum_{(i_1, \ldots,i_k) \in f^{-1}(j)} p_{i_1} p_{i_2|i_1} \cdots p_{i_{k-1}|i_1, \ldots,i_{k-2}} p_{j|i_1, \ldots,i_{k-1}} p_{i_k|j,i_1, \ldots,i_{k-1}} \\
        &= \sum_{i_1=1}^{m} \cdots \sum_{i_{k-1}=1}^{m}  p_{i_1, \ldots,i_{k-1}} \sum_{j=1}^n p_{j|i_1, \ldots, i_{k-1}} \sum_{i_k \in f_{i_1, \ldots,i_{k-1}}^{-1}(j)} p_{i_k|j,i_1, \ldots, i_{k-1}},
    \end{align}
    where we have defined the outcome-dependent postprocessing $f_{i_1, \ldots,i_{k-1}}: [m] \to [n]$ by setting $f_{i_1, \ldots,i_{k-1}}(i_k) = f(i_1, \ldots, i_k)$ for all outcomes $(i_1, \ldots,i_k) \in [m]^k$ and we have denoted the joint probability of obtaining outcomes $(i_1, \ldots,i_{k-1})$ by $p_{i_1, \ldots,i_{k-1}} = p_{i_1} p_{i_2|i_1} \cdots  p_{i_{k-1}|i_1, \ldots, i_{k-2}}$ for all $(i_1, \ldots, i_{k-1}) \in  [m]^{k-1}$. 

    Now if $P^{AD}(n,k)=1$, it means that for all $(i_1, \ldots, i_{k-1}) \in [m]^{k-1}$ either $p_{i_1, \ldots, i_{k-1}} = 0$ or 
    \begin{equation}\label{eq:adaptive-perfect}
        \sum_{j=1}^n p_{j|i_1, \ldots, i_{k-1}} \sum_{i_k \in f_{i_1, \ldots,i_{k-1}}^{-1}(j)} p_{i_k|j,i_1, \ldots, i_{k-1}} = \sum_{j=1}^n p_{j|i_1, \ldots, i_{k-1}} \sum_{i_k \in f_{i_1, \ldots,i_{k-1}}^{-1}(j)} M^{(k)}_{i_k|i_1,\ldots, i_{k-1}}(s_j) =1 \,.
    \end{equation}
     We see that without loss of generality we can assume that $p_{i_1, \ldots, i_{k-1}} \neq 0$: if $p_{i_1, \ldots, i_{k-1}} = 0$ it means that the $k-1$ parties did not detect an outcome $(i_1, \ldots,i_{k-1})$. In this case we could simply merge these kind of outcomes in the total measurement of the $k-1$ parties with some other outcome $(i'_1, \ldots, i'_{k-1}) \in [m]^{k-1}$ for which $p_{i'_1, \ldots, i'_{k-1}} \neq 0$; this new outcome, which we can still label by $(i'_1, \ldots, i'_{k-1})$ is now described by the effect
    \begin{equation}
        M_{i'_1} \otimes M_{i'_2|i'_1} \otimes \cdots \otimes M_{i'_{k-1}|i'_1, \ldots,i'_{k-2}} + \sum_{\substack{(i_1, \ldots,i_{k-1}): \\ p_{i_1, \ldots,i_{k-1}} = 0}}  M_{i_1} \otimes M_{i_2|i_1} \otimes \cdots \otimes M_{i_{k-1}|i_1, \ldots,i_{k-2}} \, .
    \end{equation}
    For this new total measurement of the first $k-1$ parties clearly $p_{i_1, \ldots, i_{k-1}} \neq 0$ for all outcomes $(i_1, \ldots,i_{k-1})$.
    
    Let us now fix some outcome string  $(i_1, \ldots, i_{k-1}) \in [m]^{k-1}$. Thus, by \Cref{eq:adaptive-perfect} we must have that the state ensemble $\{p_{j|i_1, \ldots, i_{k-1}}, s_j\}_{j=1}^n$ is perfectly distinguishable. Since $n>d_{op}$ this means that there exists some $d\leq d_{op}$  and $j_1, \ldots, j_{d} \in [n]$ such that only the probabilities $\{p_{j_i|i_1, \ldots, i_{k-1}}\}_{i=1}^{d}$ are non-zero and the states $\{s_{j_i}\}_{i=1}^{d}$ are perfectly distinguishable and the $k$th party manages to discriminate them with his adaptive measurement $M^{(k)}$ and the postprocessing $f_{i_1, \ldots, i_{k-1}}$. Now for any $j \in [n] \setminus \{j_1,\ldots,j_{d}\}$ we thus have that $p_{j|i_1, \ldots, i_{k-1}} = 0$. By \Cref{eq:Bayes} we then have that either $p_{i_{k-1}|j,i_1, \ldots, i_{k-2}}= M^{(k-1)}_{i_{k-1}|i_1, \ldots, i_{k-2}}(s_j)=0$ or $p_{j|i_1, \ldots,i_{k-2}}=0$. This means that either the $(k-1)$th party excluded the state $j$ or it had already been excluded by the previous $k-2$ parties. If it was excluded by the previous $k-2$ parties, i.e., $p_{j|i_1, \ldots,i_{k-2}}=0$, then again by using \Cref{eq:Bayes} we see that then either $p_{i_{k-2}|j,i_1, \ldots, i_{k-3}}= M^{(k-2)}_{i_{k-1}|i_1, \ldots, i_{k-3}}(s_j)=0$ or $p_{j|i_1, \ldots,i_{k-3}}=0$. Again this means that either the $(k-2)$th party excluded the state $j$ or it had already been excluded by the previous $k-3$ parties. We can continue using \Cref{eq:Bayes} this way to see that since the priors at the beginning are nonzero, $p_j = 1/n$, \Cref{eq:k-adaptive-exclusion} must hold for the first $k-1$ parties for the $n-d$ states in $S\setminus \{s_{j_i}\}_{i=1}^{d}$. Since this holds for all outcomes $(i_1, \ldots, i_{k-1}) \in [m]^{k-1}$, the claim follows.
    
\end{proof}
Even though in general it might be hard to easily see when a set of states $S$ can be $k$-adaptively antidiscriminated, an important part of the result is the fact that in order for $S$ to achieve perfect discrimination using adaptive strategy, $S$ must contain some perfectly distinguishable subset. This already rules out perfect adaptive strategies for most sets of quantum states such as the trine states. We will present an instance of (qu)bit-like theory where perfect adaptive discrimination is possible.

\subsection{Multicopy state discrimination in polygon theories}
\label{sec:polygon-theories}

In this section we discuss polygon theories in depth as an example of a GPT other than classical or quantum theory. 
By polygon theories we mean GPTs whose state spaces are regular 2D polygons. 
Polygon theories were introduced in \cite{Janotta_2011}. 
The question arises why one should consider toy theories like polygons when studying natural phenomena and practical applications in information processing tasks. 
What can we learn about physical theories like classical and quantum theory? 
Well, first of all, the operational dimension of all polygons is 2 so that they are (qu)bit-like in this sense. 

As described in the beginning of \Cref{sec:GPTs}, features like nonlocality are nonclassical features and not genuine quantum features. Nonlocality is witnessed by violation of a Bell inequality \cite{chsh}. However, also quantum theory is bounded in Bell inequalities, that is, quantum correlations satisfy Tsirelon's bound \cite{Cirelson1980QuantumGO}. On the other hand, correlations stemming from the previously mentioned PR boxes \cite{rohrlich1995nonlocalityaxiomquantumtheory} are able to achieve the algebraic maximum of Bell inequalities.
Why is this the case?

The axiom of \textit{macroscopic locality} states that any physical theory shall lead to classical theory in the continuum limit \cite{Navascu_s_2009}, meanwhile retrieving Tsirelon's bound if satisfied.
The authors in \cite{Janotta_2011} showed that if the state space of a theory exhibits strong self-duality, then the correlations obtainable from this theory must be compatible with macroscopic locality. This motivates polygons with odd number of vertices (odd polygons) because they are strongly self-dual and hence, should allow for similar correlations as qubit.

On the other hand, the information storability for polygons with even number of vertices (even polygons) is always equal to 2 \cite{Kimura_2018}. It follows that even polygons are (qu)bit-like theories in this sense. This makes a comparison, especially in information and communication tasks, very interesting.

Moreover, polygons are highly symmetric and relatively simple, making them popular GPTs other than classical and quantum to be considered. Notably, adding infinitely many more vertices to polygons yields a circle, which, in turn, corresponds to the state space or \emph{real qubit (rebit)}, a state space corresponding to a projection of the Bloch sphere onto the $x-z$ plane.

\subsubsection{Comparison of strategies} \label{sec:global-stragegies-in-polygons}

Before starting with the actual mathematical definition of a polygon GPT, we present our main results on polygons that give some interesting insights on different strategies in different polygons. Let us start with $SEP$ and $GLOBAL$. For a regular polygon $\polyspace$ with $m$ pure states (vertices) just through iterations of linear programs we can  optimize the average success probability of multi-copy states discrimination in the case of $k=2$ copies of $n=3$ states for both $SEP$ and $GLOBAL$ strategies. 

We note that entanglement in GPTs is not as straightforward as in quantum theory as there is no unique way to define composite system for GPTs. The difference between separable and general global strategies is as follows: In a separable strategy, the measurements must be of separable form (similarly to quantum theory) as tensors. However, in general, a global measurement just needs to be a positive functional on the states of the composite system which we can choose differently. Our choice of composite system corresponds to the so-called \emph{minimal tensor product} which only takes convex combinations of product states as the states of the composite system. By duality the effect space then contains the least restrictive set of functionals which are only required to be positive on product states. In quantum these would correspond to block-positive effects.

The details of the programs can be found in \Cref{appendix:polygon-theories}. In particular we considered 2-copy state discrimination of ensembles of 3 pure states. We optimized the $SEP$ strategies for polygons up to $m=15$ vertices and $GLOBAL$ strategies for polygons up to $m=8$ vertices (the $GLOBAL$ program is computationally more demanding than the $SEP$ program). Our findings can bee seen in \Cref{fig:global-strategies-in-polygons}. 

\begin{figure}[!t]
    \centering
    \includegraphics[width=0.9\textwidth]{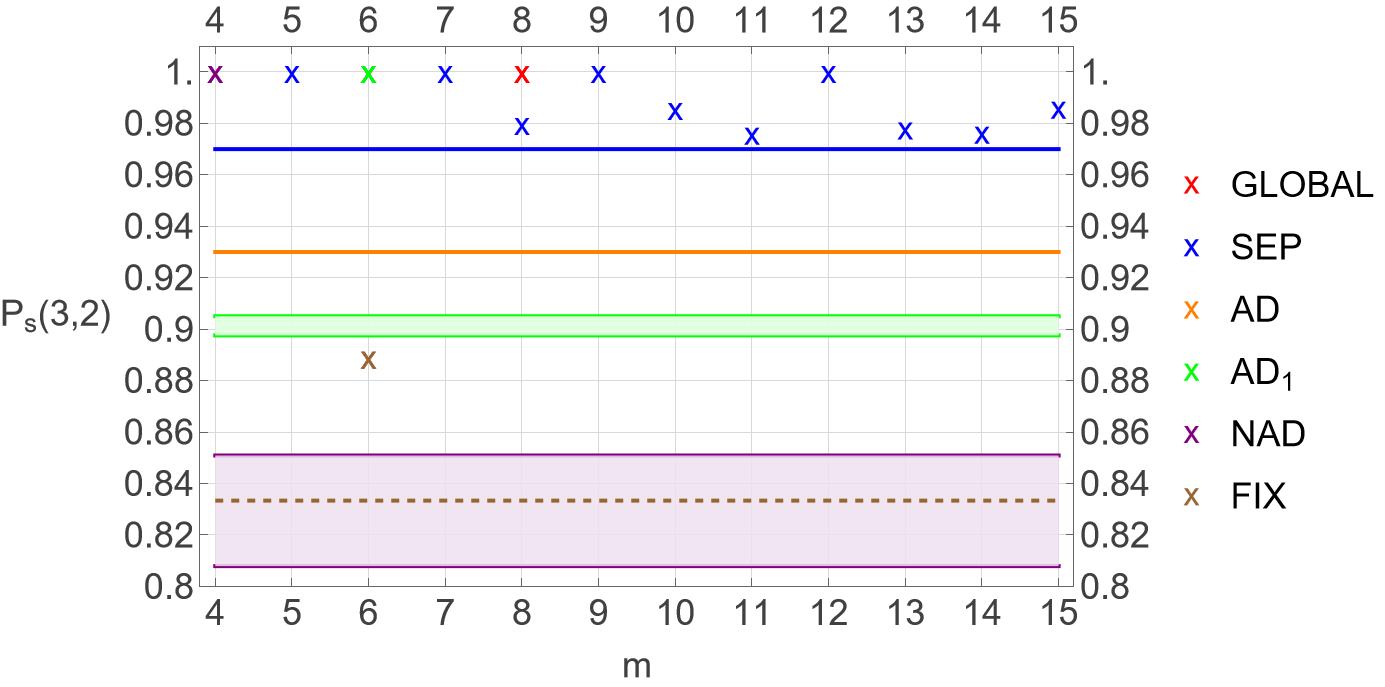}
    \caption{This plot shows the success probability for 2 copies from 3-state ensembles, given a theory and measurement strategy. Solid lines correspond to the trine ensemble and the dashed line to classical theory. For the double trine ensemble, the optimal success probability by \textit{SEP} and \textit{AD} measurements are analytically derived in \cite{eric_optimal_detec}. The bounds for the success probability by $AD_1$ and $NAD$ measurements (green and purple shaded regions respectively) are numerically calculated in \cite{Dutra_2026}. Interestingly, we numerically find that entangled measurements can outperform separable measurements for the polygon theory with $m=8$.} 
    \label{fig:global-strategies-in-polygons}
\end{figure}

Let us analyze \Cref{fig:global-strategies-in-polygons}. We first note that just from the numerical optimization we only obtain $SEP$ and $GLOBAL$ strategies so just from the numerics we obtain the points for polygons either as $SEP$ or $GLOBAL$ and the further refinement of different $LOCC$ (which are also $SEP$) strategies which are represented in the plot is explained below. 

Just from the numerics we see that up until heptagon ($m=7$), it can be seen that separable strategies suffice for perfect state discrimination. Though, the octagon is the first polygon in which a general global measurement is necessary for perfect discrimination. This clearly shows a gap between separable and global strategies. Interestingly, from there on the polygons behave very differently. It would have been intuitive that, due the symmetries and regularities of polygons, there should be patterns. Though, the absence of those patterns motivates further studies.

While the numerical optimization only showed the difference between $SEP$ and $GLOBAL$ in anticipation we have already marked down the further specification of the strategy that we find in the next section. In particular, we see that in square perfect discrimination is possible already with $NAD$ strategy and in hexagon with $AD_1$ strategy. On the other hand, we we find an instance of $FIX$ hexagon strategy that can beat bit unlike qubit. We also show that perfect $AD$ strategies are not possible in any other polygon other than square and hexagon so the perfect $SEP$ strategies cannot be $AD$. Our conjecture which we state in the end is that those perfect $SEP$ strategies cannot be general $LOCC$ either.

Thus our results show that we have found examples of various different strategies ($FIX$, $NAD$, $AD$, $SEP$ and $GLOBAL$) in different polygons which can beat both bit and qubit. This in particular shows that while different measurement strategies do place some restrictions on how well multi-copy discrimination can be achieved, the local properties of the theory play even more significant role as there are simple (qu)bit-like theories which can outperform bit and qubit even with very restricted measurement strategies.

\subsubsection{Construction} \label{sec:construction-of-polygons}
In a polygon theory, the state space $K$ is set to be represented by a polygon with $m$ vertices $\mathcal{P}(m)$ and corresponding effect space $\mathcal{E}(\mathcal{P}(m))$.
Since we often have to distinguish between even and odd polygons, we denote the set of even numbers as $2\mathds{N}$ and write for an even $k$: $k \in 2 \mathds{N}$ as well as for an odd $k$: $k \notin 2 \mathds{N}$.

\begin{defn}
    For $m \in \mathds{N}$ we define vectors
    \begin{equation}
        s_i=
\begin{pmatrix}
    r_m \cos \left(\frac{2 i \pi }{m}\right) \\
    r_m \sin \left(\frac{2 i \pi }{m}\right) \\
    1
\end{pmatrix}, \quad i \in [m], \, r_m = \sqrt{\sec{\frac{\pi}{m}}} 
    \end{equation}
    as the pure states of our theory. The state space is obtained as the convex hull of pure states $\mathcal{P}(m)= \mathrm{conv}\{s_i : i \in [m] \}$.
\end{defn}
Those pure states are also referred to as extreme points of the state space. For given $m \in \mathds{N}$ this yields a regular $m$-sided polygon embedded into $\real^3$.
Above states are normalized in the sense that we define an order unit $\polyone$, that is, a functional from the effect space $\mathcal{E}(\polyspace)$ that always outputs 1 when acting on a state: $\polyone(s)=1 \mspace{5mu} \forall s \in \polyspace$ and we clearly have $\polyone = (0,0,1)^\intercal$.
Furthermore the zero effect is defined as $0 = (0,0,0)^\intercal$. For the rest of the effect space, we have to distinguish between even and odd polygons because they manifest different symmetries. For example, even polygons always have point symmetry around the maximally mixed state $s_0 = (0,0,1)^\intercal$, which is not the case for odd polygons.

\begin{figure}[!t]
    \centering
        \begin{tikzpicture}[scale=1.9] 

          \def\n{7}
          \def\radius{1}

          \foreach \i in {1,...,7} {
            \coordinate (P\i) at ({\radius*cos(360/\n * \i + 90)}, { \radius*sin(360/\n * \i + 90)});

            \fill (P\i) circle (0.012);
          }

          \draw[thick] (P1) -- (P2) -- (P3) -- (P4) -- (P5) -- (P6) -- (P7) -- cycle;

          \foreach \i in {4,...,6} {
            \node[font=\small, right] at (P\i) {${s_\i}$};
            \fill[black] (P\i) circle (0.05);
          }
        \foreach \i in {1,...,3} {
            \node[font=\small, left] at (P\i) {${s_\i}$};
            \fill[black] (P\i) circle (0.05);
          }
        \node[font=\small, above] at (P7) {${s_7}$};
        \fill[black] (P7) circle (0.05);

        \draw[red,dashed,thick] (P1) -- (P2) node[midway,left] {$\bar f_5$};
        \draw[red,dashed,thick] (P2) -- (P3) node[midway,below left] {$\bar f_6$};
        \draw[red,dashed,thick] (P3) -- (P4) node[midway,below] {$\bar f_1$};
        \draw[red,dashed,thick] (P4) -- (P5) node[midway,below right] {$\bar f_7$};
        \draw[red,dashed,thick] (P5) -- (P6) node[midway,right] {$\bar f_2$};
        \draw[red,dashed,thick] (P6) -- (P7) node[midway,above right] {$\bar f_3$};
        \draw[red,dashed,thick] (P7) -- (P1) node[midway,above left] {$\bar f_4$};
        
    \end{tikzpicture}
    \caption{The Heptagon state space $\mathcal{P}(7)$ is represented by pure states $s_i$. The mixed states lie on the edges and inside the polygon. The extreme effects $f_i$ are dual vectors identical to the pure states $s_i$. It can be seen that the complement effects $\bar{f}_i$ correspond to picking out the edges of the polygon.}
    \label{fig:heptagon}
\end{figure}
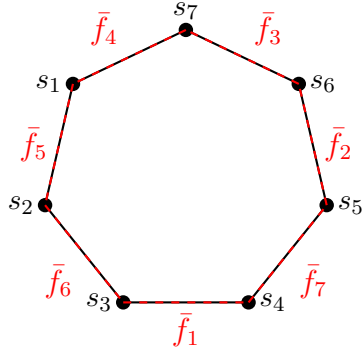

\begin{defn}
    For even polygons $m \in 2\mathds{N}$ the effect space is defined as $\mathcal{E}(\polyspace)= \mathrm{conv}\{o,\polyone,e_1,\ldots,e_m\}$, where
    \begin{equation}
        e_i= \frac{1}{2}
        \begin{pmatrix}
        r_m \cos\left(\frac{(2i-1) \pi }{m}\right) \\
        r_m \sin\left(\frac{(2i-1) \pi }{m}\right) \\
        1
        \end{pmatrix}, \quad i \in [m].
    \end{equation}
\end{defn}
Here, an effect $e_i$ can be viewed as a functional that picks a certain edge in the sense that one gets outcome 'yes' with probability 1 when a state lies on that specific edge. The complement effect $\bar e_i := \one_{\polyspace} - e_i = e_{i \plusmod \frac{m}{2}}$ (where we have denoted $a \plusmod b = a + b \ (\mathrm{mod} \ m)$) thus picks the opposing edge.
In contrast, there are no opposing edges and the state space is self dual in odd polygons. As a consequence, we have to define the complement effects separately.
\begin{defn}
    The set $\mathcal{E}(\polyone) = \mathrm{conv}\{o, \polyone,f_1, \ldots,f_m,\bar f_1, \ldots, \bar f_m \}$ is the effect space of an odd polygon, whereas
    \begin{equation}
        f_i
        = \frac{1}{1+r^2_m}
        \begin{pmatrix}
        r_m \cos\left( \frac{2i \pi }{m}\right) \\
        r_m \sin\left( \frac{2i \pi }{m}\right) \\
        1
        \end{pmatrix}, \quad \bar f_i = \polyone -f_i, \, i \in [m].
    \end{equation}
\end{defn}

One example is provided in \Cref{fig:heptagon}.
Our setting for the state discrimination task is as follows: A source prepares two copies of an unknown state, where Alice and Bob receive one each respectively.
If not stated otherwise, we are always going to consider ensembles of three pure states $S = \{s_x,s_y,s_z\}$ stemming from $\polyspace$, whereas it is assumed that all states are different $x \neq y, \, x \neq z \,, y \neq z$.
In addition, unless stated otherwise, here we will only consider strategies in $AD$, i.e., strategies that with onlt one-way communication, that is, w.l.o.g.\ we are assuming that Alice is the only party to send and Bob is the only party to receive classical messages.
Moreover, we exclude the cases $m \in \{1,2,3\}$ as the first one is trivial and the latter two correspond to probability simplices, i.e.\ classical bit- or trit-theory.

\subsubsection{Local strategies in polygons}

Let us start with the fixed measurement strategy. As all polygons have operational dimension $d_{op}=2$, we see from \Cref{prop:FIX-prob-1} that perfect discrimination of $k$ copies of $n$ states is not possible for any $n >3$  and $k\in \naturale$. However, we see that in hexagon we find a set of $n=3$ states such that $k=2$ copies of them can be discriminated with higher average success probability than what is possible in bit. Let us take $S= \{s_2,s_4,s_6\}$ and $M= \{\frac{2}{3}e_2, \frac{2}{3}e_4,\frac{2}{3} e_6\}$. Furthermore, let us consider the following postprocessing function $f: [3]^2 \to [3]$: 
\begin{align}
    f(1,1)=f(1,2) = f(2,1)=1, \\
    f(2,2)=f(2,3) = f(3,2)=2, \\
    f(3,3)=f(3,1) = f(1,3)=3 \, .
\end{align}
Using these states, measurement and postprocessing in the fixed measurement strategy in \Cref{eq:fix-def} one obtains the success probability $\frac{8}{9}$. Thus, 
\begin{align}
    P^{FIX}_{\mathcal{P}(6)}(3,2) \geq \frac{8}{9} > \frac{5}{6} =  \Pb(3,2) \, .
\end{align}
Again, our previous resulsts show that hexagon beats qubit as well in $FIX$ strategies.

Let us move on to the non-adaptive strategies $NAD$. The smallest polygon we consider is the square, which already portrays a special case. Namely, the set of pure states can be \emph{perfectly pairwisely distinguished}, that is, every pair $\{s_a,s_b\}  \subset \mathcal{P}(4)$ can be perfectly discriminated. For each pair of pure states the discriminating measurement can be chosen to be $M=\{e, \bar{e}\}$, where $e \in \{e_1,e_2\}$ (see \Cref{fig:hexagon_and_square}). It is exactly this property that makes the square state space extremely powerful in state discrimination tasks, allowing for perfect discrimination in quite restricted settings.

\begin{prop}
    All four pure states of square $\mathcal{P}(4)$ can always be perfectly discriminated with $k=2$ copies with a non-adaptive strategy with only one bit of communication. That is, $P^{NAD}_{\mathcal{P}(4)}(n,2)=1$ for $n \in \{3,4\}$.
\end{prop}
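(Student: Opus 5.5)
The plan is to write down one explicit non-adaptive protocol and check that it succeeds with certainty. The idea is that the square's four pure states are pairwise perfectly distinguishable, and, more strongly, the two ``edge'' measurements $\{e_1,\bar e_1\}$ and $\{e_2,\bar e_2\}$ split the four vertices along two transversal cuts. Together, the two cuts identify every vertex uniquely. Since each copy is measured by a different party, Alice can perform one cut and Bob the other. Alice then sends her single binary outcome to Bob, who outputs the guess.

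\textbf{Step 1: evaluate the edge effects on the vertices.} For $m=4$ we have $r_4^2=\sec(\pi/4)=\sqrt2$, and
\begin{equation}
e_i(s_j)=\tfrac12\bigl(1+\sqrt2\cos\bigl(\tfrac{(2j-2i+1)\pi}{4}\bigr)\bigr).
\end{equation}
This gives $e_i(s_i)=e_i(s_{i-1})=1$ and $e_i(s_{i+1})=e_i(s_{i+2})=0$, with indices mod $4$. Hence $e_1$ picks the edge $\{s_4,s_1\}$, and $\bar e_1=e_3$ picks $\{s_2,s_3\}$. Likewise $e_2$ picks $\{s_1,s_2\}$, and $\bar e_2=e_4$ picks $\{s_3,s_4\}$.

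\textbf{Step 2: fix the protocol.} Alice measures $A=\{e_1,\bar e_1\}$ on her copy, and Bob measures $B=\{e_2,\bar e_2\}$ on his. The measurements are fixed in advance and do not depend on any outcome, so the strategy lies in $NAD$. Note that, unlike $FIX$, the two parties may use different measurements. The postprocessing $f$ maps the outcome pair $(a,b)$ to the unique vertex lying in both selected edges: $(e_1,e_2)\mapsto s_1$, $(\bar e_1,e_2)\mapsto s_2$, $(\bar e_1,\bar e_2)\mapsto s_3$, and $(e_1,\bar e_2)\mapsto s_4$.

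\textbf{Step 3: verify perfect success.} For each vertex $s_j$, both outcomes are deterministic by Step 1. The product probability $A_a(s_j)B_b(s_j)$ therefore equals $1$ exactly on the pair that $f$ maps to $s_j$, and $0$ elsewhere. Substituting into the success probability in \Cref{eq:fix-def}, generalized to different local measurements, gives every term equal to $1$, so the success probability is $1$ for $n=4$. For $n=3$, take any three of the four vertices and keep the same measurements. The fourth label is simply never correct, and is never reached, on the ensemble. Since the optimum cannot exceed $1$, this gives $P^{NAD}_{\mathcal{P}(4)}(n,2)=1$ for $n\in\{3,4\}$.

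\textbf{Step 4: count the communication.} Alice's outcome is binary, so transmitting it to Bob requires exactly one bit. Bob then applies $f$ locally.

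The only step needing real care is Step 1. There one must get the indexing of the $e_i$ relative to the $s_j$ right, so that the two cuts really are transversal (adjacent edges) rather than identical, as the pairs $\{e_1,e_3\}$ and $\{e_2,e_4\}$ would be. Everything else is bookkeeping.
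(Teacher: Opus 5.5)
Your proposal is correct and is essentially the paper's proof: the two transversal edge measurements $\{e_1,\bar e_1\}$ and $\{e_2,\bar e_2\}$ go to the two copies, and Alice's one-bit outcome combined with Bob's pins down the vertex. The paper assigns $\{e_2,\bar e_2\}$ to Alice and $\{e_1,\bar e_1\}$ to Bob, the reverse of yours; your explicit evaluation of $e_i(s_j)$ and your treatment of the $n=3$ case simply make rigorous what the paper argues from the figure.
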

\begin{proof}
    Taken from \Cref{fig:hexagon_and_square}, w.l.o.g.\ we apply $e_2$ to check if the drawn state is on a top or bottom edge of the square and save the outcome into one bit $a$. Next, we check with $e_1$ if the drawn state is on the left or right edge of the square. Together with bit $a$, this uniquely determines every pure state in $\mathcal{P}(4)$. Thus Alice and Bob always apply non-adaptive measurements $\{e_2, \bar{e}_2\}$ and $\{e_1, \bar{e}_1\}$, respectively.
    \begin{equation}
        \xymatrix{
        & & e_1 \ar@{=>}[r]^{b=0} \ar@{=>}[rd]_{b=1} &s_1\\
        & & &s_2\\
        \Box \ar[r]^{s^{\otimes 2}} &e_2 \otimes \id \ar[rdd]^{s\in \{s_3,s_4\}}_{a=1} \ar[ruu]^{a=0}_{s\in \{s_1,s_2\}}\\
        & & &s_3\\
        & & e_1 \ar@{=>}[r]_{b=0} \ar@{=>}[ru]^{b=1} &s_4
        }
    \end{equation}
\end{proof}

\begin{figure}
    \begin{minipage}{.45\textwidth} 
        \begin{tikzpicture}[scale=2]
          \def\n{6}
          \def\radius{1}
          \foreach \i in {1,...,6} {
            \coordinate (P\i) at ({\radius*cos(360/\n * \i + 90)}, { \radius*sin(360/\n * \i + 90)});
            \fill (P\i) circle (0.012);
          }
          \draw[thick] (P1) -- (P2) -- (P3) -- (P4) -- (P5) -- (P6) -- cycle;
          \draw[red,dashed,thick] (P1) -- (P2) node[midway,left,xshift=-2pt] {$e_2$};
          \draw[red,dashed,thick] (P2) -- (P3) node[midway,below left,xshift=-2pt] {$e_3$};
          \draw[red,dashed,thick] (P3) -- (P4) node[midway,below right,xshift=2pt] {$e_4$};
          \draw[red,dashed,thick] (P4) -- (P5) node[midway,right,xshift=2pt] {$e_5$};
          \draw[red,dashed,thick] (P5) -- (P6) node[midway,above right,xshift=2pt] {$e_6$};
          \draw[red,dashed,thick] (P6) -- (P1) node[midway,above left,xshift=-2pt] {$e_1$};

          \foreach \i in {1,...,2} {
            \node[font=\small, left] at (P\i) {${s_\i}$};
            \fill[black] (P\i) circle (0.05);
          }
        \foreach \i in {4,...,5} {
            \node[font=\small, right] at (P\i) {${s_\i}$};
            \fill[black] (P\i) circle (0.05);
          }
        \node[font=\small, below] at (P3) {${s_3}$};
        \fill[black] (P3) circle (0.05);

        \node[font=\small, above] at (P6) {${s_6}$};
        \fill[black] (P6) circle (0.05);
        \node at (-1,-1.2,.5) {a)};
        \end{tikzpicture}
    \end{minipage}
    \hfill
    \begin{minipage}{.45\textwidth}
        \begin{tikzpicture}[scale=.8]
            \draw[thick, black] (0,0) rectangle (4,4);
        
            \node[below left] at (0,0) {$s_1$};
            \fill[black] (0,0) circle (0.1);
            \node[below right] at (4,0) {$s_2$};
            \fill[black] (4,0) circle (0.1);
            \node[above right] at (4,4) {$s_3$};
            \fill[black] (4,4) circle (0.1);
            \node[above left] at (0,4) {$s_4$};
            \fill[black] (0,4) circle (0.1);
        
            \draw[red,dashed,thick] (0,-1) -- (0,5) node[midway,left,xshift=-2pt] {$e_1$};
            \draw[red,dashed,thick] (-1,0) -- (5,0) node[midway,below,yshift=-2pt] {$e_2$};
            \draw[red,dashed,thick] (4,-1) -- (4,5) node[midway,right,yshift=2pt] {$e_3$};
            \draw[red,dashed,thick] (-1,4) -- (5,4) node[midway,above,yshift=+2pt] {$e_4$};
            \node at (0,-1.2,2) {b)};
        \end{tikzpicture}
    \end{minipage}
    \caption{(a) The black nodes in hexagon state space represent the pure states, whereas the pure effects (red) can be understood as projectors on an edge. Thus, effects of opposing edges, e.g.\ $e_1$ and $e_4$ are inverse to each other $e_1 = \mathds{1}_{\mathcal{P}(6)} - e_4$. (b) Square state space is of unique structure, that is, every pair of states can be distinguished by one pure effect.}
    \label{fig:hexagon_and_square}
\end{figure}

We can now show that it is exactly the property of pairwise distinguishability that is needed for perfect discrimination for any adaptive strategy in polygons.

\begin{prop}\label{prop:polygon-perfect-ad}
    Perfect adaptive discrimination of $k=2$ copies of $n=3$ states $S \subset \polyspace$ for $m>4$ is possible only if the states in $S$ are perfectly pairwisely distinguishable.
\end{prop}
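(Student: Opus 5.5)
The plan is to reduce the statement to \Cref{prop:adaptively-antidist-to-dist-states} and then argue pair by pair using a coarse-graining of Alice's measurement. For $k=2$, $n=3$ and $d_{op}=2<3$ (which holds for every polygon), \Cref{prop:adaptively-antidist-to-dist-states} says the following. If $P^{AD}(3,2)=1$ is achieved with $S=\{s_a,s_b,s_c\}\subset\polyspace$, then Alice's single measurement $M^{(1)}=\{M_i\}_{i=1}^{m'}$ has two properties. Every outcome $i$ occurring with nonzero probability vanishes on at least one state of $S$. Moreover, the states not excluded by outcome $i$ can be perfectly discriminated by Bob's conditional measurement $M^{(2)}_{\cdot|i}$. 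I will use this in the following concrete form. For each outcome $i$ let $X_i=\{x\in\{a,b,c\}: M_i(s_x)>0\}$ be its support on $S$. Then $|X_i|\le 2$ for all $i$, and whenever $X_i=\{x,y\}$ the pair $s_x,s_y$ is perfectly distinguishable.

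Next, fix an arbitrary pair, say $\{s_a,s_b\}$, and split into two cases according to whether some outcome of Alice has support containing both states.

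\emph{Case 1: some outcome $i$ has $\{a,b\}\subseteq X_i$.} Since $|X_i|\le 2$, we get $X_i=\{a,b\}$. By the reformulated proposition, Bob then perfectly discriminates $s_a$ and $s_b$, so the pair is perfectly distinguishable.

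\emph{Case 2: no outcome supports both $a$ and $b$.} Then every outcome $i$ satisfies $M_i(s_a)=0$ or $M_i(s_b)=0$. I would coarse-grain Alice's measurement into the two-outcome measurement
\begin{equation}
N_1=\sum_{i:\, a\in X_i} M_i ,\qquad N_2=\one_{\polyspace}-N_1=\sum_{i:\, a\notin X_i} M_i .
\end{equation}
By construction $N_2(s_a)=0$, so $N_1(s_a)=1$. Every $i$ with $a\in X_i$ has $b\notin X_i$, so $N_1(s_b)=0$. Hence $\{N_1,N_2\}$ perfectly distinguishes $s_a$ and $s_b$.

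In either case the pair is perfectly distinguishable, and by symmetry the same holds for all three pairs, which gives the claim. The outcomes of zero total probability can be merged away as in the proof of \Cref{prop:adaptively-antidist-to-dist-states}, so they cause no trouble.

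The main point to check carefully is the translation of \Cref{prop:adaptively-antidist-to-dist-states} into the statement about the supports $X_i$. In particular, ``the remaining states can always be perfectly discriminated'' must mean that Bob perfectly discriminates the \emph{whole} support $X_i$ of each outcome, not merely a subset of it. This follows from \Cref{eq:adaptive-perfect}: each $s_x$ with $x\in X_i$ has a nonzero posterior $p_{x|i}$, so it must be identified by Bob with certainty. Note that the argument does not use the specific geometry of $\polyspace$ beyond $d_{op}=2$. The assumption $m>4$ only serves to separate this statement from the square, where pairwise distinguishability holds for all pure states anyway.
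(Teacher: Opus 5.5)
Your proof is correct. It is also more direct and more general than the paper's.

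\textbf{How the paper argues.} The paper's proof has two stages. In the first stage it uses the facial structure of $\mathcal{P}(m)$ and the hypothesis $m>4$ to rule out any outcome of Alice that vanishes on two of the three states. Such an effect forces those two states onto a common edge. A normalisation argument then yields an outcome positive on both of them. Bob would therefore have to perfectly distinguish two states lying on one edge, which is impossible for $m>4$. Only after establishing that every outcome excludes exactly one state does the paper run its case analysis. The decisive step of that analysis is coarse-graining Alice's measurement into $e=\sum_{j:\,M_j(t_1)=0}M_j$ and $\bar e$, which is exactly your Case~2.

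\textbf{What your route changes.} Your dichotomy (``some outcome of Alice is supported on both $s_a,s_b$'' versus ``none is'') makes the paper's first stage unnecessary. An outcome supported on a single state does no harm in either case. As a result you use nothing about polygons: the conclusion holds in any operational theory, for $m=4$ as well, and for any number $n$ of states. You also avoid the polygon-specific geometric claims, which the paper only sketches. The paper's intermediate fact, that no nonzero outcome of Alice vanishes on two states of $S$, can be recovered afterwards. It follows from your conclusion together with two facts: the zero set of a nonzero effect on a polygon is at most an edge, and two states on a common edge are not perfectly distinguishable when $m>4$.

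\textbf{Two small simplifications.}
\begin{enumerate}
\item[(a)] In Case~1 you need neither $|X_i|\le 2$ nor $d_{op}=2$. A measurement that perfectly discriminates all states in $X_i$ also perfectly discriminates any pair in $X_i$: coarse-grain Bob's outcomes into $\{E_a,\mathds{1}-E_a\}$, where $E_a$ collects the outcomes assigned to $a$.
\item[(b)] Outcomes of zero probability need no merging. They have $X_i=\emptyset$ and simply fall into $N_2$.
\end{enumerate}
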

\begin{proof}
    According to \Cref{prop:adaptively-antidist-to-dist-states} perfect discrimination of $k=2$ copies of $n=3$ states it is possible only when there exists a set of $3$ states $S=\{t_1, t_2, t_3\}$ that can be $1$-adaptively discriminated into a set of at most $d\leq 2$ perfectly discriminable states. This simply means that there exists a measurement $M$ which perfectly excludes one or two states from $S$ and the remaining states are perfectly distinguishable. 
    
    Suppose that $M$ excludes two states, so that w.l.o.g.\ $M_1(t_1)=M_1(t_2)=0$. It is straightforward to see that in polygons this can only happen if $t_1$ and $t_2$ lie on the same edge of the polygon and the effect $M_1$ is complement to the extreme effect that picks that edge. w.l.o.g.\ we can have $t_1$ and $t_2$ lie on the edge between the pure states $s_1$ and $s_2$ in which case $M_1 = \bar{e}_2$ (even polygons) or $M_1 = \bar{f}_2$ (odd polygons). Since $M$ is a measurement, there exists other outcomes $i,j \in \{2, \ldots,m\}$ such that $M_{i}(t_1)>0$ and $M_{j}(t_2)>0$. 

     Suppose now that we cannot choose an outcome $i$ such that $M_{i}(t_1)>0$ and $M_{i}(t_2)>0$. Thus, for all other outcomes $i \in \{2, \ldots,m\}$, we have that either $M_{i}(t_1) >0$ and $M_{i}(t_2) = 0$ or $M_{i}(t_2) >0$ and $M_{i}(t_1) = 0$. Since $t_1$ and $t_2$ lie on the same edge this is only possible if they are the two extreme points of that edge, i.e., $t_1=s_1$ and $t_2=s_2$. It can be seen that then $M_{i}(t_1) >0$ and $M_{i}(t_2) = 0$ is only possible if $M_i$ is a convex mixture of the zero effect and effects $\bar{e}_2$ and $\bar{e}_3$ (even polygons) or effects $\bar{f}_2$, $f_{2\plusmod\frac{m-1}{2}}$  and $f_{2\plusmod\frac{m+1}{2}}$ (odd polygons). However, now if $m>4$ we can never fulfill the normalization condition  $\sum_i M_i = \one_{\polyspace}$. This is a contradiction. Thus, there always exists outcome $i$ such that $M_{i}(t_1)>0$ and $M_{i}(t_2)>0$. Since $M$ must always exclude at least one state it follows that $M_i(t_3)=0$. Then clearly there is a non-zero probability of detecting outcome $i$ when measuring $M$ on a randomly chosen state $t \in S$ and in that case we can only exclude state $t_3$. Thus states $t_1$ and $t_2$ must be perfectly distinguishable. However, for polygons larger than square, perfect discrimination of two states lying on the same edge is never possible (arguments from the proof of upcoming \Cref{prop:no-pairwise-dist} can be used to see this). Thus, $M$ cannot exclude two states.
    
    It now follows that $M$ must exclude just one state so that w.l.o.g.\ $M_1(t_1)=0$. Then clearly if we detect outcome $1$ (or any other outcome $i$ such that $M_i(t_1)=0$), the randomly chosen state from $S$ must be either $t_2$ or $t_3$ and they must be perfectly distinguishable. On the other hand, if we detect an outcome $j \in [m]$ such that $M_j(t_1)>0$, it follows that either $M_j(t_2)=0$ or $M_j(t_3)=0$ but not both can be zero. Suppose that $M_j(t_2)=0$. Then the states $t_1$ and $t_3$ must be perfectly distinguishable. Similarly is $M_j(t_3)=0$ the states $t_1$ and $t_2$ must be perfectly distinguishable. Suppose that there does not exist an outcome $j$ such that $M_j(t_3)=0$ and $M_j(t_1)>0$. Then $M_j(t_2)=0$. Let us denote $e = \sum_{j: \ M_j(t_1)=0} M_j$. Then clearly $e(t_1) = 0$ and $\bar{e}(t_1)=1$. On the other hand now $\bar{e}(t_2)=0$ and $e(t_2)=1$. Then the states $t_1$ and $t_2$ are perfectly distinguishable. By similar arguments if there does not exist an outcome $j$ such that $M_j(t_2)=0$ and $M_j(t_1)>0$, then $t_1$ and $t_3$ must be perfectly distinguishable. Thus, we have exhausted all the possibilities for all the outcomes of $M$ and in all cases the states in $S$ are perfectly pairwisely distinguishable.
    \end{proof}

As we saw earlier perfect (non)adaptive discrimination for $k=2$ copies of $n \in \{3,4\}$ states in square was possible and as was just observed in \Cref{prop:polygon-perfect-ad} it was possible only because of perfect pairwise distinguishability. However, for $n=3$ pure states in polygons we can show the following no-go result about pairwise distinguishability:

\begin{prop} \label{prop:no-pairwise-dist}
    Let $m \notin\{3, 4,6\}$. There does not exist a set of three states in $\mathcal{P}(m)$ that can be perfectly pairwisely distinguished.
\end{prop}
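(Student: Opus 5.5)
Two states $t,t'$ of $\mathcal{P}(m)$ are perfectly distinguishable exactly when some effect $e$ satisfies $e(t)=1$ and $e(t')=0$. Since $\mathcal{E}(\polyspace)$ is the convex hull of the listed extreme effects, $e(t)=1$ forces $t$ into the face $\{e=1\}$ of the state space and $t'$ into the face $\{e=0\}$. So the plan is to describe all pairs of faces $(F,G)$ that are picked out as $\{e=1\}$ and $\{e=0\}$ by a single effect, and then show that no three states can be pairwise separated by such face pairs.

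The first step is to show that both faces are proper, so each is a vertex or an edge. If $e(t)=1$ and $e(t')=0$, then for the nontrivial effects the level set $\{e=1\}$ is a supporting line of the polygon, and so is $\{e=0\}$. These two lines are parallel, since both are level sets of the same affine function on the plane of the polygon. Hence a distinguishing pair requires two parallel supporting lines of $\mathcal{P}(m)$. Moreover $e$ takes values in $[0,1]$ on $\mathcal{P}(m)$, so these lines must be exactly the two extremal level lines of $e$ in that direction. Consequently $t,t'$ lie on opposite ``sides'' of the polygon with respect to some width direction.

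The second step uses regularity. For even $m$, opposite edges are parallel, so the supporting pairs are either (edge, opposite edge) or (vertex, opposite vertex) in a generic direction. For odd $m$, each vertex is opposite an edge, so the pairs are (vertex $s_i$, opposite edge) or, generically, (vertex, vertex). In every case the set of states distinguishable from a given $t$ can be written down explicitly:
\begin{itemize}
\item for $t$ in the interior of an edge, only the states on the opposite face;
\item for $t$ a vertex $s_i$, the union of faces reachable by directions in the normal cone at $s_i$. This is an arc of the boundary of angular width $2\pi/m$, located antipodally.
\end{itemize}
Writing this in terms of indices, for odd $m$ the vertex $s_i$ is distinguishable exactly from the points of the edge $[s_{i+(m-1)/2},s_{i+(m+1)/2}]$. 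For even $m$, $s_i$ is distinguishable exactly from points on the two edges incident to $s_{i+m/2}$.

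The final step is the combinatorial check. Suppose $t_1,t_2,t_3$ are pairwise distinguishable. First, none of them can be interior to an edge or the polygon: an interior-of-edge point is only distinguishable from points on one opposite face, and both other states would then lie on that face while still needing to be distinguishable from each other. Two states on the same edge are distinguishable only when $m=4$; this is the same fact invoked in the proof of \Cref{prop:polygon-perfect-ad}. So we may assume all three lie on the boundary, reducing essentially to vertices. Now each $t_j$ must lie in the antipodal arc of the others. Measuring positions by angle, every pair must be at angular separation at least $\pi-\pi/m$ (this is where the odd case gives $\pi\pm\pi/m$). Three points on a circle have pairwise separations summing to $2\pi$ along the arcs, while each arc must be at least about $\pi-\pi/m$, giving $3(\pi-\pi/m)\le 2\pi$, i.e. $m\le 3$. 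Separations are not simply arc lengths, however: for even $m$ a neighbouring edge is allowed, which widens the tolerance to $\pi-2\pi/m$ and gives $m\le 6$ with equality structure $m\in\{4,6\}$.

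I expect this last counting to be the main obstacle. It must be done carefully, with separate inequalities for odd and even $m$, so that exactly the square and hexagon survive (the hexagon via $s_2,s_4,s_6$). I would organize it as follows: fix $t_1=s_1$ by symmetry, list its allowed partner set explicitly in indices, and check that no two elements of that set are mutually distinguishable unless $m\in\{4,6\}$.
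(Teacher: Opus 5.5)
Your proposal is correct and follows essentially the paper's route. The vertex sets you obtain from parallel supporting lines and normal cones (the two edges at $s_{i+m/2}$ for even $m$, the opposite edge for odd $m$) are exactly the paper's index sets $\{x\oplus_m\alpha,\, x\oplus_m\beta,\, x\oplus_m\beta\oplus_m 1\}$ and $\{x\ominus_m\alpha',\, x\ominus_m\beta'\}$. The paper gets them by solving $e_u(s_p)=0$ and $f_u(s_p)=0$, and your angular counts $3(\pi-2\pi/m)\le 2\pi$ and $3(\pi-\pi/m)\le 2\pi$ are its $3\alpha\le m$; the paper handles mixed states by passing to the pure components of their convex decompositions rather than by your face argument. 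One fix: do not justify ``two states on the same edge are distinguishable only for $m=4$'' by citing \Cref{prop:polygon-perfect-ad}, because the paper defers that fact to this very proof. It follows directly from the distinguishable sets you compute in your second step.
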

\begin{proof}
    We note that for pairwise distinguisability it is enough to consider dichotomic measurements and since the success probability is a convex function with respect to the measurement we can restrict to extreme dichotomic measurements. Thus in even polygons it means that the distinguishing measurement is always of the form $\{e_u, \bar{e}_u\}$ and in odd polygons it is of the form $\{f_u, \bar{f}_u\}$ for some $u \in [m]$. Also, if three states $\{t_1, t_2, t_3\}$ are perfectly pairwisely distinguishable, and they can be written in convex combinations $t_i = \sum_{l=1}^{a_i} \alpha^{(i)}_l s^{(i)}_l$ for some $a_i \in [m]$, $\alpha^{(i)}_l >0$, $\sum_{l=1}^{a_i} \alpha^{(i)}_l =1$ for some set of pure states $\{s^{(i)}_l\}_l$ for all $i \in [3]$, it follows that also all sets of three pure states $\{s^{(1)}_{l_1},s^{(2)}_{l_2},s^{(3)}_{l_3}\}$ are perfectly distinguishable for all $l_i \in [a_i]$ for all $i \in [3]$. Thus, we can also only focus on pure states in the proof. We first only consider even polygons and odd polygons will follow in a similar manner afterwards.
    
    Let $s_p \in S=\{s_x,s_y,s_z\} \subset \polyspace$ be given, where $m \in 2 \naturale$. We compute the states that can be perfectly distinguished from $s_p$.
    We start by calculating the effect $e_u$ that has zero probability measuring state $s_p$ using above vector-definitions of effects and states.
    \begin{equation}
        e_u(s_p) = 0 \Leftrightarrow -\cos{\left(\frac{\pi}{m}\right)} = \cos{\left(\frac{\pi(1+2p-2u)}{m}\right)}
    \end{equation}
    We want to understand the difference of indices $p-u$ as a condition for measuring outcome $u$ given state $s_p$ with zero probability. The next step thus consists of solving for $p-u$.
    \begin{equation}
        \cos{(\theta)} = x \Rightarrow \theta = (-1)^r \arccos{(x)} + r \pi + (-1)^{r+1} \frac{\pi}{2}+ \frac{\pi}{2}
    \end{equation}
    for some $r \in \mathds{Z}$ and in our case $x=\cos{\left(\frac{\pi(n-1)}{m}\right)}$. Plugging this into previous equation leads to:
    \begin{gather}
        \theta = (-1)^r \frac{\pi(m-1)}{m}+\pi r + \frac{\pi}{2}(1+(-1)^{r+1}) = \frac{\pi}{m}(1+2p-2u) \\
        \Leftrightarrow p-u = -\frac{1}{2}\underbrace{(1+(-1)^{r+1})}_{(i)} + \frac{m}{2} \underbrace{\left(r + \frac{1}{2}+\frac{(-1)^r}{2}\right)}_{(ii)} \\
        \Rightarrow p-u = \begin{cases} \frac{m}{2} : r = 0 \\ \frac{m}{2} -1 : r = 1 \end{cases} \label{eq:alpha-beta}
    \end{gather}
    As the last equation shows, there is no unique solution for obtaining zero probability.
    However, (\textit{ii}) is just a multiple of $m$ depending on $r$. Due to the '$m$-periodicity' of an $m$-polygon, we just need to consider the first two solutions since the others are equivalent. In hexagon for example, taking 9 steps is the same as taking 3 steps since one ends up at the same place after 6 steps. By steps , we mean going from vertex to vertex in one direction following the edges of the polygon.
    This also enforces us to do addition and subtraction without going in a circle, i.e.\ only add/subtract the remainder if the addend is larger than $m$. To capture this, we write $a \oplus_mb \coloneqq a + ( a + b \mod m)$ and similarly $a \ominus_m b$.
    (\textit{i}) is a downshift by 1 depending on $r$. Therefore, $u$ and $p$ must be either $\frac{m}{2}-1 \eqqcolon \alpha$ or $\frac{m}{2} \eqqcolon \beta$ steps apart from each other in order to yield zero probability $e_u(s_p) = 0 \Leftrightarrow u \in \{p \ominus_m \alpha,p \ominus_m \beta\}$.

    If we want to perfectly distinguish $\{s_x, s_y\} \subset S$, we need to pick $e \in \mathcal{E}(\mathcal{P}(m))$ in such a way that
    \begin{equation} \label{eq:conditions-for-dist}
            \underbrace{ e(s_x)}_{(a)} = \underbrace{ \bar e(s_y)}_{(b)} = 0, \quad \underbrace{e(s_y)}_{(c)} > 0 \, ,
    \end{equation}
    where $\bar e = \mathds{1}_{\mathcal{P}(m)} - e$.
    Therefore, $e$ must be chosen as follows.
    \begin{equation}
        e \overset{(a)}{=} \begin{cases} e_{x \ominus_m \beta} \\ e_{x \ominus_m \alpha} \end{cases} \, \Rightarrow \bar e = \begin{cases} e_{(x \ominus_m \beta) \oplus_m \frac{m}{2}}=e_x \\ e_{(x \ominus_m \alpha) \oplus_m \frac{m}{2}}=e_{x \oplus_m 1} \end{cases}
    \end{equation}
    It is convenient to note that $\beta - \alpha = 1$. We find the complement effect in even polygons just by walking to the opposite side, i.e.\ half all the vertices: $\bar e_{(x)}=e_{x\oplus_m \frac{m}{2}}$.
    Using above cases to write down proper measurements yields the relation index $y$ must have to $x$.
    
    \begin{minipage}{.43\textwidth}
        \begin{align}
            &M= \{e_{x \ominus_m \beta}, e_x\} \\
            &(c) \, y \notin \{x \ominus_m 1,x\}\\
            &(b) \, y \in \{x \oplus_m \alpha, x \oplus_m \beta\}
        \end{align}
    \end{minipage}
    \hfill
    or
    \hfill
    \begin{minipage}{.43\textwidth}
        \begin{align}
            &M = \{e_{x \ominus_m \alpha}, e_{x \oplus_m 1}\} \\
            &(c) \, y \notin \{x, x \oplus_m 1\} \\
            &(b) \, y \in \{x \oplus_m \beta, x \oplus_m \beta \oplus_m 1\} \, ,
        \end{align}
    \end{minipage}
    
    where we applied the measurement on $s_y$ and then made use of \Cref{eq:conditions-for-dist,eq:alpha-beta}.
    Bringing it all together, we have:
    \begin{equation} \label{eq:relation-of-x-y-for-dist}
        y \in \{x \oplus_m \alpha, x \oplus_m \beta, x \oplus_m \beta \oplus_m 1 \} 
    \end{equation}
    Note that if $m=4$, \Cref{eq:relation-of-x-y-for-dist} contains all other states from the square, so all states can be distinguished in principle. However, once a measurement is picked, not all pairs of states can be distinguished by this specific one.
    
    Now, we want to distinguish all possible pairs from $S$. We must choose $y$ and $z$ according to \Cref{eq:relation-of-x-y-for-dist} if we want to distinguish $(s_x,s_y)$ and $(s_x,s_z)$. The closest we can put $s_y$ relative to $s_x$ is exactly $\alpha=\frac{m}{2}-1$ steps apart.
    Since we also want to be able to distinguish $(s_y,s_z)$, the least $s_y$ and $s_z$ are allowed to be apart is again $\alpha$.
    So all states from the ensemble must be at least $\alpha$ steps apart to ensure distinguishability of every pair.
    
    However, this stops working for $6 < m \in 2\mathds{N}$: $3 \alpha \leq m \Leftrightarrow m \leq 6$.
    That is, even polygons of bigger size do not provide 'enough space to fit distinguishable states' as square and hexagon do.

    We want to repeat the same argument for odd polygons $\mathcal{P}(m)$ with $m \notin 2 \mathds{N}$. However, the conditions are slightly different here as odd polygons do not have opposing faces and require different effects as defined in \Cref{sec:polygon-theories}.
    Thus, the first step again consists of setting $ f_u(s_p) = 0$ and solving for $p-u$. This yields:
    \begin{equation}
        p-u = \begin{cases}
            \frac{m-1}{2}+\frac{mr}{2} \text{ for some even } r\\
            -\frac{m-1}{2}+m(\frac{r+1}{2}) \text{ for some odd } r
        \end{cases} = \begin{cases}
            \frac{m(r+1)-1}{2} \text{ for some even } r\\
            \frac{mr+1}{2} \text{ for some odd } r
        \end{cases}
    \end{equation}
    Due to the $m$-periodicity of the polygon we again only need to consider $r$ such that $p-u \in [m]$. More precisely, it can be seen in above equation that $p-u$ has a fixed, offset-like, part and one periodic part. That means adding a multiple of $m$ to $p-u$ doesn't give a new relation (one could also just look at equivalence classes). We hence restrict ourself to the first two $r\in \{0,1\}$ and define $\alpha' \coloneqq \frac{m-1}{2}, \, \beta' \coloneqq \frac{m+1}{2}$ and conclude $p-u \in \{ \alpha',\beta'\}$.
    
    Let $S=\{s_x,s_y,s_z\} \subset \mathcal{P}(m)$. We again make use of \Cref{eq:conditions-for-dist}, which leads us to:
    
    \begin{minipage}{.45\textwidth}
         \begin{align}
            &M = \{f_{x \ominus_m \alpha'}, \mathds{1}_{\mathcal{P}(m)} - f_{x \ominus_m \alpha'} \} \\
            &(c) \, y \notin \{x, x \oplus_m 1\} \\
            & (b) \, y \in \{x \ominus_m \alpha'\}
         \end{align}
    \end{minipage}
    \hfill
    \begin{minipage}{.45\textwidth}
        \begin{align}
            &M = \{f_{x \ominus_m \beta'}, \mathds{1}_{\mathcal{P}(m)} - f_{x \ominus_m \beta'} \} \\
            &(c) \, y \notin \{x \ominus_m 1, x \} \\
            &(b) \,  y \in \{x \ominus_m \beta'\}
        \end{align}
    \end{minipage}
    
    All in all, it follows that $y \in \{x \ominus_m \alpha', x \ominus_m \beta'\}$.
    It is apparent that we are left with even less options as in the $m \in 2 \mathds{N}$ case and we can easily repeat the argument from before:
    we have $\alpha = \frac{m-1}{2}$ and it follows in the same manner that: $3 \alpha \leq m \Leftrightarrow m \leq 3$. Thus, it doesn't work for any odd polygon (except the omitted triangle). Consequently, odd polygons lack this sort of 'state orthogonality' totally.
\end{proof}

 Thus, it follows that for polygons $\polyspace$ for $m \notin\{3, 4,6\}$ perfect adaptive discrimination of $k=2$ copies of $n=3$ states is not possible. Before phrasing this result formally, let us first see what happens in the case of hexagon, which is currently the only polygon where we have not considered perfect pairwise distinguishability in. 

\begin{prop}
    Up to rotation, there exists exactly one set of three pure states  $S= \{s_2, s_4, s_6\} \subset \mathcal{P}(6)$ that can be perfectly distinguished using adaptive measurement and 1 bit of communication. In particular, $P^{AD_1}_{\mathcal{P}(6)}(3,2)=1$.
\end{prop}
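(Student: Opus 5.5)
The proof has two parts: existence, via an explicit protocol for $S=\{s_2,s_4,s_6\}$, and uniqueness up to rotation, via the necessary conditions already established.

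\textbf{Uniqueness.} By \Cref{prop:polygon-perfect-ad}, perfect adaptive discrimination of $k=2$ copies of three states in $\mathcal{P}(6)$ requires the three states to be perfectly pairwise distinguishable. The proof of \Cref{prop:no-pairwise-dist} then does the remaining work.

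First, it reduces the question to pure states: a mixed $t_i$ on an edge or in the interior forces all pure states in its convex decomposition to be pairwise distinguishable from the others. Second, it shows that for $m=6$ two pure states $s_x,s_y$ are perfectly distinguishable only if $y\in\{x\oplus_6 2, x\oplus_6 3, x\oplus_6 4\}$. These are the vertices at distance at least $\alpha=2$ steps from $x$.

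Three vertices on a $6$-cycle that are pairwise at least two steps apart must be exactly two steps apart. They therefore form $\{s_1,s_3,s_5\}$ or $\{s_2,s_4,s_6\}$, and these are related by a rotation.

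I still need to rule out mixed states. If some $t_i$ is a proper mixture, its support contains two adjacent vertices. Both would have to be at distance at least $2$ from the support of each other state, which cannot fit in six vertices. Hence the set is unique up to rotation.

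\textbf{Existence.} I use the criterion of \Cref{prop:adaptively-antidist-to-dist-states} with one round. Alice must perform a dichotomic measurement (so that one bit suffices) whose every outcome excludes at least one state, and the remaining states must be perfectly distinguishable by Bob.

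Take Alice's measurement to be $\{e_u,\bar e_u\}$ with $e_u$ chosen to vanish on one state, say $s_2$. In the hexagon, $e_u$ picks an edge. Its complement $\bar e_u=e_{u\oplus_6 3}$ picks the opposite edge and so equals $1$ on both of that edge's endpoints.

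I choose $u$ so that the edge picked by $\bar e_u$ has $s_2$ as an endpoint. Then:
\begin{itemize}
\item outcome $e_u$ excludes $s_2$;
\item outcome $\bar e_u$ vanishes on the two vertices opposite that edge, one of which is $s_4$ or $s_6$, so it excludes that state.
\end{itemize}
The figure labelling fixes which of $s_4,s_6$ this is, and I will pin down the concrete index (something like $\{e_1,e_4\}$) from the vector formulas for $e_i$ and $s_i$.

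After either outcome, two states of $S$ remain, and they are two steps apart. By \Cref{prop:no-pairwise-dist}'s analysis they are perfectly distinguishable by some $\{e_v,\bar e_v\}$. Alice sends her outcome as one bit, and Bob applies the corresponding measurement. This gives $P^{AD_1}_{\mathcal{P}(6)}(3,2)=1$.

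\textbf{Main obstacle.} The delicate step is verifying, with explicit indices, that a single Alice effect pair really excludes one state per outcome. Each outcome must exclude \emph{at least} one state: for example, if $e_u$ vanished on $s_2$ but $\bar e_u$ vanished on none of $S$, the protocol would fail.

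I will do this by direct evaluation of $e_i(s_j)$, which in the hexagon takes only the values $0,\tfrac12,1$. This is easiest to check by tabulating the $6\times 6$ matrix $e_i(s_j)$.
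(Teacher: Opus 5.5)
Your proposal is correct and follows essentially the same route as the paper. Uniqueness comes from pairwise distinguishability (\Cref{prop:polygon-perfect-ad}) together with the distance-$\geq 2$ analysis from the proof of \Cref{prop:no-pairwise-dist}; your gap-counting argument and your treatment of mixed states are just more explicit than the paper's ``w.l.o.g.\ $s_6$'' step. Existence comes from an explicit one-bit protocol: taking $u=5$, your Alice measurement $\{e_5,\bar e_5\}$ is exactly the paper's $\{e_2,\bar e_2\}$, and Bob then finishes with $e_3$ or $e_4$.
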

Here we refer to the states and effects as depicted in \Cref{fig:hexagon_and_square}.
\begin{proof}
    Let us see when three pure states $\{s_x,s_y,s_z\}$ can be perfectly distinguished. w.l.o.g.\ we can pick $s_x=s_6$ to be our first state. It follows from the proof of \Cref{prop:no-pairwise-dist} that we must have $s_y,s_x \in \{s_2,s_3,s_4\}$ in order to distinguish from $s_6$. Since $\{s_2,s_4\}$ is distinguishable, however $\{s_2,s_3\}$ and $\{s_3,s_4\}$ are not, we have w.l.o.g.\ that $s_y=s_2$ as well as $s_z=s_4$.
    It is clear that a fixed strategy cannot be sufficient, since the possible pairs require different effects for discrimination. The perfect strategy is as follows:
    \begin{equation}
            \xymatrix{
            & & e_3 \ar@{=>}[r]^{b=0} \ar@{=>}[rd]_{b=1} &s_2\\
            & & &s_6\\
            \Box \ar[r]^{s^{\otimes 2}} &e_2 \otimes \id \ar[rdd]^{s\in \{s_4,s_6\}}_{a=1} \ar[ruu]^{a=0}_{s \in \{s_2,s_6\}}\\
            & & &s_6 \\
            & & e_4 \ar@{=>}[r]_{b=0} \ar@{=>}[ru]^{b=1} &s_4
            }
        \end{equation}
    This works especially because $ e_3(s_6) = e_4(s_6) = 0$. The outcome probability of applying $e_2$ on $s_6$ hence does not matter since we can always exclude $s_2$ or $s_4$ in the first step.
    It is clear via rotating the polygon by $\pi$, that the ensemble $\{s_3,s_1,s_5\}$ can equivalently be perfectly discriminated.
    Moreover, as soon as we pick a different ensemble $\tilde S$ there is going to be at least one pair $\{s_x,s_y\} \subset \tilde S$, which shares an edge and cannot be distinguished.
\end{proof}

It follows that square and hexagon are the only polygons where perfect discrimination of two copies of three states using adaptive measurements is possible. Thus, collecting all the results of the section we can state the main result of this section:

\begin{thm}
    There is no adaptive measurement strategy that achieves perfect discrimination of $k=2$ copies of $n=3$ states in the polygon state space $\mathcal{P}(m)$ for $m \in \mathds{N}\setminus\{3,4,6\}$.
\end{thm}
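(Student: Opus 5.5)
The theorem is essentially a corollary that combines \Cref{prop:polygon-perfect-ad} and \Cref{prop:no-pairwise-dist}. I will argue by contradiction. Fix $m \in \naturale \setminus \{3,4,6\}$ (with $m \geq 5$, since $m\leq 3$ is excluded by convention), and suppose that some ensemble $S=\{t_1,t_2,t_3\} \subset \polyspace$ and some adaptive strategy attain $P^{AD}_{\polyspace}(3,2)=1$.

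\emph{Step one.} Since $m>4$, \Cref{prop:polygon-perfect-ad} applies and forces the three states in $S$ to be perfectly pairwisely distinguishable. That proposition is stated for arbitrary (possibly mixed) states $S \subset \polyspace$, which is what we need, because the adaptive optimisation in \Cref{eq:adaptive} ranges over all ensembles and not only pure ones. Behind it sits \Cref{prop:adaptively-antidist-to-dist-states}, which applies because $n=3>d_{op}=2$ for every polygon.

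\emph{Step two.} \Cref{prop:no-pairwise-dist} says that for $m \notin \{3,4,6\}$ no three states in $\polyspace$ are perfectly pairwisely distinguishable. The reduction from mixed to pure states is already built into that proof: if the mixtures are pairwise distinguishable, then so are all pure states in their convex decompositions. This contradicts step one, so no adaptive strategy is perfect. Since $AD$ contains $FIX$ and $NAD$, those strategies are excluded too; alternatively one can invoke \Cref{prop:FIX-prob-1} directly for $FIX$.

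\emph{Main obstacle.} The argument is logically short, so the real work is checking that the hypotheses line up. First, \Cref{prop:polygon-perfect-ad} must cover every $m\geq5$, both even and odd. Its normalisation argument, which rules out exclusion of two states, uses the explicit extreme effects ($e_i$ for even $m$, and $f_i,\bar f_i$ for odd $m$), so I would recheck the odd case, where the effects vanishing on a vertex are $\bar f$ and two $f$'s. Second, the pairwise-distinguishability argument in \Cref{prop:no-pairwise-dist} covers both parities through the step counts $\alpha=\frac m2-1$ and $\alpha'=\frac{m-1}{2}$, which give $3\alpha\le m$ only for $m\le6$ and $m\le3$ respectively.

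Finally, I would remark that the excluded cases are genuine exceptions. For $m=3$ we have the classical trit, where $d_{op}=3$, so \Cref{prop:FIX-prob-1} gives perfect discrimination. For $m=4$ and $m=6$, the preceding propositions already exhibit perfect $NAD$ and $AD_1$ strategies, so the exclusion set in the theorem is sharp.
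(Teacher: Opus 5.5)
Your proposal is correct and follows the paper's own route. The paper obtains the theorem by combining \Cref{prop:polygon-perfect-ad}, which says perfect adaptive discrimination for $m>4$ forces the three (possibly mixed) states to be pairwise perfectly distinguishable, with \Cref{prop:no-pairwise-dist}, which says no such triple exists for $m\notin\{3,4,6\}$; the small cases $m\le 3$ are set aside by the same convention the paper uses, and your sharpness remarks for $m\in\{3,4,6\}$ match the paper's discussion around the theorem.
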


Case $m \in [3]$ is the trit where $d_{op}=3$ so that by \Cref{prop:FIX-prob-1} even perfect fixed measurement strategy works for $n=3$. On the other hand for $m \in  \{4,6\}$ we have already seen that perfect adaptive discrimination is possible.

Even though adaptive measurement strategy is only an instance of general LOCC strategies, we believe that also perfect LOCC strategy is impossible in other polygons than triangle, square and hexagon.

\begin{con}
    There is no LOCC protocol that achieves the perfect discrimination of $k=2$ copies of $n=3$ states in polygon theory $ \mathcal{P}(m)$, where $m \in \naturale \setminus \{3,4,6\}$.
\end{con}

Together with what we have seen in \Cref{fig:global-strategies-in-polygons} from \Cref{sec:global-stragegies-in-polygons}, this would show a gap between LOCC and separable strategies. This is known under the terminology nonlocality without entanglement (NLWE), firstly established in \cite{Bennet_NLWE}. In our context, NLWE means that there is an advantage over LOCC, i.e.\ local strategies, even though the operators are still of separable form.
Another instance of NLWE in polygons was found in \cite{Bhattacharya_2020}, using separable, multipartite states that could not be discriminated by LOCC strategies.

\section*{Note added}
During the final stages of this work, we became aware of a related manuscript by Kvashchuk \emph{et al.}, titled “The most discriminable quantum states in the multicopy regime” \cite{kvashchuk2026discriminablequantumstatesmulticopy}. The two works were carried out independently and address closely related questions.

\section*{Acknowledgement}
\noindent
The authors acknowledge financial support from the Business Finland project \mbox{BEQAH}.

\bibliographystyle{unsrturl}
\bibliography{sources}

@misc{Quintino_2026,
      title={The most discriminable quantum states in the multicopy regime}, 
      author={Kvashchuk, M. and Chernyshova, P.  and Porto, L. E.A. and Ohst, T.-A. and  Vieira, L. B. and Quintino, M. T. },
      year={2026},
      eprint={2604.xxxxx},
      archivePrefix={arXiv},
      primaryClass={quant-ph}
}

@misc{kvashchuk2026discriminablequantumstatesmulticopy,
      title={The most discriminable quantum states in the multicopy regime}, 
      author={Kvashchuk, M. and Chernyshova, P. and Porto, L.E.A. and Ohst, T.A. and Vieira, L.B. and Quintino, M.T.},
      year={2026},
      eprint={2604.26927},
      archivePrefix={arXiv},
      primaryClass={quant-ph}
}

@misc{Kimura_2018,
      title={Information storing yields a point-asymmetry of state space in general probabilistic theories}, 
      author={Matsumoto, K. and Kimura, G.},
      year={2018},
      eprint={1802.01162},
      archivePrefix={arXiv},
      primaryClass={quant-ph}
}

@article{heinosaari2024can,
  title={Can a qudit carry more information than a dit?},
  author={Heinosaari, T. and Hillery, M.},
  journal={Contemp. Phys.},
  volume={65},
  number={1},
  pages={2--11},
  year={2024},
  doi={10.1080/00107514.2024.2390279}
}

@article{Chiribella_2011,
  title = {Informational derivation of quantum theory},
  author = {Chiribella, G. and D'Ariano, G. M. and Perinotti, P.},
  journal = {Phys. Rev. A},
  volume = {84},
  issue = {1},
  pages = {012311},
  numpages = {39},
  year = {2011},
  publisher = {American Physical Society},
  doi = {10.1103/PhysRevA.84.012311}
}

@article{Masanes_2011,
    doi = {10.1088/1367-2630/13/6/063001},
    year = {2011},
    publisher = {},
    volume = {13},
    number = {6},
    pages = {063001},
    author = {Masanes, L. and Müller, M. P.},
    title = {A derivation of quantum theory from physical requirements},
    journal = {New J. Phys.}
}

@article{Montanaro_2007,
   title={On the Distinguishability of Random Quantum States},
   volume={273},
   ISSN={1432-0916},
   DOI={10.1007/s00220-007-0221-7},
   number={3},
   journal={Commun. Math. Phys.},
   publisher={Springer Science and Business Media LLC},
   author={Montanaro, A.},
   year={2007},
   pages={619–636}
   }

@article{Zhou_2025,
   title={On the distinguishability of geometrically uniform quantum states},
   volume={58},
   ISSN={1751-8121},
   DOI={10.1088/1751-8121/ae0a95},
   number={41},
   journal={J. Phys. A: Math. Theor.},
   publisher={IOP Publishing},
   author={Zhou, J. and Chessa, S. and Chitambar, E. and Leditzky, F.},
   year={2025},
   pages={415303} 
   }

@article{Dutra_2026,
   title={Structure of quantum measurements implementable with one round of classical communication},
   volume={89},
   ISSN={1361-6633},
   DOI={10.1088/1361-6633/ae4fef},
   number={3},
   journal={Rep. Prog. Phys.},
   publisher={IOP Publishing},
   author={Dutra, A. C. R. and Ohst, T.-A. and Nguyen, H.-C. and Gühne, O.},
   year={2026},
   pages={037601}
}

@article{eric_optimal_detec,
  title = {Revisiting the optimal detection of quantum information},
  author = {Chitambar, E. and Hsieh, M.-H.},
  journal = {Phys. Rev. A},
  volume = {88},
  issue = {2},
  pages = {020302},
  numpages = {4},
  year = {2013},
  publisher = {American Physical Society},
  doi = {10.1103/PhysRevA.88.020302},
}

@Article{JOUR,
    author={Ban, M. and Kurokawa, K. and Momose, R. and Hirota, O.},
    title={Optimum measurements for discrimination among symmetric quantum states and parameter estimation},
    journal={Int. J. Theor. Phys.},
    year={1997},
    volume={36},
    number={6},
    pages={1269-1288},
    issn={1572-9575},
    doi={10.1007/BF02435921}
}

@article{Jozsa_2000,
   title={Distinguishability of states and von {N}eumann entropy},
   volume={62},
   ISSN={1094-1622},
   pages = {012301},
   DOI={10.1103/physreva.62.012301},
   number={1},
   journal={Phys. Rev. A},
   publisher={American Physical Society (APS)},
   author={Jozsa, R. and Schlienz, J.},
   year={2000}
   }

@article{Johnston_2025,
   title={Tight bounds for antidistinguishability and circulant sets of pure quantum states},
   volume={9},
   ISSN={2521-327X},
   DOI={10.22331/q-2025-02-04-1622},
   journal={Quantum},
   publisher={Verein zur Forderung des Open Access Publizierens in den Quantenwissenschaften},
   author={Johnston, N. and Russo, V. and Sikora, J.},
   year={2025},
   pages={1622}
   }

@article{Hausladen01121994,
author = {Hausladen, P. and Wootters, W. K.},
title = {A ‘Pretty Good’ Measurement for Distinguishing Quantum States},
journal = {J. Mod. Opt.},
volume = {41},
number = {12},
pages = {2385--2390},
year = {1994},
publisher = {Taylor \& Francis},
doi = {10.1080/09500349414552221}
}

@ARTICLE{Optimal_detec_sym_states,
  author={Eldar, Y.C. and Megretski, A. and Verghese, G.C.},
  journal={IEEE Trans. Inf. Theory}, 
  title={Optimal detection of symmetric mixed quantum states}, 
  year={2004},
  volume={50},
  number={6},
  pages={1198-1207},
  doi={10.1109/TIT.2004.828070}
  }

@article{Barnum:2000cwz,
    author = "Barnum, H. and Knill, E.",
    title = "{Reversing quantum dynamics with near-optimal quantum and classical fidelity}",
    doi = "10.1063/1.1459754",
    journal = "J. Math. Phys.",
    volume = "43",
    number = "5",
    pages = "2097",
    year = "2002"
}

@article{Harrow_2012,
   title={How Many Copies are Needed for State Discrimination?},
   volume={58},
   ISSN={1557-9654},
   DOI={10.1109/tit.2011.2169544},
   number={1},
   journal={IEEE Trans. Inf. Theory},
   publisher={Institute of Electrical and Electronics Engineers (IEEE)},
   author={Harrow, A. W. and Winter, A.},
   year={2012},
   pages={1–2} 
   }

@article{Heinosaari_2020,
   title={Communication tasks in operational theories},
   volume={53},
   ISSN={1751-8121},
   DOI={10.1088/1751-8121/abb5dc},
   number={43},
   journal={J. Phys. A: Math. Theor.},
   publisher={IOP Publishing},
   author={Heinosaari, T. and Kerppo, O. and Leppäjärvi, L.},
   year={2020},
   pages={435302} 
   }

@book{Schumacher_Westmoreland_2010,
    place={Cambridge},
    title={Quantum Processes Systems, and Information},
    publisher={Cambridge University Press},
    author={Schumacher, B. and Westmoreland, M.},
    year={2010},
    doi={10.1017/CBO9780511814006}
    }

@article{neumann36,
 ISSN = {0003486X, 19398980},
 doi = {10.2307/1968621},
 author = {Birkhoff, G. and Von Neumann, J.},
 journal = {	Ann. Math.},
 number = {4},
 pages = {823--843},
 publisher = {[Annals of Mathematics, Trustees of Princeton University on Behalf of the Annals of Mathematics, Mathematics Department, Princeton University]},
 title = {The Logic of Quantum Mechanics},
 volume = {37},
 year = {1936}
}

@article{arai2024derivationstandardquantumtheory,
    doi = {10.1088/1367-2630/ad4d18},
    year = {2024},
    publisher = {IOP Publishing},
    volume = {26},
    number = {5},
    pages = {053046},
    author = {Arai, H. and Hayashi, M.},
    title = {Derivation of standard quantum theory via state discrimination},
    journal = {New J. Phys.}
}

@Article{rohrlich1995nonlocalityaxiomquantumtheory,
    author={Popescu, S. and Rohrlich, D.},
    title={Quantum nonlocality as an axiom},
    journal={Found. Phys.},
    year={1994},
    volume={24},
    number={3},
    pages={379-385},
    issn={1572-9516},
    doi={10.1007/BF02058098}
}

@article{Pl_vala_2023,
   title={General probabilistic theories: An introduction},
   volume={1033},
   ISSN={0370-1573},
   DOI={10.1016/j.physrep.2023.09.001},
   journal={	Phys. Rep.},
   publisher={Elsevier BV},
   author={Pl{\'a}vala, M.},
   year={2023},
   pages={1–64} 
   }

@misc{hardy2001quantumtheoryreasonableaxioms,
      title={Quantum Theory From Five Reasonable Axioms}, 
      author={Hardy, L.},
      year={2001},
      eprint={quant-ph/0101012},
      archivePrefix={arXiv},
      primaryClass={quant-ph}
}

@article{Navascu_s_2009,
   title={A glance beyond the quantum model},
   volume={466},
   ISSN={1471-2946},
   DOI={10.1098/rspa.2009.0453},
   number={2115},
   journal={Proc. Roy. Soc. A},
   publisher={The Royal Society},
   author={Navascués, M. and Wunderlich, H.},
   year={2009},
   pages={881–890} 
   }

@article{Janotta_2011,
doi = {10.1088/1367-2630/13/6/063024},
year = {2011},
publisher = {},
volume = {13},
number = {6},
pages = {063024},
author = {Janotta, P. and Gogolin, C. and Barrett, J. and Brunner, N.},
title = {Limits on nonlocal correlations from the structure of the local state space},
journal = {New J. Phys.}
}

@article{Cirelson1980QuantumGO,
  title={Quantum generalizations of Bell's inequality},
  author={Cirel'son, B. S.},
  journal={Lett. Math. Phys.},
  year={1980},
  volume={4},
  pages={93-100},
  doi={10.1007/BF00417500}
}

@article{chsh,
  title = {Proposed Experiment to Test Local Hidden-Variable Theories},
  author = {Clauser, J. F. and Horne, M. A. and Shimony, A. and Holt, R. A.},
  journal = {Phys. Rev. Lett.},
  volume = {23},
  issue = {15},
  pages = {880--884},
  numpages = {0},
  year = {1969},
  publisher = {American Physical Society},
  doi = {10.1103/PhysRevLett.23.880},
}

@article{Heinosaari_2024,
   title={Encoding and decoding of information in general probabilistic theories},
   volume={22},
   pages = {2440007},
   ISSN={1793-6918},
   DOI={10.1142/s0219749924400070},
   number={05},
   journal={Int. J. Quantum Inf.},
   publisher={World Scientific Pub Co Pte Ltd},
   author={Heinosaari, T. and Leppäjärvi, L. and Plávala, M.},
   year={2024}
   }

@article{Bhattacharya_2020,
   title={Nonlocality without entanglement: Quantum theory and beyond},
   volume={2},
   issue = {1},
   pages = {012068},
   ISSN={2643-1564},
   DOI={10.1103/physrevresearch.2.012068},
   number={1},
   journal={Phys. Rev. Research},
   publisher={American Physical Society (APS)},
   author={Bhattacharya, S. S. and Saha, S. and Guha, T. and Banik, M.},
   year={2020}
   }

@article{Bennet_NLWE,
  title = {Quantum nonlocality without entanglement},
  author = {Bennett, C. H. and DiVincenzo, D. P. and Fuchs, C. A. and Mor, T. and Rains, E. and Shor, P. W. and Smolin, J. A. and Wootters, W. K.},
  journal = {Phys. Rev. A},
  volume = {59},
  issue = {2},
  pages = {1070--1091},
  year = {1999},
  publisher = {American Physical Society},
  doi = {10.1103/PhysRevA.59.1070}
}

@article{Heinosaari_2018,
doi = {10.1088/1751-8121/aad1fc},
year = {2018},
publisher = {IOP Publishing},
volume = {51},
number = {36},
pages = {365303},
author = {Heinosaari, T. and Kerppo, O.},
title = {Antidistinguishability of pure quantum states},
journal = {J. Phys. A: Math. Theor.}
}

@article{Caves_antidist,
  title = {Conditions for compatibility of quantum-state assignments},
  author = {Caves, C. M. and Fuchs, C. A. and Schack, R.},
  journal = {Phys. Rev. A},
  volume = {66},
  issue = {6},
  pages = {062111},
  numpages = {11},
  year = {2002},
  publisher = {American Physical Society},
  doi = {10.1103/PhysRevA.66.062111}
}

@article{Chitambar_2014,
   title={Everything You Always Wanted to Know About LOCC (But Were Afraid to Ask)},
   volume={328},
   ISSN={1432-0916},
   DOI={10.1007/s00220-014-1953-9},
   number={1},
   journal={Commun. Math. Phys.},
   publisher={Springer Science and Business Media LLC},
   author={Chitambar, E. and Leung, D. and Mančinska, L. and Ozols, M. and Winter, A.},
   year={2014},
   pages={303–326} 
   }

@article{Chefles01112000,
author = {Chefles, A.},
title = {Quantum state discrimination},
journal = {	Contemp. Phys.},
volume = {41},
number = {6},
pages = {401--424},
year = {2000},
publisher = {Taylor \& Francis},
doi = {10.1080/00107510010002599}
}

@article{barnett2008quantumstatediscrimination,
    author = {Barnett, S. M. and Croke, S.},
    journal = {Adv. Opt. Photon.},
    number = {2},
    pages = {238--278},
    publisher = {Optica Publishing Group},
    title = {Quantum state discrimination},
    volume = {1},
    year = {2009},
    doi = {10.1364/AOP.1.000238}
}

@article{Barnett_2009,
   title={On the conditions for discrimination between quantum states with minimum error},
   volume={42},
   ISSN={1751-8121},
   url={http://dx.doi.org/10.1088/1751-8113/42/6/062001},
   DOI={10.1088/1751-8113/42/6/062001},
   number={6},
   journal={Journal of Physics A: Mathematical and Theoretical},
   publisher={IOP Publishing},
   author={Barnett, Stephen M and Croke, Sarah},
   year={2009},
   month=Jan, pages={062001} }

@article{Bae_2015,
   title={Quantum state discrimination and its applications},
   volume={48},
   ISSN={1751-8121},
   DOI={10.1088/1751-8113/48/8/083001},
   number={8},
   journal={J. Phys. A: Math. Theor.},
   publisher={IOP Publishing},
   author={Bae, J. and Kwek, L.-C.},
   year={2015},
   pages={083001}
   }

@article{Higgins_2011,
   title={Multiple-copy state discrimination: Thinking globally, acting locally},
   volume = {83},
   issue = {5},
   pages = {052314},
   ISSN={1094-1622},
   DOI={10.1103/physreva.83.052314},
   journal={Phys. Rev. A},
   publisher={American Physical Society (APS)},
   author={Higgins, B. L. and Doherty, A. C. and Bartlett, S. D. and Pryde, G. J. and Wiseman, H. M.},
   year={2011}
   }

@article{Audenaert07,
  title = {Discriminating States: The Quantum Chernoff Bound},
  author = {Audenaert, K. M. R. and Calsamiglia, J. and Mu\~noz-Tapia, R. and Bagan, E. and Masanes, Ll. and Acin, A. and Verstraete, F.},
  journal = {Phys. Rev. Lett.},
  volume = {98},
  issue = {16},
  pages = {160501},
  numpages = {4},
  year = {2007},
  publisher = {American Physical Society},
  doi = {10.1103/PhysRevLett.98.160501}
}

@article{Acin01,
  title = {Statistical Distinguishability between Unitary Operations},
  author = {Ac{\'i}n, A.},
  journal = {Phys. Rev. Lett.},
  volume = {87},
  issue = {17},
  pages = {177901},
  numpages = {4},
  year = {2001},
  publisher = {American Physical Society},
  doi = {10.1103/PhysRevLett.87.177901}
}

@misc{roy2026quantumstateexclusioncopies,
      title={Quantum state exclusion with many copies}, 
      author={Roy, D. and  Gupta, T. and  Ghosal, P. and Sen, S. and Bandyopadhyay, S.},
      year={2026},
      eprint={2601.14410},
      archivePrefix={arXiv},
      primaryClass={quant-ph},
}

@article{PhysRevA.65.052315,
  title = {Classification of nonasymptotic bipartite pure-state entanglement transformations},
  author = {Bandyopadhyay, S. and Roychowdhury, V. and Sen, U.},
  journal = {Phys. Rev. A},
  volume = {65},
  issue = {5},
  pages = {052315},
  numpages = {4},
  year = {2002},
  publisher = {American Physical Society},
  doi = {10.1103/PhysRevA.65.052315},
}

@article{Davies:1970cf,
    author = "Davies, E. B. and Lewis, J. T.",
    title = "{An operational approach to quantum probability}",
    doi = "10.1007/BF01647093",
    journal = "Commun. Math. Phys.",
    volume = "17",
    pages = "239--260",
    year = "1970"
}

@misc{cha_2026_20586963,
  author       = {Cha, H.},
  title        = {Double-trine optimality},
  month        = {06},
  year         = {2026},
  publisher    = {Zenodo},
  doi          = {10.5281/zenodo.20586963},
  url          = {https://doi.org/10.5281/zenodo.20586963},
}

\appendix

\addtocontents{toc}{\protect\setcounter{tocdepth}{1}}

\section{Multiple copies of GU-states} 
\label{appendix:GU-f(k)}

\subsection{Proof of \Cref{prop:GU-prop-to-1-n}}

For the proof of \Cref{prop:GU-prop-to-1-n} we need a few mathematical preliminaries, which are going to appear several times in the proof. We collect them in 2 lemmata. We denote the set of even numbers as $2\naturale$ as we often have to distinguish between even and odd numbers.

\begin{lem} \label{lem:prelim-(ii)}
    For $v(\cdot) \in \{\sin(\cdot), \cos(\cdot)\}$ and some $n \in \naturale \setminus \{1\}, \, n > k \in 2\naturale$ holds
    \begin{equation}
        \sum_{j=1}^n v\left(k \frac{\pi j}{n} \right)=0 \, .
    \end{equation}
\end{lem}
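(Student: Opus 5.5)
The plan is to rewrite the sum as a geometric series of roots of unity. Put $\omega = \euler^{\complexi k\pi/n}$. Then
\begin{equation*}
    \sum_{j=1}^n \cos\left(k\frac{\pi j}{n}\right) + \complexi \sum_{j=1}^n \sin\left(k\frac{\pi j}{n}\right) = \sum_{j=1}^n \omega^j ,
\end{equation*}
so it suffices to show that this complex sum vanishes. Its real part is then the cosine sum and its imaginary part the sine sum, which gives the statement for both choices of $v$ at once.

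Because $k \in 2\naturale$, we can write $k = 2l$ with $l \in \naturale$, and then $\omega = \euler^{2\pi \complexi l/n}$ is an $n$-th root of unity, i.e. $\omega^n = \euler^{2\pi\complexi l} = 1$. Provided $\omega \neq 1$, the geometric series formula gives
\begin{equation*}
    \sum_{j=1}^n \omega^j = \omega\,\frac{\omega^n - 1}{\omega - 1} = 0 .
\end{equation*}
This is the same periodicity argument the paper already used in the proof of \Cref{prop:lower-GU-bound}, where $\sum_j \exp[\frac{2\pi \complexi}{n}(\tilde i - j)]$ was shown to vanish.

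The only point needing care is to confirm that $\omega \neq 1$, which is where the hypothesis $n > k$ is used. We have $\omega = 1$ exactly when $n$ divides $l = k/2$. Since $1 \le l = k/2 < k < n$, this cannot happen. Evenness of $k$ is also essential: for odd $k$, $\omega$ is only a $2n$-th root of unity and the sum need not vanish (for example, $k=1$ gives a nonzero sine sum). I expect no real obstacle beyond stating these two conditions explicitly.
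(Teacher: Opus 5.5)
Your proof is correct and is essentially the paper's argument. The paper also passes to $A=\sum_{j=1}^n e^{\mathrm{i}k\pi j/n}$, observes $e^{\mathrm{i}k\pi/n}A=A$ by periodicity (the geometric-series identity in disguise), and uses $n>k$ to exclude $e^{\mathrm{i}k\pi/n}=1$, i.e.\ $k/n$ being an even integer; your criterion $\omega=1 \iff n \mid k/2$ states this condition more cleanly than the paper does.
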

\begin{proof}
    Since $\cos(\cdot)$ and $\sin(\cdot)$ form the real and imaginary part of the complex exponential function, we consider first $k=2$ and $\sum_{x=1}^n \exp{[2 \complexi \frac{\pi x}{n}]} = A$.
    The trick is to use the periodicity of the complex exponential function and multiply both sides by $\exp{[2 \complexi \frac{\pi}{n}]}$.
    \begin{gather}
        \exp{\left[2\complexi \frac{\pi}{n}\right]} \cdot A = \exp{\left[2\complexi \frac{\pi}{n}\right]} \sum_{j=1}^n \exp{\left[2 \complexi \frac{\pi j}{n}\right]} \\
        = \exp{\left[2\complexi \frac{\pi}{n}\right]} \left(\exp{\left[2\complexi \frac{\pi}{n}\right]} + \exp{\left[4\complexi \frac{\pi}{n}\right]} + \ldots + \exp{[2 \pi \complexi]}\right) = A
    \end{gather}
    Therefore, either $n=1$, which we exclude, or $A=0$. If $k \in 2 \naturale$, we can repeat above trick by multiplying with $\exp{[k\complexi \frac{\pi}{n}]}$. Then we are left with two options: Either $\frac{k}{n} \equiv 0 \mod 2 \Leftrightarrow \frac{k}{n} = 2t$ for some $t \in \naturale$ or $A=0$. Though, we can exclude the first option because we demanded $n > k$. Since the whole sum evaluates to zero, real and imaginary part must be zero respectively.
\end{proof}
The next lemma already makes use of \Cref{lem:prelim-(ii)} and deals with a type of sum that is about to occur often in the proof of \Cref{prop:GU-prop-to-1-n}.
\begin{lem} \label{lem:prelim-(iii)}
    Let $s \in \naturale$ and $t \in \naturale \setminus 2 \naturale$ such that $s + k \leq 2k$ for some $n >k \in \naturale$. Then, it holds that:
    \begin{equation} \label{eq:tedious-sum}
        \sum_{j=1}^n \cos \left( \frac{\pi j}{n} \right)^s \sin \left( \frac{\pi j}{n} \right)^t = 0
    \end{equation}
\end{lem}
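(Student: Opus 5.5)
The plan is to expand the summand in Fourier modes and then apply \Cref{lem:prelim-(ii)} to each mode. Write $\theta_j=\frac{\pi j}{n}$. Using $\cos\theta=\frac{1}{2}(e^{\complexi\theta}+e^{-\complexi\theta})$ and $\sin\theta=\frac{1}{2\complexi}(e^{\complexi\theta}-e^{-\complexi\theta})$, the binomial theorem expresses $\cos(\theta)^s\sin(\theta)^t$ as a finite linear combination of exponentials $e^{\complexi \ell\theta}$. Every frequency satisfies $|\ell|\le s+t$ and $\ell\equiv s+t \pmod 2$.

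The function is odd in $\theta$, since $t$ is odd. Pairing the terms with frequencies $\ell$ and $-\ell$ therefore gives a real expansion
\begin{equation}
\cos(\theta)^s\sin(\theta)^t=\sum_{\substack{0<\ell\le s+t\\ \ell\equiv s+t \ (\mathrm{mod}\ 2)}} c_\ell \sin(\ell\theta)
\end{equation}
with real coefficients $c_\ell$ and no constant term. I would verify this explicitly from the expansion: the $(a,b)$ and $(s-a,t-b)$ terms carry conjugate phases up to the factor $(-1)^t=-1$.

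Summing over $j$ reduces the claim to showing $\sum_{j=1}^n \sin(\ell\theta_j)=0$ for each frequency $\ell$ that occurs. For even $\ell$ this is exactly \Cref{lem:prelim-(ii)}, which applies provided $\ell<2n$. We have $\ell \le s+t\le 2k<2n$, using the hypothesis (which I read as $s+t\le 2k$) together with $n>k$. Concretely, $e^{\complexi\ell\pi/n}\neq 1$ while $(e^{\complexi\ell\pi/n})^n=1$, so the geometric sum vanishes.

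The main obstacle is the parity of the frequencies. If $s+t$ is odd, i.e.\ $s$ is even, the frequencies $\ell$ are odd. In that case $\sum_{j=1}^n\sin(\ell\pi j/n)=\cot\!\big(\tfrac{\ell\pi}{2n}\big)\neq 0$ in general. Already $s=0$, $t=1$ gives $\sum_j\sin(\pi j/n)>0$, so the statement cannot hold as literally written. I would therefore state and prove the lemma under the additional assumption that $s$ is odd, equivalently that $s+t$ is even. This is the situation in the proof of \Cref{prop:GU-prop-to-1-n}, where the exponents come from $\cos^{k}$ and $\sin^{k}$ factors of total even degree $2k$ after squaring inner products. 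I would then check that every invocation of the lemma in that proof satisfies this parity condition.
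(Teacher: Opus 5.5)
Your proposal is correct, and your main observation holds: with $t$ odd, the statement is false whenever $s$ is even, so the extra hypothesis that $s$ be odd is genuinely needed.

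One caveat on your counterexample. If $\naturale$ starts at $1$ (the paper writes $n\in\naturale\setminus\{1\}$ in \Cref{lem:prelim-(ii)}), then $s=0$ is not admissible, but nothing is lost. For even $s$, every summand $\cos(\pi j/n)^s\sin(\pi j/n)^t$ is nonnegative, since $\sin(\pi j/n)\ge 0$ for $j\in[n]$. The $j=1$ term is strictly positive, because the hypotheses force $n\ge 3$ when $s\ge 2$. For example, $s=2$, $t=1$, $k=2$, $n=3$ gives the sum $\tfrac{\sqrt3}{4}$. So every admissible even-$s$ instance is a counterexample, and you do not need the $\cot$ formula.

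Your route differs from the paper's, and the paper's own proof does not hold up, which makes the comparison useful.
\begin{itemize}
\item For $s$ even, the paper expands $\sin^t$ with a power-reduction formula written in cosines $\cos((t-2r')\pi j/n)$; for odd $t$ these should be sines. It then applies \Cref{lem:prelim-(ii)} to the odd frequencies $t-2r'$, where that lemma does not apply. Those sums equal $-1$ for cosines and $\cot\bigl(\tfrac{(t-2r')\pi}{2n}\bigr)$ for sines. This is exactly the obstruction you identified.
\item For $s$ odd, the paper notes that $\int_0^\pi\cos^s\theta\sin^t\theta\,d\theta=0$ and then sandwiches the discrete sum between $\pm$ this integral. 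That is not a valid inequality.
\end{itemize}

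Your single sine-series expansion, with even frequencies $0<\ell\le s+t\le 2k<2n$ and a vanishing root-of-unity sum, is an actual proof of the corrected lemma. Note that you rely on your own geometric-sum computation here, not on \Cref{lem:prelim-(ii)} as literally stated, since that lemma assumes $\ell<n$ rather than $\ell<2n$.

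An even shorter proof, and the correct discrete form of the paper's symmetry idea, is the reflection $j\mapsto n-j$. Since $\cos(\pi-\theta)=-\cos\theta$ and $\sin(\pi-\theta)=\sin\theta$, for odd $s$ the terms $j$ and $n-j$ cancel. The $j=n$ term vanishes because $t\ge 1$, and a fixed point $j=n/2$ contributes $0$ because $\cos(\pi/2)=0$. This needs no relation between $s+t$ and $n$ at all. Your Fourier argument, in exchange, explains exactly why the even-$s$ case fails.

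Your check of the application is also right. In \Cref{prop:GU-prop-to-1-n} the exponents are $H$ and $h$ with $H+h=2k$, so whenever the lemma is invoked with $h$ odd, $H$ is odd too.
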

For convenience, from now on we may write $c_j \coloneqq \cos \left( \frac{\pi j}{n} \right)$ and similarly $s_j$.
\begin{proof}
    We divide the proof into two cases.

    (a) $s$ is even: We use the power reduction formula to rewrite \Cref{eq:tedious-sum}.
    \begin{align}
        \sum_{j=1}^n c_j^s \, s_j^t &=\sum_{j=1}^n \left(2^{-s} {s \choose \frac{s}{2}} + 2^{-s+1} \sum_{r=0}^{\frac{s}{2}-1} {s \choose r} \cos \left((s-2r)\frac{\pi j}{n}\right) \right) \times \\ 
        &\times \left(2^{-t+1} \sum_{r'=0}^{\frac{t-1}{2}} (-1)^{\frac{t-1}{2}-r'} {t \choose r'} \cos \left((t-2r')\frac{\pi j}{n}\right) \right) \\
        &= 2^{-s} {s \choose \frac{s}{2}} 2^{-t+1} \sum_{r'=0}^{\frac{t-1}{2}} (-1)^{\frac{t-1}{2}-r'} {t \choose r'} \sum_{j=1}^n \cos \left((t-2r')\frac{\pi j}{n}\right) \\
        &+ 2^{-s+1} 2^{-t+1} \sum_{r'=0}^{\frac{t-1}{2}} (-1)^{\frac{t-1}{2}-r'} {t \choose r'} \sum_{r=0}^{\frac{s}{2}-1} {s \choose r} \sum_{j=1}^n \cos \left((s-2r)\frac{\pi j}{n}\right) \cos \left((t-2r')\frac{\pi j}{n}\right)
    \end{align}
    The first term in above equation equals zero according to \Cref{lem:prelim-(ii)} as $t-2r' < n$. Using the identity
    \begin{equation} \label{eq:product-of-cosines}
        \cos(u) \cos(v) = \frac{\cos(u+v) + \cos(u-v)}{2} \, ,
    \end{equation}
    we can rewrite the sum in the second term.
    \begin{gather}
        \sum_{j=1}^n \cos \left((s-2r)\frac{\pi j}{n}\right) \cos \left((t-2r')\frac{\pi j}{n}\right) \\
        \propto \sum_{j=1}^n \cos \left((s-2r+t-2r')\frac{\pi j}{n}\right)  + \sum_{j=1}^n \cos \left((s-2r-t+2r')\frac{\pi j}{n}\right)
    \end{gather}
    But again, as $s-2r \pm (t -2r') <n$, this term is also equal to zero.
    
    (b)$s$ is odd: Instead of again using the power reduction formula for odd $s$, we use the antisymmetry of $\sin(\cdot)$.
    \begin{align}
        \int_0^n \mathrm{dx} \, c_x ^s \, s_x^t \propto \int_0^\pi \mathrm{d}\theta \cos(\theta)^s \sin(\theta)^t \propto \int_{-\frac{\pi}{2}}^\frac{\pi}{2} \mathrm{d}\vartheta \cos(\vartheta)^t \sin(\vartheta)^s = 0
    \end{align}
    due to an uneven integrand over a symmetric interval. In the steps we used substitutions $\theta = \frac{\pi x}{n}$ and $\vartheta = \theta - \frac{\pi}{2}$. Now, we can just sandwich the sum in question.
    \begin{equation}
        0 = -\int_0^n \mathrm{dx} \, c_x ^s \, s_x^t \leq \sum_{j=0}^n \, c_j^s \, s_j^t = \sum_{j=1}^n \, c_j^s \, s_j^t \leq \int_0^n \mathrm{dx} \, c_x^s \, s_x^t = 0
    \end{equation}
\end{proof}
With these identities, we can show that the success probability for multi-copy state discrimination of GU-states is independent of the number of states in the ensemble, up to a factor of $\frac{1}{n}$, which comes from averaging probabilities of every outcome. 
\begin{prop}
    The success probability of discriminating $k$ copies of a CGU-state from an ensemble of $n > k$ messages is of the form $P_{CGU}(n,k)=\frac{f(k)}{n}$, where $f(k)$ is a number only depending on $k$.
\end{prop}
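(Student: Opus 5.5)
Since the CGU states are GU, the success probability is given by \Cref{eq:success-prob-for-GU-formula} as $P_{CGU}(n,k)=\frac{1}{n}\bigl[\tr\sqrt{\varrho^{(k)}}\bigr]^2$, where $\varrho^{(k)}=\frac{1}{n}\sum_{j=1}^n \ketbra{\psi_j}^{\otimes k}$. The plan is to show that, for $n>k$, the operator $\varrho^{(k)}$ does not depend on $n$ at all. Then $\tr\sqrt{\varrho^{(k)}}$ is a number depending only on $k$, and setting $f(k):=\bigl[\tr\sqrt{\varrho^{(k)}}\bigr]^2$ gives the claim.

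To this end, expand $\ket{\psi_j}^{\otimes k}=\sum_{x\in\{0,1\}^k} c_j^{\,k-|x|}s_j^{\,|x|}\ket{x}$, with $c_j=\cos(\pi j/n)$, $s_j=\sin(\pi j/n)$, and $|x|$ the Hamming weight. The matrix entries are then
\begin{equation}
\bra{x}\varrho^{(k)}\ket{y}=\frac{1}{n}\sum_{j=1}^n c_j^{\,2k-|x|-|y|}\,s_j^{\,|x|+|y|}.
\end{equation}
Write $a=2k-|x|-|y|$ and $b=|x|+|y|$, so $a+b=2k$. If $b$ is odd, then $a$ is odd as well. \Cref{lem:prelim-(iii)}, or more directly the substitution $j\mapsto n-j$ (which sends $c_j\mapsto -c_j$ and fixes $s_j$, with the $j=n$ term vanishing since $s_n=0$), shows that the sum is zero. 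If $b$ is even, write $c^a s^b=(\cos\theta)^a(\sin\theta)^b$ with $\theta=\pi j/n$ and expand it via $e^{\pm i\theta}$ as a trigonometric polynomial in $2\theta$ of degree at most $k$:
\begin{equation}
c^as^b=\alpha_0+\sum_{r=1}^{k}\bigl(\alpha_r\cos(2r\theta)+\beta_r\sin(2r\theta)\bigr).
\end{equation}
By \Cref{lem:prelim-(ii)} applied with even multiplier $2r\le 2k$, the sums $\sum_{j=1}^n \cos(2r\pi j/n)$ and $\sum_{j=1}^n \sin(2r\pi j/n)$ vanish whenever $2r/2=r$ is not a multiple of $n$. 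Since $n>k\ge r$, this always holds. Hence $\frac{1}{n}\sum_j c_j^a s_j^b=\alpha_0$, which is the constant Fourier coefficient. It equals the integral average $\frac{1}{\pi}\int_0^\pi \cos^a\theta\sin^b\theta\,d\theta$ and depends only on $(a,b)$, hence only on $k$ and the pair $(x,y)$.

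Consequently $\varrho^{(k)}=\frac{1}{\pi}\int_0^\pi \ketbra{\psi_\theta}^{\otimes k}\,d\theta$ with $\ket{\psi_\theta}=\cos\theta\ket0+\sin\theta\ket1$, which is independent of $n$ for all $n>k$. Therefore $P_{CGU}(n,k)=f(k)/n$.

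The main obstacle is the index bookkeeping in the even case. One has to confirm that the highest frequency appearing is $2k\theta$, i.e.\ $e^{\pm 2ik\theta}$, and that the hypothesis of \Cref{lem:prelim-(ii)} ($n>$ multiplier$/2$, not merely $n>$ multiplier) is the right one. Concretely, $\sum_{j=1}^n e^{2ir\pi j/n}=0$ iff $n\nmid r$, and this is exactly where $n>k$ is needed; for $n\le k$ aliasing terms survive and the formula fails, consistent with the hypothesis. I would state this root-of-unity fact directly rather than invoke the lemma's slightly loose phrasing.
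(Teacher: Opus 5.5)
Your proposal is correct and follows essentially the same route as the paper. Both use $P_{CGU}(n,k)=\frac{1}{n}[\tr\sqrt{\varrho^{(k)}}]^2$, show entrywise that $\varrho^{(k)}$ does not depend on $n$ (odd-exponent entries vanish, even-exponent entries reduce to their constant Fourier term because the root-of-unity sums vanish when $n>k$), and then set $f(k)=[\tr\sqrt{\varrho^{(k)}}]^2$. Your two sub-steps are in fact sounder than the paper's: the involution $j\mapsto n-j$ replaces the unjustified integral-sandwich argument in part (b) of \Cref{lem:prelim-(iii)}, and reading off the entry as $\frac{1}{\pi}\int_0^\pi\cos^a\theta\,\sin^b\theta\,d\theta$ keeps the zero-frequency cross terms that the paper's explicit formula $\frac{n}{2^{2k}}\binom{H}{H/2}\binom{h}{h/2}$ drops (it would give $\frac14$ instead of the correct $\frac18$ for the $(01,01)$ entry of $\varrho^{(2)}$).
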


\begin{proof}
    We consider the formula for the success probability from \cite{Zhou_2025}:
    \begin{equation}
        P_{CGU}(n,k) = \frac{1}{n} [\tr(\sqrt{\varrho^{(k)}})]^2 \, ,\quad \varrho^{(k)} = \frac{1}{n} \sum_{i=1}^n \varrho_i^{\otimes k}
    \end{equation}
    To proof the claim we follow a 3 step strategy:
    \begin{itemize}
        \item[\RomanNumeralCaps{1}] We identify where $\varrho_i^{\otimes k}$ has components consisting of even powers of $\sin{\frac{\pi i}{n}}$ and show that those form the only-non-zero components of $\varrho$. Those are independent of $n$ and thus, $\varrho$ is independent of $n$.
        \item[\RomanNumeralCaps{2}] If $\varrho^{(k)}$ is independent of $n$, so is $\sqrt{\varrho^{(k)}}$.
        \item[\RomanNumeralCaps{3}] We combine this insight with the expression of the success probability.
    \end{itemize}
    
    We start by introducing convenient notation. Note that the expression $(\ket 0 + \ket 1)^{\otimes k}$ gives every number $0,\ldots,2^{k-1}$ in binary representation, which can be proved via induction. The number of 'not-zeros' - in our case 1s -  in a bit string is known as the Hamming weight of that bit string. We use this concept to denote the number of 1's in a vector containing a bit string $\ket{\vec i}=\ket{i_1,\ldots,i_k}$ as $h(\vec i)$. For example, given $\ket{00101}$ we have $h(00101)=2$.
    Given two bit strings $\vec i$ \& $\vec j$, we denote $H(\vec i, \vec j) = 2k - h(\vec i) - h(\vec j)$. As a reminder, we use abbreviations $c_l \coloneqq \cos \left( \frac{\pi l}{n} \right), \, s_l \coloneqq \sin \left( \frac{\pi l}{n} \right)$.

    \RomanNumeralCaps{1} Let us take a closer look at the form of the average state $\varrho^{(k)}$.
    \begin{align}
        \varrho^{(k)} &= \frac{1}{n} \sum_{l=1}^n \varrho_l^{\otimes k} = \frac{1}{n} \sum_{l=1}^n [(c_l \ket 0 + s_l \ket 1)(c_l\bra{0}+s_l\bra 1)]^{\otimes k} \\
        &= \sum_{l=1}^n \sum_{\substack{\vec i \in \{0,1\}^k \\ \vec j \in \{0,1\}^k}} c_l^{H(\vec i, \vec j)} s_l^{h(\vec i) + h(\vec j)} \ket{\vec i} \bra {\vec j}
    \end{align}
    What happened here is that the powers of the trigonometric functions are determined by the number of 1's in the bit string. Since $s_x$ is the coefficient of $\ket 1$, we have as many $s_x$ as we have 1's in $\vec i$ \& $\vec j$. The remaining bits are zero and determine the power of $c_x$.
    It is clear that the powers of $c_x$ and $s_x$ must add up to $k$ for all $\vec i, \, \vec j$.
    Hence, we note
    \begin{equation} \label{eq:range-of-hammings}
        h(\vec i) \in [k] \cup \{0\} , \quad H(\vec i, \vec j) \in [2k] \cup \{0\} \quad \forall \vec i, \, \vec j \in \{0,1\}^k \, .
    \end{equation}
    
    We want to apply \Cref{lem:prelim-(iii)} because we already know that we do not need to consider the terms containing $s_x$ of an uneven exponent. Hence, $h(\vec i)$ \& $h(\vec j)$ must be such that their sum is uneven. We are going to separate the terms in the sums by whether $h(\vec i)$ or $h(\vec j)$ are even or odd. Since all sums are finite, we may change their order.
    \begin{align}
        \varrho &= \sum_{\substack{\vec i, h(\vec i) \notin 2\naturale \\ \vec j, h(\vec j) \in 2\naturale}} \sum_{l=1}^n c_l^{H(\vec i, \vec j)} s_l^{h(\vec i) + h(\vec j)} \ket{\vec i} \bra {\vec j} + \sum_{\substack{\vec i, h(\vec i) \in 2\naturale \\ \vec j, h(\vec j) \notin 2\naturale}} \sum_{l=1}^n c_l^{H(\vec i, \vec j)} s_l^{h(\vec i) + h(\vec j)} \ket{\vec i} \bra {\vec j} \label{eq:avergae-state-full} \\
        &+ \sum_{\substack{\vec i, h(\vec i) \notin 2\naturale \\ \vec j, h(\vec j) \notin 2\naturale}} \sum_{l=1}^n c_l^{H(\vec i, \vec j)} s_l^{h(\vec i) + h(\vec j)} \ket{\vec i} \bra {\vec j} + \sum_{\substack{\vec i, h(\vec i) \in 2\naturale \\ \vec j, h(\vec j) \in 2\naturale}} \sum_{l=1}^n c_l^{H(\vec i, \vec j)} s_l^{h(\vec i) + h(\vec j)} \ket{\vec i} \bra {\vec j} \notag \\
        &= \sum_{\substack{\vec i, h(\vec i) \notin 2\naturale \\ \vec j, h(\vec j) \notin 2\naturale}} \ket{\vec i}  \bra {\vec j} \underbrace{\sum_{l=1}^n c_l^{H(\vec i, \vec j)} s_l^{h(\vec i) + h(\vec j)}}_{\eqqcolon A}  \sum_{\substack{\vec i, h(\vec i) \in 2\naturale \\ \vec j, h(\vec j) \in 2\naturale}}\ket{\vec i}  \bra {\vec j}  \sum_{l=1}^n c_l^{H(\vec i, \vec j)} s_l^{h(\vec i) + h(\vec j)}
    \end{align}
    The first two terms in \Cref{eq:avergae-state-full} vanish according to \Cref{lem:prelim-(iii)}.
    Taking a closer look at the sum $A$, we fix some $\vec i, \, \vec j$ and just write $H$ for $H(\vec i ,\vec j)$ and $h$ for $h(\vec i) + h(\vec j)$ to increase readability. In order to simplify $A$, we first expand the sum with the help of the power reduction formula.
    \begin{align}
        A &= \sum_{l=1}^n \left(\frac{1}{2^H} {H \choose \frac{H}{2}} + \frac{2}{2^H} \sum_{s=0}^{\frac{H}{2}-1} {H \choose s} \cos{\left((H-2s)\frac{\pi l}{n}\right)}\right) \times \\
        &\times \left(\frac{1}{2^h} {h \choose \frac{h}{2}} + \frac{2}{2^h} \sum_{s=0}^{\frac{h}{2}-1} {h \choose s} (-1)^{\frac{h}{2}-s} \cos{\left((h-2s)\frac{\pi l}{n}\right)} \right) \notag \\
        &= \sum_{l=1}^n 2^{-H} 2^{-h} {H \choose \frac{H}{2}} {h \choose \frac{h}{2}} + \frac{2}{2^H 2^h}{h \choose \frac{h}{2}} \sum_{s=0}^{\frac{H}{2}-1} {H \choose s} \underbrace{\sum_{l=1}^n \cos{\left((H-2s)\frac{\pi l}{n}\right)}}_{\overset{\Cref{lem:prelim-(ii)}}{=}0} \\
        &+\frac{2}{2^H 2^h}{H \choose \frac{H}{2}} \sum_{s=0}^{\frac{h}{2}-1} (-1)^{\frac{h}{2}-s} {h \choose s} \underbrace{\sum_{l=1}^n \cos{\left((h-2s)\frac{\pi l}{n}\right)}}_{\overset{\Cref{lem:prelim-(ii)}}{=}0} \notag \\
        &+ \frac{4}{2^H 2^h} \sum_{s=0}^{\frac{h}{2}-1} {h \choose s} (-1)^{\frac{h}{2}-s} \sum_{s=0}^{\frac{H}{2}-1} {H \choose s'} \underbrace{\sum_{l=1}^n \cos{\left((H-2s')\frac{\pi l}{n}\right)} \cos{\left((h-2s)\frac{\pi l}{n}\right)}}_{\overset{\Cref{lem:prelim-(ii)}}{=}0} \notag \\
        &= \frac{n}{2^{2k}} {H \choose \frac{H}{2}} {h \choose \frac{h}{2}} \quad \text{since } H + h = 2k - h(\vec i) - h(\vec j) +h(\vec i) + h(\vec j)  = 2k
    \end{align}
    In the last term we made use of \Cref{eq:product-of-cosines} again to write
    \begin{gather}
        \cos{\left((H-2s')\frac{\pi l}{n} \right)} \cos{\left((h-2s)\frac{\pi l}{n}\right)}  \\ \propto \cos{\left((H-2s'+h-2s)\frac{\pi l}{n} \right)} + \cos{\left((H-2s'-h+2s)\frac{\pi l}{n} \right)} \, .
    \end{gather}
    To make sure \Cref{lem:prelim-(ii)} applies, we need to check that none of the angles in $\cos(\cdot)$ can become too large: $H-2s \leq H \leq 2k$ and similarly for $h$. Since $n > k$, $\frac{2k}{n}$ cannot be a multiple of $2$ and \Cref{lem:prelim-(ii)} applies.
    Moreover: $H + h = 2k$ and $H -h \leq 2k$.
    This yields the form of $\varrho^{(k)}$.
    \begin{equation} \label{eq:average-state-final-form}
        \varrho^{(k)} = \frac{1}{2^k} \sum_{\substack{\vec i, h(\vec i) \notin 2\naturale \\ \vec j, h(\vec j) \notin 2\naturale}} \ket{\vec i} \bra {\vec j} {H(\vec i, \vec j) \choose\frac{H(\vec i, \vec j)}{2}} {h(\vec i) + h(\vec j) \choose \frac{h(\vec i) + h(\vec j)}{2}}+ \frac{1}{2^k} \sum_{\substack{\vec i, h(\vec i) \in 2\naturale \\ \vec j, h(\vec j) \in 2\naturale}} \ket{\vec i} \bra {\vec j} {H(\vec i, \vec j) \choose\frac{H(\vec i, \vec j)}{2}} {h(\vec i) + h(\vec j) \choose \frac{h(\vec i) + h(\vec j)}{2}}
    \end{equation}
    It turns out that this quantity is not easy to compute for $\vec i, \, \vec j \in \{0,1\}^k$ and some fixed $k < n$. However, we can learn something about the structure of $\varrho$ from this (see also \Cref{rem:structure-of-average-state}).
    Most importantly, we observe that the average state $\varrho^{(k)}$ is completely independent of $n$. With this result, the most tedious part is done.

    \RomanNumeralCaps{2} Taking the square root of $\varrho^{(k)}$ cannot somehow create a dependence of $n$. More formally, since $\varrho^{(k)}$ positive semidefinite, it's singular value decomposition becomes the diagonalization of that matrix and it's eigenvalues are independent of $n$ because $\varrho^{(k)}$ is.
    That is, $\varrho^{(k)} = U \Sigma U^*$ for some unitary $U$ and diagonal $\Sigma$, which contains the eigenvalues of $\varrho$. The square root is thus obtained as $\sqrt{\varrho^{(k)}}=U \sqrt{\Sigma} U^*$. As the square root of the eigenvalues of $\varrho^{(k)}$ must be independent of $n$, so must $\sqrt{\varrho^{(k)}}$ be itself.
    In particular, we can define $\tr(\sqrt{\varrho^{(k)}})^2 \eqqcolon f(k)$.

    \RomanNumeralCaps{3} Now, it simply follows that $P_{CGU}(n,k)= \frac{f(k)}{n} \propto \frac{1}{n}$ with some number $f(k) \in \mathds{R}_+$, which is only dependent on the number of copies $k <n$.
\end{proof}
We note that the exact form of $f(k)$ is not clear and the function remains to be a bit vague. However, we recall that it's rough scaling was shown in the main part \Cref{sec:quantum-part}.
Furthermore, the restriction $k < n$ has been only due for purely mathematical reasons as some of the above calculations rely on that fact. The case $n \leq k$ may also be considered, but needs to be investigated separately.

\begin{rem} \label{rem:structure-of-average-state}
    Based on \Cref{eq:average-state-final-form}, we can make the following observation on the structure of $\varrho^{(k)}$. If we index the rows and columns by numbers in binary representation and use $\vec j$ for the columns, we note that a column $\vec j$ of $\varrho^{(k)}$ is identical to a column represented by $\vec j'$ if $h(\vec j) = h(\vec j')$. For instance, $\vec j$ is a permutation of $\vec j'$: $\vec j = \pi(\vec j')$.
    The same holds true for the rows indexed by strings $\vec i$.
    We illustrate this in \Cref{table:structure-of-average-state}.
    \begin{table}[!h]
        \centering
        \begin{tabular}{|c||c|c|c|c|c|c|}
            \multicolumn{7}{c}{} \\ \hline
            index & 000 & 100 & 010 & 110 & 001 & 101 \\
            \hline
            000 & $\alpha$ & 0 & 0 &$\beta$ &0 &$\beta$ \\
            100 & 0 & $\delta$ & $\delta$ &0 &$\delta$ &0 \\
            010 & 0 & $\delta$ & $\delta$ &0 &$\delta$ &0\\
            110 & $\beta$ & 0 & 0 &$\epsilon$ &0 &$\epsilon$\\
            001 & 0 & $\delta$ & $\delta$ &0 &$\delta$ &0\\
            101 & $\beta$ & 0 & 0 &$\epsilon$ &0 &$\epsilon$ \\ \hline
        \end{tabular}
        \caption{The index of a row and column is represented by a binary number. If the number of 1's in a bit string of the column plus the number of 1's in the bit string of the row is odd, the corresponding matrix component is zero. Otherwise it is strictly positive $\alpha,\beta,\delta,\epsilon >0$. Moreover, all entries, that have equal total number of 1's in the bit strings indexing their position, are equal. Consequently, we get an interesting block-structure. }
        \label{table:structure-of-average-state}
    \end{table}
    For example, we have that
    \bea
     \varrho^{(2)} =  \frac{1}{8}\begin{pmatrix}
        3 & 0 & 0& 1\\
        0 & 1 & 1& 0 \\
        0 & 1 & 1& 0 \\
        1 & 0 & 0& 3\\
    \end{pmatrix}, \quad \varrho^{(3)} = \frac{1}{16}
    \begin{pmatrix}
     5 & 0 & 0 & 1 & 0 & 1 & 1 & 0 \\
     0 & 1 & 1 & 0 & 1 & 0 & 0 & 1 \\
     0 & 1 & 1 & 0 & 1 & 0 & 0 & 1 \\
     1 & 0 & 0 & 1 & 0 & 1 & 1 & 0 \\
     0 & 1 & 1 & 0 & 1 & 0 & 0 & 1 \\
     1 & 0 & 0 & 1 & 0 & 1 & 1 & 0 \\
     1 & 0 & 0 & 1 & 0 & 1 & 1 & 0 \\
     0 & 1 & 1 & 0 & 1 & 0 & 0 & 5 \\
    \end{pmatrix}
    \eea
    We see that there is indeed more about the similarity of the actual components, but we are going to leave it with that.
\end{rem}
\begin{rem}
    As stated in \cite{Jozsa_2000}, the average state $\varrho^{(k)}$ and the Gram matrix of the multi-copy ensemble $G^{(k)}$ share the same nonzero eigenvalues. Since $G^{(k)}$ is positive semidefinite, it is clear that (Perron-Frobenius)
    \begin{equation}
        \min_{j \in [n]}\sum_{i=1}^n G^{(k)}_{ij} \leq \lambda_{max}^{(k)} \leq \max_{j \in [n]}\sum_{i=1}^n G^{(k)}_{ij}
    \end{equation}
    holds, whereas $\lambda_{max}^{(k)}$ is the largest eigenvalue of $G^{(k)}$. For even $k$, it turns out that
    \begin{equation}
        \sum_{i=1}^n G_{ij}^{(k)} = \frac{1}{2^{2k}} {k \choose \frac{k}{2}} \quad \forall j \in [n]  \, .
    \end{equation}
    We thus have $\lambda_{max}^{(k)} = \frac{1}{2^{2k}} {k \choose \frac{k}{2}}$ is the largest eigenvalue of $G^{(k)}$ and $\varrho^{(k)}$, which resembles our lower bound $h(k)$. If $k \notin 2 \naturale$, $\lambda_{max}^{(k)}$ is at least upper bounded by that value.
\end{rem}
\subsection{Lemmas for the proof of \Cref{thm:quantum-advantage-by-GU}}

\begin{lem}\label{lem:super-class-bound}
    The upper bound $g(k)$ from \Cref{prop:classical-n-k-bound} is further upper bounded by a function $l(k) = 0.771 \sqrt{\pi k} + 2$ for all $k$.
\end{lem}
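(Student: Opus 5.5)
The plan is to bound each summand of $g(k)$ by a Stirling-type estimate, and then compare the resulting sum with an integral. For $j\in[k-1]$ write
\[
T_j := {k\choose j}\left(\tfrac{j}{k}\right)^j\left(1-\tfrac{j}{k}\right)^{k-j} = \frac{k!}{j!\,(k-j)!}\,\frac{j^j (k-j)^{k-j}}{k^k}.
\]
$T_j$ is the largest point probability of a binomial distribution $\mathrm{Bin}(k,j/k)$. We therefore expect $T_j \approx \bigl(2\pi k\tfrac{j}{k}(1-\tfrac{j}{k})\bigr)^{-1/2}$.

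To make this rigorous I will use Robbins' two-sided Stirling bounds,
\[
\sqrt{2\pi N}\,(N/e)^N e^{1/(12N+1)} \le N! \le \sqrt{2\pi N}\,(N/e)^N e^{1/(12N)},
\]
with the upper bound applied to $k!$ and the lower bounds applied to $j!$ and $(k-j)!$. The powers $N^N$ and $e^{-N}$ cancel exactly against $j^j(k-j)^{k-j}/k^k$. This gives
\[
T_j \le \sqrt{\frac{k}{2\pi j(k-j)}}\;\exp\!\Big(\tfrac{1}{12k}-\tfrac{1}{12j+1}-\tfrac{1}{12(k-j)+1}\Big).
\]
Since $1\le j\le k-1$, we have $12j+1\le 12k$, so $\tfrac{1}{12j+1}\ge\tfrac{1}{12k}$ and the exponent is negative. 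Hence $T_j \le \sqrt{k/(2\pi)}\,\bigl(j(k-j)\bigr)^{-1/2}$.

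Next I bound $\sum_{j=1}^{k-1} \bigl(j(k-j)\bigr)^{-1/2}$. The function $\phi(x)=(x(k-x))^{-1/2}$ is convex on $(0,k)$. By the Hermite--Hadamard inequality, $\phi(j)\le\int_{j-1/2}^{j+1/2}\phi(x)\,dx$ for each $j$. Summing,
\[
\sum_{j=1}^{k-1}\phi(j)\le\int_{1/2}^{k-1/2}\phi(x)\,dx\le\int_0^k\frac{dx}{\sqrt{x(k-x)}}=\pi,
\]
where the last integral follows from the substitution $x=k\sin^2\theta$.

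Combining the two steps,
\[
g(k)-2=\sum_{j=1}^{k-1}T_j\le \sqrt{\tfrac{k}{2\pi}}\;\pi=\tfrac{1}{\sqrt2}\sqrt{\pi k}\approx0.7071\sqrt{\pi k}<0.771\sqrt{\pi k}.
\]
For $k=1$ the sum is empty and the claim is trivial. This proves $g(k)< l(k)$ for all $k$, with some room to spare in the constant.

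The main obstacle is the bookkeeping in the Stirling step: one must make sure the correction exponent has the right sign, i.e.\ use the upper Robbins bound only for $k!$ and the lower bounds for $j!$ and $(k-j)!$. A secondary point is that the sum-to-integral comparison needs convexity (midpoint form), not monotonicity, since $\phi$ blows up at both endpoints.
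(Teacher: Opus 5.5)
Your proof is correct and follows the same route as the paper. Both use Robbins' bounds, with the upper estimate on $k!$ and the lower estimates on $j!$ and $(k-j)!$, and then compare $\sum_j (j(k-j))^{-1/2}$ with $\int_0^k dx/\sqrt{x(k-x)}=\pi$.

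Two of your details are tighter than the paper's. First, you note that the Robbins correction exponent is negative, so you can drop the paper's factor $\tau(k)=e^{1/(12k)}<1.09$ and get the constant $1/\sqrt{2}$ instead of $1.09/\sqrt{2}$. Second, your midpoint-convexity step $\phi(j)\le\int_{j-1/2}^{j+1/2}\phi$ properly justifies the sum-to-integral comparison. The paper writes this step as $\sum_{s=1}^{k-1}\phi(s)\le\int_1^{k-1}\phi$, which actually fails for convex $\phi$ (for $k=2$ it reads $1\le 0$). The paper's argument survives only because it then bounds by the full integral $\pi$.
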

\begin{proof}
    For convenience we abbreviate the bound from \Cref{prop:classical-n-k-bound} a bit:
    \begin{equation}
        \frac{1}{n} \left[ 2 + \sum_{j=1}^{k-1} {k \choose j} \left(\frac{j}{k}\right)^j \left(1-\frac{j}{k}\right)^{k-j} \right] \eqqcolon \frac{1}{n} \left[ 2+ \sum_{j=1}^{k-1} w(k,s) \right] \eqqcolon \frac{1}{n} (2 + B(k))
    \end{equation}
    Using Robbins inequality
    \begin{equation} \label{eq:robbins-ineq}
        \sqrt{2 \pi n} \left( \frac{n}{\euler} \right)^n \euler^{\frac{1}{12n+1}} < n! < \sqrt{2\pi n} \left( \frac{n}{\euler} \right)^n \euler^{\frac{1}{12n}} \, ,
    \end{equation}
    we can bound the binomial coefficients by plugging the upper bound in the numerator and the lower bound in the denominator.
    \begin{equation}
        {k \choose s} = \frac{k!}{(k-s)! s!} < \frac{1}{\sqrt{2\pi}} \sqrt{\frac{k}{(k-s)s}} \exp \left[ \frac{1}{12k} - \frac{1}{12(k-s)+1} - \frac{1}{12s+1} \right]
    \end{equation}
    and the exponential can be further upper bounded by $\euler^{\frac{1}{12k}} \eqqcolon \tau(k)$. This yields an upper bound for the argument of the sum $w(k,s)$:
    \begin{align}
        w(k,s) &< \frac{1}{\sqrt{2 \pi}} \sqrt{\frac{k}{(k-s)s}} \left(\frac{s}{k} \right)^s k^k s^{-s} (k-s)^{s-k} \left(\frac{k-s}{k} \right)^{k-s} \tau(k) \\
        &= \frac{1}{\sqrt{2 \pi}} \sqrt{\frac{k}{2\pi(k-s)s}} \tau(k) \eqqcolon w'(k,s)
    \end{align}
    \begin{align}
        B(k) &< \sum_{s=1}^{k-1} w'(k,s) \leq \sqrt{\frac{k}{2\pi}} \tau(k) \int_1^{k-1} \mathrm{ds} \frac{1}{\sqrt{(k-s)s}} \\
        & \propto \int \mathrm{du} \frac{2k \sin(u) \cos(u)}{k \sin(u) \cos(u)} = 2 \int \mathrm{du} = 2 \arcsin{\left (\sqrt{\frac{s}{k}} \right)} \biggr\rvert_{1}^{k-1} \\
        & < 2 \frac{\pi}{2} = \pi
    \end{align}
    The integral was solved using the substitution $s = k \sin^2u \Leftrightarrow u=\arcsin{\left(\sqrt{\frac{s}{k}} \right)}$.
    This leads us to the following upper bound:
    \begin{equation}
        \Pb(n,k) < \frac{1}{n} \left(2+\sqrt{\frac{\pi k}{2}} \tau(k) \right) < \frac{1}{n} (2+0.771 \sqrt{\pi k}) \eqqcolon \frac{l(k)}{n} \, ,
    \end{equation}
    where we used that $\tau(k) < \euler^{\frac{1}{12}}<1.09$ and $\frac{1.09}{\sqrt{2}}<0.771$.
\end{proof}

\begin{figure}[!t]
    \centering
    \includegraphics[width=0.9\linewidth]{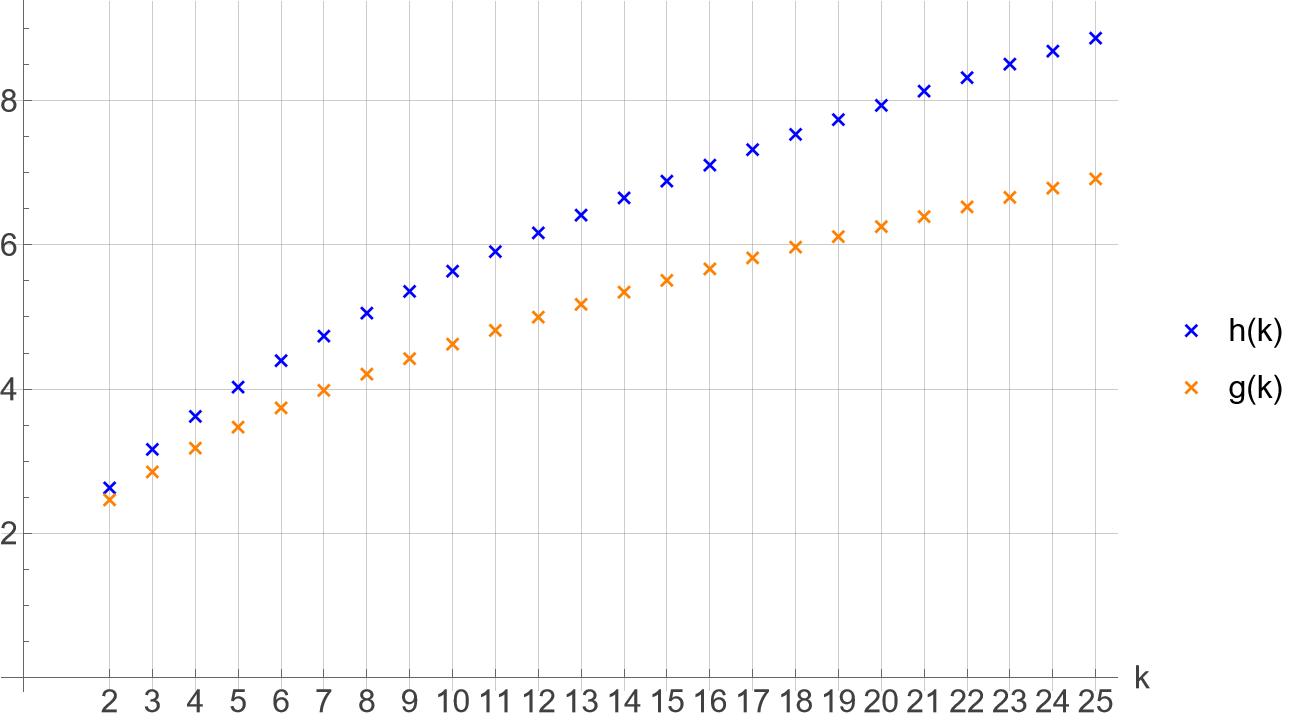}
    \caption{The lower (C)GU-bound $h(k)$ surpasses the classical upper bound $g(k)$ for all $k \in [25]$, whereas the gap increases in alignment with the success probability shown in \Cref{fig:f-vs-g}.}
    \label{fig:h_vs_g}
\end{figure}

\begin{lem}\label{lem:advantage-threshold}
    For all $k \geq 25$ it holds that $h(k) > l(k)$.
\end{lem}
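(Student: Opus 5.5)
We need to show that $h(k)=2^{2k}\binom{2k}{k}^{-1}$ exceeds $l(k)=2+0.771\sqrt{\pi k}$ for all $k\ge 25$. The natural approach is to lower bound $h(k)$ by an explicit multiple of $\sqrt{\pi k}$, using the same Robbins inequality \Cref{eq:robbins-ineq} that appears in the proof of \Cref{lem:super-class-bound}. The extra $\sqrt{\pi k}$ growth then has to absorb the additive constant $2$.

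First we bound the central binomial coefficient from above. Inserting the upper Robbins bound for $(2k)!$ and the lower bound for $(k!)^2$ gives
\begin{equation}
\binom{2k}{k}<\frac{2^{2k}}{\sqrt{\pi k}}\exp\left[\frac{1}{24k}-\frac{2}{12k+1}\right]\le \frac{2^{2k}}{\sqrt{\pi k}} .
\end{equation}
The exponent is negative for all $k\ge1$, since $\frac{2}{12k+1}>\frac{1}{24k}$ is equivalent to $48k>12k+1$. Hence $h(k)>\sqrt{\pi k}$. One could also use the elementary bound $\binom{2k}{k}\le 4^k/\sqrt{\pi k}$ directly, or keep the sharper factor $\exp[\frac{2}{12k+1}-\frac{1}{24k}]$ if we need a little more room.

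Next we compare. It suffices that $\sqrt{\pi k}\ge 2+0.771\sqrt{\pi k}$, i.e. $0.229\sqrt{\pi k}\ge 2$, i.e. $\sqrt{\pi k}\ge 8.734$, i.e. $k\ge 76.28/\pi\approx 24.28$. So for every integer $k\ge25$ we have $\sqrt{\pi k}\ge\sqrt{25\pi}\approx 8.862>8.734$. Since the inequality $h(k)>\sqrt{\pi k}$ is strict, we get $h(k)>l(k)$. Monotonicity of $\sqrt{\pi k}$ makes the threshold argument valid for all larger $k$.

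The main obstacle is only the tightness of the constants near $k=25$, where the margin is small ($8.862$ versus $8.734$). I would double-check the direction of the exponential correction in the Robbins step, since an error there could shift the threshold. If the crude bound $h(k)>\sqrt{\pi k}$ turned out insufficient, I would retain the factor $\exp[\frac{2}{12k+1}-\frac{1}{24k}]>1$, which only helps.
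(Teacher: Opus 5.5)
Your proof is correct and follows the paper's argument essentially verbatim: Robbins' inequality gives $h(k)>\sqrt{\pi k}$, and $\sqrt{\pi k}\ge 2+0.771\sqrt{\pi k}$ holds for all integers $k\ge 25$. Two details of yours are a bit more careful than the paper's write-up: you use the correct factor $e^{1/(24k)}$ for $(2k)!$ where the paper uses the weaker $e^{1/(12k)}$, and you compute the threshold explicitly as $k\ge 24.28$.
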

\begin{proof}
    The key is to show that $h(k)$ is strictly bigger than a linear function in $\sqrt{\pi k}$. This achieved by using Robbins inequality \Cref{eq:robbins-ineq} again, but in reverse order: plug the upper bound into the denominator and the lower bound into the numerator.
    \begin{align}
        h(k) &= 2^{2k} {2k \choose k}^{-k} = 2^{2k} \frac{(k!)^2}{(2k)!} \\
        &>2 ^{2k} \left(\sqrt{2k \pi} \left(\frac{k}{\euler} \right)^k \exp{\left[\frac{1}{12k+1} \right]} \right)^2 \left(\sqrt{4k \pi} \left(\frac{2k}{\euler} \right)^{2k} \exp{\left[\frac{1}{12k} \right]} \right)^{-1} \\
        &= \sqrt{\pi k} \exp{\left[\frac{2}{12k+1} - \frac{1}{12k}\right]} > \sqrt{\pi k}
    \end{align}
    The linear function $l(k)$ has a slope of $0.771 < 1$ but also an offset of 2. To see $h(k) > l(k)$ with above estimations, we must choose $k$ sufficiently large.
    \begin{equation}
        2+0.771\sqrt{\pi k} \leq \sqrt{\pi k} \Leftrightarrow k \geq 25
    \end{equation}
\end{proof}

\section{Postprocessing maps in local-fixed strategies}
\label{appendix:post-process-local-fixed}
We show that all seven postprocessing maps cannot lead to a success probability above $\frac{8}{9}$ via analytic optimization, where this is only one map to saturate the bound.
\begin{prop}
    Let 
\begin{align}
    P_j(C):=\frac{1}{3} \sum_{i=1}^3 \sum_{(a,b) \in f_j^{-1}(i)} C_{i a}C_{i b}
\end{align}
for all $j \in [7]$ for all row-stochastic matrices $C$, where $f_j:[3]^3 \rightarrow [3]$ is a postprocessing map. Then $p_j(C) \leq \frac{8}{9}$ holds. 
\end{prop}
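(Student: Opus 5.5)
The plan is a case-by-case optimization over the seven inequivalent postprocessings $f_1,\dots,f_7$ listed in the proof of \Cref{prop:local-fixed-bound}. The feasible set is the set of row-stochastic $3\times 3$ matrices $C$ with $C_{ii}\geq C_{ij}$ and $\sum_i C_{ii}\leq 2$. The key structural observation is that $P_j(C)$ splits row by row:
\begin{align}
P_j(C)=\frac13\sum_{i=1}^3 \Phi^{(j)}_i\bigl(C_{i1},C_{i2},C_{i3}\bigr),
\end{align}
where each $\Phi^{(j)}_i$ involves only row $i$ and is determined by the preimage $f_j^{-1}(i)$. The only coupling between rows is the constraint on the diagonal. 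So I will first bound each row contribution by a one-variable function $\varphi(x_i)$ of $x_i:=C_{ii}$, eliminating the off-diagonal entries. Then I will maximize $\frac13\sum_i\varphi(x_i)$ subject to $x_i\in[\tfrac13,1]$ and $\sum_i x_i\leq 2$; the lower bound $x_i\geq\tfrac13$ comes from diagonal dominance.

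Only four row types occur among the $f_j$, and I will bound each one separately.
\begin{itemize}
\item[(a)] $f^{-1}(i)=\{ii\}$ gives the contribution $x^2$.
\item[(b)] $f^{-1}(i)=\{ii,ia,ai\}$ gives $x^2+2xC_{ia}$. Using $C_{ia}\leq\min(x,1-x)$, this is at most $3x^2$ for $x\le\tfrac12$ and at most $2x-x^2$ for $x\geq\tfrac12$.
\item[(c)] $f^{-1}(i)=\{ii,ab,ba\}$ with $a,b\neq i$ gives $x^2+2C_{ia}C_{ib}\leq x^2+\tfrac12(1-x)^2$, by AM--GM.
\item[(d)] A row collecting all remaining pairs, such as row $3$ of $f_1$ or of $f_2$--$f_4$, is bounded through the complement: the contribution equals $1$ minus the probability of the excluded outcome pairs, which is at most $1$ and can be sharpened to $1-C_{ia}^2-\dots$ where needed.
\end{itemize}
For $f_1,\dots,f_6$ I will substitute these bounds and maximize the resulting separable function. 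The functions in (a) and (c) are convex in $x$, so their maxima lie at vertices of the polytope $\{x_i\in[\tfrac13,1],\ \sum x_i\le 2\}$, for instance $(1,1,0)$-type corners clipped to $[\tfrac13,1]$, or $(1,\tfrac12,\tfrac12)$. I expect every such case to give at most $\tfrac56$, and this value is attained by the bit strategy of \Cref{cor:class-optimum-n3-k2}.

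For $f_7$ every row is of type (b), so the bound reads $P_7\leq\frac13\sum_i\varphi(x_i)$ with $\varphi(x)=3x^2$ on $[\tfrac13,\tfrac12]$ and $\varphi(x)=2x-x^2$ on $[\tfrac12,1]$. The function $\varphi$ is increasing, so the constraint is active: $\sum x_i=2$. On the concave branch, Jensen's inequality gives the optimum at $x_i=\tfrac23$, hence $\varphi=\tfrac89$ for each row and $P_7\le\tfrac89$. I still have to exclude configurations in which some $x_i$ lies on the convex branch below $\tfrac12$. For this I will use a tangent-line bound: show that $\varphi(x)\leq \tfrac89+\tfrac23(x-\tfrac23)$ on all of $[\tfrac13,1]$, which is a one-variable check on each branch. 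Summing over $i$ and using $\sum x_i\le2$ then gives $\frac13\sum_i\varphi(x_i)\leq\tfrac89$ directly. Equality forces $x_i=\tfrac23$ and $C_{i,i+1}=\tfrac13$, which is the matrix displayed after \Cref{prop:local-fixed-bound}.

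I expect the main obstacle to be the mixed maps $f_2,\dots,f_6$. There rows of different types are combined, and the row bounds (b)--(d) are only piecewise smooth and neither convex nor concave throughout. For these I would first try the same tangent-line trick with slope $\lambda$, that is, show $\Phi_i\le c_i+\lambda x_i$ with $\sum_i c_i+2\lambda\le\tfrac52$. If a single slope fails, I would fall back on a finite enumeration of the KKT candidates, namely vertices of the polytope together with interior stationary points of each branch.
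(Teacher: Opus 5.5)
Your argument for $f_7$ is sound. The set-up for $f_1,\dots,f_6$, however, rests on a restriction you are not entitled to.

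You work with row-stochastic $C$ satisfying $C_{ii}\ge C_{ij}$, i.e.\ \emph{row} dominance, and you use it essentially: it is where $x_i\ge\frac13$ comes from, and where the bound $C_{ia}\le x_i$ in (b) comes from. The local-optimality argument of \Cref{lem:fix-pp}, however, gives $\tr[\varrho_j M_j]\ge\tr[\varrho_i M_j]$, i.e.\ $C_{jj}\ge C_{ij}$, which is \emph{column} dominance. This is also the only reading under which $\mathfrak{IS}(C)=\sum_j\max_i C_{ij}$ equals $\sum_i C_{ii}$; the ``$C_{ii}\ge C_{ij}$'' in the main text is an index slip. The paper's own proof of this proposition uses no dominance at all, only row-stochasticity and $\sum_i C_{ii}\le 2$.

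Column dominance implies neither $x_i\ge\frac13$ nor $C_{ia}\le x_i$. For instance, the classical optimum realized with $f_4$, \(C=\begin{pmatrix}1&0&0\\ \frac12&0&\frac12\\ 0&0&1\end{pmatrix}\), has $C_{22}=0<C_{21}$. It therefore lies outside your feasible set, contrary to your remark that the bit strategy attains $\frac56$ there. So a bound proved on your set does not give the statement.

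Once the restriction is dropped, your crude handling of the type-(d) rows fails. For $f_1$, $f_3$ and $f_4$ the decoupled bound at $x=(1,1,0)$ is already $\frac13(1+1+1)=1>\frac89$. These rows must instead be bounded as functions of $x_3$:
\begin{itemize}
\item for $f_1$, use $1-C_{31}^2-C_{32}^2\le 1-\frac12(1-x_3)^2$;
\item for $f_3$ (resp.\ $f_4$), row $3$ equals $x_3^2+2C_{3a}(1-C_{3a})$ with $a=1$ (resp.\ $a=2$) and $C_{3a}\le 1-x_3$.
\end{itemize}
The maximization must then be done over $x\in[0,1]^3$ with $\sum_i x_i\le 2$.

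Beyond that, for $f_1,\dots,f_6$ your text only states expectations and fallback strategies, so these six cases are not yet proved. Since only $\frac89$ is needed, crude but correct bounds would do.

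Your $f_7$ argument, by contrast, survives the correction. Use only $C_{i,i+1}\le 1-x_i$, so $\varphi(x)=2x-x^2$ on all of $[0,1]$; this is concave and lies below its tangent at $\frac23$. Summing with $\sum_i x_i\le 2$ gives $\frac89$. This is cleaner than the paper's route, which replaces $C_{33}$ by $2-C_{11}-C_{22}$ (legitimate only when $C_{11}+C_{22}\ge 1$) and then sets $C_{22}=C_{11}$ by symmetry.
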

\begin{proof}
    As argued in \Cref{sec:fixed-measurement-strategy}, there are only 7 postprocessing maps we have to consider. We recall all 7 in terms of their preimages and separately perform the optimization $ \max_{C: \ \mathfrak{IS}(C) \leq 2 } P_j(C)$ for all $j \in [7]$ analytically.
    We have the following constraints the optimization tasks. Since $C$ is a row-stochastic matrix, we must have $(i) \, \sum_{i=1}^3 C_{ij} = 1 \, \forall j \in [3]$ and the information storability induces $(ii) \, \sum_{i=1}^3 C_{ii} \leq 2$.

    (1)  $f_1^{-1}(1)=\{ 11\}, \quad  f_2^{-1}(2)=\{ 22\}, \quad  f_2^{-1}(3)=\{ 33,12,21,23,32,31,13\}$
    \begin{align}
        P_1(C) &\propto C_{11}^2 + C_{22}^2 + C_{33}^2 + 2(C_{31}C_{32} + C_{31}C_{33} + C_{32}C_{33}) \\
        &=C_{11}^2 + C_{22}^2 + (C_{33}+C_{31}+C_{32})^2-C_{31}^2-C_{32}^2 \\
        &=1+C_{11}^2+C_{22}^2-C_{31}^2-C_{32}^2
    \end{align}
    Note that above function is monotone in $C_{11}$ and $C_{22}$ without any additional constraints. The maximum is therefore achieved when we invoke boundary conditions $C_{11}+C_{22}\leq 2-C_{33}$ and as there is symmetry between $C_{11}$ and $C_{22}$ set
    \begin{equation}
        C_{11}=C_{22}=\frac{2-C_{33}}{2}=\frac{1+C_{31}+C_{32}}{2} \, .
    \end{equation}
    Plugging in we get:
    \begin{equation}
        P_1(C) \leq \frac{3}{2} -\frac{1}{2}\left(C_{31}^2+C_{32}^2 \right) + C_{31}+C_{32}+C_{31}C_{32} \eqqcolon u
    \end{equation}
    Due to the symmetry of $C_{32}$ and $C_{31}$ w.l.o.g.\ we check the monotony.
    \begin{equation}
        \frac{\partial u}{\partial C_{31}} = -C_{31}+1+C_{32} \geq 0
    \end{equation}
    It follows that the function is monotone in both variables and using row stochasticy $C_{32}=1-C_{31}$ we write
    \begin{align}
        u&=\frac{3}{2}-\frac{1}{2}(2C_{31}^2-2C_{31}+1)+C_{31}+1-C_{31}+C_{31}-C_{31}^2 \\
        &=2+2C_{31}(1-C_{31}) \leq 2+\frac{1}{2} =\frac{5}{2} \, ,
    \end{align}
    where the last term reaches its maximum value if $C_{31}=\frac{1}{2}$. It follows that $P_1(C) \leq \frac{5}{6}$.
    For equality we must have $C_{32}=1-C_{31}=\frac{1}{2}$ as well as $C_{33}=0$ and consequently $C_{11}=C_{22}=1$.
    \begin{equation}
        C=\begin{pmatrix}
            1&0&0\\
            0&1&0\\
            \frac{1}{2}&\frac{1}{2}&0
        \end{pmatrix}
        \Rightarrow P_1(C) = \frac{1}{3}\left( 2 + 2\cdot \frac{1}{4}\right)=\frac{5}{6}
    \end{equation}
    
    (2) $f_2^{-1}(1)=\{ 11\}, \quad  f_2^{-1}(2)=\{ 22, 12,21\}, \quad  f_2^{-1}(3)=\{ 33,23,32,31,13\}$
    \begin{align}
         P_2(C) &\propto C_{11}^2 + C_{22}^2 + C_{33}^2 + 2(C_{21}C_{22} + C_{31}C_{33} + C_{32}C_{33}) \\
         &=C_{11}^2+C_{22}(C_{22}+2C_{21})+C_{33}(1+C_{31}+C_{32}) \\
         &\leq C_{11}^2+C_{22}(2-C_{22})+C_{33}(2-C_{33})\\
    \end{align}
    Above function is monotone and symmetric in the last two terms and so we set w.l.o.g.\ $C_{33}=C_{22}$. Then, the second term is for sure monotone in $C_{22}$ and we invoke $C_{11}+2C_{22}\leq 2 \Leftrightarrow C_{22}\leq \frac{2-C_{11}}{2}$. This yields:
    \begin{align}
        P_2(C) &\leq \frac{1}{3}(C_{11}^2+2C_{22}(2-C_{22})) \leq \frac{1}{3}\left(C_{11}^2+(2-C_{11})\left(1+\frac{C_{11}}{2}\right)\right) \\
        &= \frac{1}{3}\left(\frac{1}{2}C_{11}^2+2\right)\leq \frac{5}{6}
    \end{align}
    The following communication matrix saturates this bound:
    \begin{equation}
        C=\begin{pmatrix}
            1&0&0\\
            \frac{1}{2}&\frac{1}{2}&0\\
            \frac{1}{4}&\frac{1}{4}&\frac{1}{2}
        \end{pmatrix}
        \Rightarrow P_2(C)=\frac{1}{3}\left(1+2\cdot \frac{1}{4}+2\left(\frac{1}{4}+\frac{2}{8} \right) \right)=\frac{5}{6}
    \end{equation}

    (3) $f_3^{-1}(1)=\{ 11\}, \quad  f_3^{-1}(2)=\{ 22, 23,32\}, \quad  f_3^{-1}(3)=\{ 33,12,21,31,13\}$
    \begin{align}
         P_3(C) &\propto C_{11}^2 + C_{22}^2 + C_{33}^2 + 2(C_{31}C_{32} + C_{31}C_{33} + C_{22}C_{23}) \\
         &= C_{11}^2 + C_{22}(C_{22}+2C_{23})+C_{33}(C_{33}+2C_{31})+2C_{31}(1-C_{31}-C_{33}) \\
         &\leq C_{11}^2+C_{22}(2-C_{22})+C_{33}^2+\underbrace{2C_{31}(1-C_{31})}_v
    \end{align}
    We see that the first two terms are monotone in $C_{11}$ and $C_{22}$ respectively and we can set them both to 1 to achieve the maximum. This however constraints $C_{33}$, that is, this enforces $C_{33}=0$. But the rest of the function is not monotone in $C_{33}$ as $C_{33}$ is further constrained by $v$. 
    Noting that $v \leq \frac{1}{2}$, $P_3(C) \leq \frac{5}{6}$ follows. A matrix that achieves equality for $f_4$ reads
    \begin{equation}
        C=\begin{pmatrix}
            1&0&0 \\
            0&1&0 \\
            \frac{1}{2}&\frac{1}{2}&0
        \end{pmatrix}
        \Rightarrow P_3(C) =\frac{1}{3} \left(2+\frac{1}{2}\right)=\frac{5}{6} \, .
    \end{equation}

    (4) $f_4^{-1}(1)=\{ 11\}, \quad  f_4^{-1}(2)=\{ 22, 31,13\}, \quad f_4^{-1}(3)=\{33,12,21, 23, 32\}$
    \begin{align}
        P_4(C) &\propto C_{11}^2 + C_{22}^2 + C_{33}^2 + 2(C_{31}C_{32} + C_{21}C_{23} + C_{32}C_{33}) \\
        &=C_{11}^2+C_{22}^2+2C_{21}C_{23}+C_{33}(C_{33}+2C_{32})+2C_{31}C_{32} \\
        &=C_{11}^2+C_{22}^2+2C_{21}(1-C_{22}-C_{21})+C_{33}^2 + 2C_{32}(1-C_{32})\\
        &\leq 1+C_{22}^2+2C_{21}(1-C_{22}-C_{21})+C_{33}^2 + 2C_{32}(1-C_{32})
    \end{align}
    Upon realizing that the function is monotone in $C_{11}$ and thus setting $C_{11}=1$, we need to check all the boundary cases one by one.
    \begin{align}
        C_{22}=0 \Rightarrow &u=1+2C_{21}(1-C_{21})+C_{33}^2+2C_{23}(1-C_{32})\\
        &\leq 1+2C_{21}(1-C_{21})+1-C_{32}^2\leq 2+2C_{21}(1-C_{21}) \\
        &\leq \frac{5}{2} \Rightarrow P_4(C) \leq \frac{5}{6} \\
        C_{22}=1 \Rightarrow &C_{21}=C_{33}=0 \Rightarrow u=2+2C_{32}(1-C_{32}) \leq \frac{5}{2}\Rightarrow P_4(C)\leq \frac{5}{6}\\
        C_{21}=0 \Rightarrow &u=1+C_{22}^2+C_{33}^2+2C_{32}(1-C_{32})\leq 2+\frac{1}{2}=\frac{5}{2} \Rightarrow P_4(C)=\frac{5}{6}\\
        C_{21}=1 \Rightarrow &C_{22}=0 \Rightarrow u = 1+C_{33}^2+2C_{32}(1-C_{32}) \leq 2 \Rightarrow P_4(C) \leq \frac{2}{3}\\
        C_{33}=1 \Rightarrow &C_{32}=C_{22}=0 \Rightarrow u \leq 2+\frac{1}{2} \Rightarrow P_4(C) \leq \frac{5}{6}\\
        C_{33}=0 \Rightarrow &1+\underbrace{C_{22}^2+2C_{21}(1-C_{22}-C_{21})}_{\eqqcolon \tilde u}+2C_{32}(1-C_{32})
    \end{align}
    Now we need to find the maximum value of $\tilde u$.
    \begin{align}
        \frac{\partial \tilde u}{\partial C_{21}} = 0 &\Leftrightarrow C_{21}=\frac{1-C_{22}}{2}\Rightarrow \tilde u = C_{22}^2 +(1-C_{22})\frac{1-C_{22}}{2} \leq 1 \\
        &\Rightarrow u \leq 1+1+\frac{1}{2} = \frac{5}{2}\Rightarrow P_4(C) \leq \frac{5}{6}\\
        \frac{\partial \tilde u}{\partial C_{22}} = 0 &\Leftrightarrow C_{22}=C_{21} \Rightarrow u = 1+C_{22}^2+2C_{22}(1-2C_{22})+C_{33}^2+2C_{32}(1-C_{32}) \\
        &\leq 1+\frac{1}{3}+1 = \frac{7}{3} \Rightarrow P_6(C) \leq \frac{7}{9}
    \end{align}
    It remains to look at the local extrema of $u$. Note that $ \frac{\partial \tilde u}{\partial C_{21}} =  \frac{u}{\partial C_{21}}$ and $\frac{\partial \tilde u}{\partial C_{22}}=\frac{\partial u}{\partial C_{22}}$. So there are no better bounds to be found here. Furthermore examining $ \frac{\partial u}{\partial C_{33}}=0$ gives back the already considered boundary condition $C_{33}=0$.
    We are thus left with the following:
    \begin{align}
        &\frac{\partial u}{\partial C_{32}}=0 \Leftrightarrow C_{32}=\frac{1}{2} \Rightarrow C_{33} \leq \frac{1}{2} \Rightarrow C_{22} + C_{33} \leq C_{22}+\frac{1}{2}\leq 1 \Leftrightarrow C_{22} \leq \frac{1}{2} \\
        &\Rightarrow u \leq 1+ \frac{1}{4}+\frac{1}{2}+C_{22}^2+2C_{21}(1-C_{22}-C_{23})\leq \frac{7}{4}+\frac{1}{2}=\frac{9}{4} \Rightarrow P_4(C) \leq \frac{9}{12}
    \end{align}
    To show tightness of the best upper bound $\frac{5}{6}$ we write down the following communication matrix.
    \begin{equation}
        C=\begin{pmatrix}
            1&0&0\\
            \frac{1}{2}&0&\frac{1}{2}\\
            0&0&1
        \end{pmatrix}
        \Rightarrow P_4(C) = \frac{1}{3} \left(2+\frac{2}{4} \right)= \frac{5}{6}
    \end{equation}
    
    (5) $f_5^{-1}(1)=\{ 11, 23, 32\}, \quad  f_5^{-1}(2)=\{   22, 31, 13\}, \quad  f_5^{-1}(3)=\{ 33, 12, 21\}$
    \begin{align}
         P_5(C) &\propto C_{11}^2 + C_{22}^2 + C_{33}^2 + 2(C_{12}C_{13} + C_{21}C_{23} + C_{32}C_{31}) \\
         &=C_{11}^2+2C_{21}(1-C_{12}-C_{11})+C_{22}^2+2C_{21}(1-C_{22}-C_{21})+C_{33}^2\\
         &+2C_{31}(1-C_{31}-C_{33}) \notag\\
         &=C_{11}(C_{11}-2C_{12})+C_{12}(1-C_{12})+C_{22}(C_{22}-2C_{21})+2C_{21}(1-C_{21})\\
         &+C_{33}(C_{33}-2C_{31})+2C_{31}(1-C_{31}) \notag\\
         &\leq (1-C_{21})(1-3C_{12})+2C_{12}(1-C_{12})+(1-C_{21})(1-3C_{21}) \\
         &+2C_{21}(1-C_{21})(1-3C_{21})+2C_{31}(1-C_{31}) \notag \\
         &=(1-C_{12})^2 + (1-C_{21})^2+(1-C_{31})^2
    \end{align}
    We notice the symmetry in $C_{12}, \, C_{21}$ and $C_{31}$, so w.l.o.g.\ we set $C_{12}=0$ for the first boundary case. Then we have:
    \begin{equation}
        P_5(C) \leq \frac{1}{3}(1+(1-C_{21})^2+(1-C_{31})^2) \eqqcolon  u
    \end{equation}
    By setting $C_{21}=C_{31}=0$ we could achieve $u=1$. However, this can not yield the optimum since then
    \begin{equation}
        P_5(C) \leq \frac{1}{3}(C_{11}^2 + C_{22}^2 + C_{33}^2) \leq \frac{1}{3}(C_{11} + C_{22} + C_{33}) \leq \frac{2}{3}\, .
    \end{equation}
    How to choose $C_{31}$ optimally given $C_{21}=C_{12}=0$ ? With above boundaries we have
    \begin{equation}
        P_5(C)\leq \frac{1}{3}(2+2C_{31}C_{32}) \leq \frac{1}{3}\left(2+\frac{1}{2}\right) = \frac{5}{6} \, .
    \end{equation}
    It is clear that w.l.o.g.\ setting $C_{12}=1$ cannot yield the optimum value in above quantity and would lead to $P_5(C) \leq \frac{2}{3}$. Finally we check the local extrema w.l.o.g.\ with respect to $C_{12}$.
    \begin{align}
        \frac{\partial u }{\partial C_{12}} = -2(1-C_{12})=0 \Leftrightarrow C_{12} = 1 \Rightarrow C_{21}=0 \text{ and } P_5(C) \leq \frac{2}{3}
    \end{align}
    The communication matrix to achieve the maximum is:
    \begin{equation}
        C=\begin{pmatrix}
            1&0&0\\
            0&1&0\\
            \frac{1}{2}&\frac{1}{2}&0
        \end{pmatrix}
        \Rightarrow P_7(C) = \frac{1}{3}\left(2+2\cdot\frac{1}{4}\right) = \frac{5}{6}
    \end{equation}

    (6) $f_6^{-1}(1)=\{ 11, 12, 21\}, \quad  f_6^{-1}(2)=\{   22, 31, 13\}, \quad  f_6^{-1}(3)=\{ 33, 23, 32\}$
    \begin{align}
         P_6(C) &\propto C_{11}^2 + C_{22}^2 + C_{33}^2 + 2(C_{11}C_{12} + C_{21}C_{23} + C_{32}C_{33}) \\
         &= C_{11}(C_{11}+2C_{12}) + C_{33}(C_{33}+2C_{32}) + C_{22}^2 + 2C_{21}C_{23} \\
         &\leq C_{11}(2-C_{11})+C_{33}(2-C_{33})+C_{22}^2+2C_{21}(1-C_{21}-C_{22})
       \end{align}
    The first two terms are symmetric and monotone in $C_{11}$ and $C_{33}$ respectively. We thus set $C_{11}=C_{33}=\frac{1-C_{22}}{2}$.
    \begin{align}
        P_6(C) &\leq \frac{1}{3}\left((2-C_{22})\left(1+\frac{C_{22}}{2} \right) +C_{22}(C_{22}-2C_{23})+2C_{23}(1-C_{23})\right) \\
        &\propto2+\underbrace{C_{22}\left(\frac{C_{22}}{2}-2C_{23}\right)+2C_{23}(1-C_{23})}_{\eqqcolon u}
    \end{align}
    Let us look at the maximum values of $u$:
    \begin{align}
        &\frac{\partial u}{\partial C_{22}}=0 \Leftrightarrow C_{22}=C_{23} \Rightarrow u \leq -3C_{23}^2+2C_{23} \leq \frac{1}{3} \Rightarrow P_6(C) \leq \frac{7}{9} \\
        &\frac{\partial u}{\partial C_{23}}=0 \Leftrightarrow C_{23}=\frac{1-C_{22}}{2} \Rightarrow u \leq C_{22}^2-C_{22} + \frac{1}{2} \leq \frac{1}{2} \Rightarrow P_6(C) \leq \frac{5}{6}
    \end{align}
    To achieve the maximum value in the later case the communication matrix reads:
    \begin{equation}
        C=\begin{pmatrix}
            1&0&0\\
            \frac{1}{2}&0&\frac{1}{2}\\
            0&0&1
        \end{pmatrix}
        \Rightarrow P_6(C) = \frac{1}{3}\left(2+2\cdot\frac{1}{4}\right) = \frac{5}{6}
    \end{equation}
    
    (7) $f_7^{-1}(1)=\{ 11, 12, 21\}, \quad  f_7^{-1}(2)=\{22, 23, 32\}, \quad  f_7^{-1}(3)=\{ 33, 31, 13\}$
    \begin{align}
        P_7(C) &\propto C_{11}^2 + C_{22}^2 + C_{33}^2 + 2(C_{11}C_{12} + C_{22}C_{23} + C_{33}C_{31}) \\
        &= C_{11}(C_{11}+2C_{12})+C_{22}(C_{22}+2C_{23})+C_{33}(C_{33}+2C_{31})\\
        &\leq C_{11}(2-C_{11})+C_{22}(2-C_{22})+C_{33}(2-C_{33}) \\
        &\leq C_{11}(2-C_{11})+C_{22}(2-C_{22}) + (2-C_{11}-C_{22})(C_{11}+C_{22}) \\
        &=2(C_{11}(2-C_{11})+C_{22}(2-C_{22})-C_{22}C_{11}) \coloneqq u
    \end{align}
    Notice that $u$ is a function symmetric in $C_{11}$ and $C_{22}$. Since there is no bias between $C_{11}$ and $C_{22}$ w.l.o.g.\ we set $C_{22} = C_{11}$ and write
    \begin{equation}
        u \propto 2C_{11}(2-C_{11})-C_{11}^2 = -3C_{11}^2+4C_{11} \, ,
    \end{equation}
    which is maximized if $C_{11}=\frac{2}{3}$ such that $P_7(C) \leq \left(\frac{2}{3}\right)^2=\frac{8}{9}$. Consequently, $C_{22}=C_{33}=\frac{2}{3}$ and to maximize the mixed terms we invoke $C_{12}=1-C_{11}=\frac{1}{3}$ and similarly $C_{23}=C_{31}=\frac{1}{3}$. In conclusion we then have:
    \begin{equation}
        C= \frac{1}{3} \begin{pmatrix}
            2&1&0\\ 0&2&1 \\ 1&0&2
        \end{pmatrix}
        \Rightarrow P_7(C) = \frac{1}{3}\left(3\left(\frac{2}{3}\right)^2+2\cdot\frac{1}{3}\cdot\frac{2}{3}\cdot3 \right) = \frac{1}{3}\cdot\frac{2\cdot4}{3} =\frac{8}{9}
    \end{equation}
\end{proof}

\section{$SEP$ and $GLOBAL$ optimization programs in polygons}
\label{appendix:polygon-theories}

For a regular polygon $\polyspace$ with $m$ pure states (vertices) $\{s_i\}_{i=1}^m$, the following is an optimization program for multi-copy states discrimination in the case of $k=2$ copies of $n$ states for both separable and global strategies. 

In polygon state space $\polyspace$, for $m\geq n$, let us fix some $n$ pure states $S=\{s_{i_x}\}_{x=1}^n \subset \{s_i\}_{i=1}^m$ for some $i_x \in [m]$ for all $x \in [n]$. The effects $M_x$ of any $n$-outcome measurement $M$ on the compound system $\polyspace \otimes \polyspace$ can expressed as 
\begin{equation}\label{eq:poly-meas}
    M_x = \sum_{i,j=1}^m \alpha^{(x)}_{ij} g_i \otimes g_j,
\end{equation}
where $g_i =e_i$ if $m$ is even and $g_i =f_i$ if $m$ odd for some $\alpha^{(x)}_{ij} \in \real$ such that $\sum_{x=1}^n M_x = \one_{\polyspace} \otimes \one_{\polyspace} = \one_{\polyspace \otimes \polyspace}$. For separable measurements the positivity condition is simply $\alpha^{(x)}_{ij} \geq 0$ while for most general global measurements we require that $M_x(s_i \otimes s_j) \geq 0$\footnote{This condition corresponds to the choice of \emph{minimal tensor product} for the joint state space $\polyspace \otimes_{\min} \polyspace$ which contains only convex combinations of product states. By duality then the effect space corresponds to taking all effects that are positive on all product states resulting in the least restrictive effect space.} for all  $x \in [n]$ and $ i,j \in [m]$. Thus, for the pure states $S=\{s_{i_x}\}_{x=1}^n$ when we optimize over all $SEP$ and $GLOBAL$ measurement strategies we have the following programs:\\
\begin{minipage}{.49\textwidth}
    \begin{align}
        SEP: \\
        \quad \max_{M} \quad & \frac{1}{n} \sum_{x=1}^n M_x(s_{i_x}) \\
        \mathrm{s.t.} \quad & M_x = \sum_{i,j=1}^m \alpha^{(x)}_{ij} g_i \otimes g_j, \\
        & 0 \leq \alpha^{(x)}_{ij} \leq 1  \\ &\forall x \in [n], \forall i,j \in [m],\\
        & \sum_{x=1}^n M_x = \one_{\polyspace \otimes \polyspace} 
    \end{align}
\end{minipage}
\hfill
\begin{minipage}{.49\textwidth}
    \begin{align}
        GLOBAL:\\
        \quad \max_{M} \quad & \frac{1}{n} \sum_{x=1}^n M_x(s_{i_x}) \\
        \mathrm{s.t.} \quad & M_x = \sum_{i,j=1}^m \alpha^{(x)}_{ij} g_i \otimes g_j, \\
        & M_x(s_i \otimes s_j) \geq 0  \\ &\forall x \in [n], \forall i,j \in [m],\\
        & \sum_{x=1}^n M_x = \one_{\polyspace \otimes \polyspace} 
    \end{align}
\end{minipage}\\

Thus, the optimization over all $GLOBAL$ and $SEP$ measurements for all pure states is the following: By running the above programs for all the different $n$-element subsets of all the $m$ pure states we can find out the optimal separable and global measurement strategies when using pure states. In the main text we considered 2-copy state discrimination of ensembles of 3 pure states. Our findings can bee seen in \Cref{fig:global-strategies-in-polygons}.

\end{document}